\newtheorem{prop}[theorem]{Proposition}
\newtheorem{cor}[theorem]{Corollary}
\newtheorem{lem}[theorem]{Lemma}
\newcommand{\RR}{\mathbb{R}}
\newcommand{\C}{\mathbb{C}}
\newcommand{\e}{\mathrm{e}}
\renewcommand{\l}{\ell}
\newcommand{\p}{\partial}
\newcommand{\Hv}{\vec{\mathrm{H}}}
\newcommand{\cf}{\bar{f}}
\newcommand{\cp}{\bar{p}}
\newcommand{\al}{\alpha}
\newcommand{\be}{\beta}
\newcommand{\ga}{\gamma}
\newcommand{\Del}{\Delta}
\newcommand{\Si}{\Sigma}
\newcommand{\si}{\sigma}
\newcommand{\om}{\omega}
\newcommand{\Ga}{\Gamma}
\newcommand{\Up}{\Upsilon}
\newcommand{\cov}{\nabla}
\newcommand{\pa}[2]{\dfrac{\partial #1}{\partial #2}}
\newcommand{\ppa}[2]{\dfrac{\partial^{2} #1}{\partial #2^{2}}}
\newcommand{\glap}[1]{\Box_{#1}}
\newcommand{\divx}[1]{\nabla_{#1} \cdot}
\newcommand{\Ric}{\operatorname{Ric}}
\newcommand{\II}{\operatorname{\text{I\hspace{-1 pt}I}}}
\newcommand{\pnth}[1]{\left( #1 \right)}
\newcommand{\abs}[1]{\left| #1 \right|}
\newcommand{\R}{\mathbf{R}}
\newcommand{\vep}{\varepsilon}
\newcommand{\brkt}[1]{\left[ #1 \right]}
\newcommand{\LHopital}{L'H\^{o}pital}
\newcommand{\Diag}{\operatorname{diag}}
\begin{document}

\title{A Survey of Spherically Symmetric Spacetimes
}


\author{Alan R.\ Parry 
}

\authorrunning{A.\ R.\ Parry} 

\institute{Alan R.\ Parry \at
              Department of Mathematics, University of Connecticut, 196 Auditorium Road, Unit 3009, Storrs, CT 06269-3009, USA \\
              Tel.: (860) 486-2362 \\
              Fax: (860) 486-4238 \\
              \email{alan.parry@uconn.edu}           
}


\maketitle

\begin{abstract}
We survey many of the important properties of spherically symmetric spacetimes as follows.  We present several different ways of describing a spherically symmetric spacetime and the resulting metrics.  We then focus our discussion on an especially useful form of the metric of a spherically symmetric spacetime in polar-areal coordinates and its properties.  In particular, we show how the metric component functions chosen are extremely compatible with notions in Newtonian mechanics.  We also show the monotonicity of the Hawking mass in these coordinates.  As an example, we discuss how these coordinates and the metric can be used to solve the spherically symmetric Einstein-Klein-Gordon equations.  We conclude with a brief mention of some applications of these properties.
\keywords{spacetime metrics, spherical symmetry, einstein equation, einstein-klein-gordon equations, wave dark matter}
\subclass{83C20}
\end{abstract}

\section{Introduction}

Spherically symmetric spacetimes are an important case in the study of general relativity for a number of reasons.  Foremost among them is that it is often a good starting point in the study of a problem in general relativity.  For example, one of the first projects undergone in general relativity was to compute nontrivial spherically symmetric spacetimes that are exact solutions of the Einstein equation.  This resulted in the discovery of the Schwarzschild spacetime, which is by far the most important spherically symmetric solution to date, and later Birkhoff's theorem about Ricci flat or vacuum spherically symmetric spacetimes as well as some generalizations of Birkhoff's theorem \cites{Wald84, Jebsen21, Birkhoff23, Bronnikov95}.  Spherically symmetric spacetimes also create a situation where the dynamics of the system are less complicated by effectively reducing a 4-dimensional solution to a 2-dimensional one.  This accessibility makes using spherically symmetric spacetimes all the more attractive as a starting point.  Finally, while Birkhoff's theorem classifies all vacuum spherically symmetric spacetimes, there are still some nonvacuum spherically symmetric spacetimes that are interesting as well, both from a physical and mathematical standpoint.  These include describing spherically symmetric stars and perfect fluids via the Tolman-Oppenheimer-Volkoff solutions and other methods \cites{Opp39,Tolman39,Heinzle03}, dwarf spheroidal galaxies, which are dominated by their dark matter halos and closely approximated by spherical symmetry \cite{Mash06}, and spherically symmetric scalar fields, particularly in the context of dark matter \cites{MSBS,Seidel90,Seidel98,Lai07,Gleiser89,Hawley00}.

As spherically symmetric spacetimes are still of interest and there is a great deal of information about them scattered throughout the literature, it would be useful to have a brief survey of many of the important results about spherically symmetric spacetimes.  The purpose of this paper is to collect a great deal of useful information about spherically symmetric spacetimes and present it in a brief organized way.  This includes, first, a discussion of the many possible forms of a generic metric of a spherically symmetric spacetime along with some advantages and disadvantages of each choice.  Most of these metrics can be described well within the established framework of numerical relativity and as such, we will use that framework to discuss them.  Second, to facilitate some of the discussion of spherical symmetry as well as to point out the particular usefulness of a certain choice of coordinates, we will discuss in detail what we refer to as the Newtonian-Compatible metric collecting some well known results about spherically symmetric spacetimes in terms of this metric.  Following the discussion of this metric, we will apply some of these results to a particular example, namely, a spherically symmetric scalar field.  For brevity and comprehensibility, we will collect most of the proofs of these results in an appendix.  Even though none of the results presented in this article are particularly new or unknown and most, if not all, have been previously published, it is the hope of the author that many will find this brief and dense collection of these results useful.


Before we get started, we should explain a notational convention about subscripts that we use.  In particular, we do not follow the commonly used abstract index notation unless explicitly stated.  Instead, we will generally use the following rules.

For functions, like $M(t,r)$, unless the context makes it clear, subscripts will denote partial differentiation with respect to the coordinate the subscript specifies, so that
\begin{subequations}
  \begin{equation}
    M_{r} = \pa{}{r} M(t,r) \qquad \text{and} \qquad M_{rt} = \pa{}{t}\pnth{\pa{}{r}M(t,r)}.
  \end{equation}

Vector fields will typically be uniquely named and as such any subscript they have is simply part of that name and usually denotes that that vector is in the coordinate direction of the subscript.  For example, $\p_{t}$ is the coordinate vector field corresponding to the $t$-coordinate and later we will define $\nu_{t}$ as the unit vector field in the $t$-coordinate direction.  We will also use a similar practice with one-forms where appropriate.  Given this practice, we will explicitly define the vector fields and one forms we use.

For tensors, it is often more convenient to have the subindices denote the slots of the tensor.  As such, we will never give a tensor (other than a function, vector field, or one-form) a subscripted name.  When subscripts appear on a tensor or on the Christoffel symbols $\Ga$, they will be a letter corresponding to a coordinate and the subscript will denote plugging in that particular coordinate vector field into that slot.  Similarly, a superscript on a tensor will denote plugging in that particular coordinate one-form into that slot.  Thus raising and lowering indicies is done in the familiar way of appropriately contracting the tensor with the metric or its inverse.  We will also adopt the abstract index notation convention of using a comma between subscripts to denote partial differentiation of a tensor component and a semi-colon between subscripts to denote covariant differentiation of a tensor in the direction indicated by the subscript.  Thus, for example,
  \begin{align}
    g_{tr} &= g(\p_{t},\p_{r}), \\
    \R_{tr\theta}^{\ \ \ \, \varphi} &= \R(\p_{t},\p_{r},\p_{\theta},d\varphi), \\
    T_{tr,\theta} &= \pa{}{\theta}\pnth{T(\p_{t},\p_{r})}, \\
    T_{tr;\theta} &= (\cov_{\theta}T)(\p_{t},\p_{r}) = (\cov T)(\p_{t},\p_{r},\p_{\theta}).
  \end{align}
\end{subequations}

\section{Metrics for a Spherically Symmetric Spacetime}

The study of numerical relativity is devoted to devising ways of evolving the Einstein equation in order to solve for the components of the spacetime metric in different situations of interest as well as actually conducting such numerical experiments and comparing them to real data.  This evolution usually takes place in a spacetime which is foliated by $t=constant$ spacelike hypersurfaces.  The common method in numerical relativity is to decouple the time component from the space components into what is commonly called the (3+1)-formalism of general relativity \cites{Chop98,Gour12,Alcu08}.  Several formulations of this formalism are usually used in practice, such as the Baumgarte-Shapiro-Shibata-Nakamura (BSSN) formulation \cites{Shibata95,Baumgarte99}, the Z4 formulation \cite{Bona03}, or the fully constrained formulation (FCF) \cite{Bonazzola04} among others, but the formulation described below is sufficient for our present survey of spherically symmetric spacetimes.

The framework for this formalism is, as stated before, a spacetime, $N$, foliated by $t=constant$ spacelike hypersurfaces described by a Riemannian 3-metric $\ga$ which may change with time.  Note that such a foliation is possible for any globally hyperbolic spacetime \cites{Gour12,Wald84}.  Consider a coordinate chart on $U \subseteq N$, $\{t,x^{1},x^{2},x^{3}\}$, where $\p_{t}$ is timelike and the $\p_{x^{j}}$ are all spacelike.  Now consider an observer starting on the $t=t_{0}$ hypersurface at the coordinate $(t_{0},\vec{x}^{j}_{0})$.  This observer then travels to another infinitesimally close hypersurface $t = t_{0}+dt$ to the coordinate $(t_{0} + dt, \vec{x}^{j}_{0} + d\vec{x}^{j})$ as in Figure \ref{3plus1_fig}.

\begin{figure*}[h]

\begin{center}
  \input{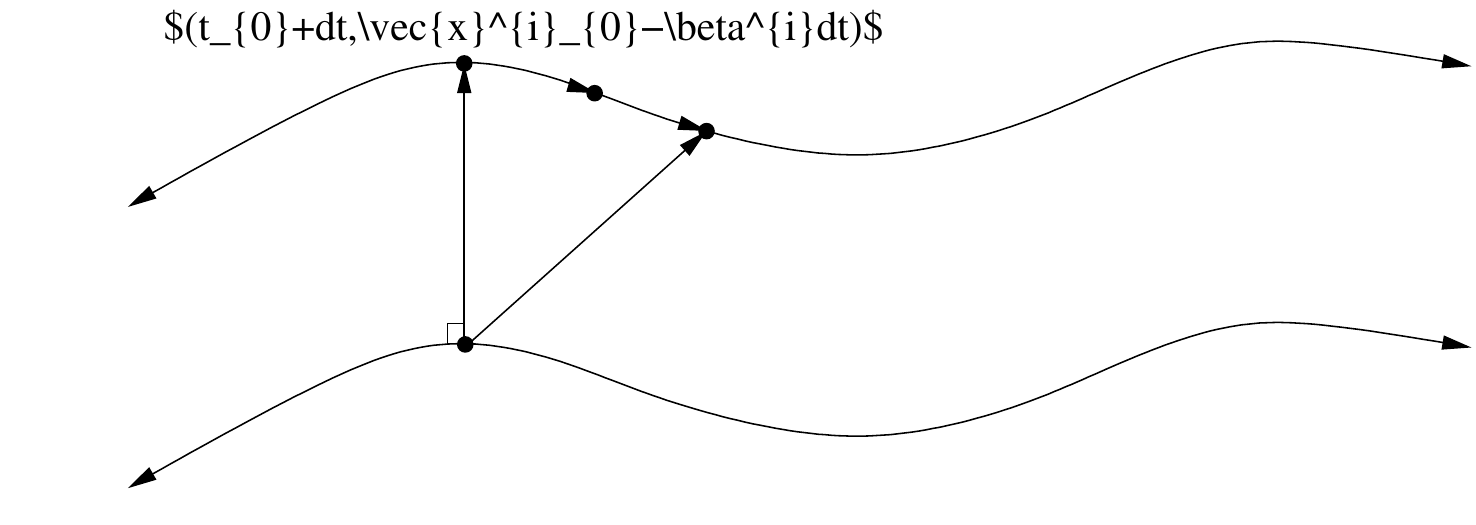tex_t}
\end{center}

\caption{Infinitesimal distance in a $t=constant$ hypersurface foliated spacetime.}

\label{3plus1_fig}

\end{figure*}

The observer has now traveled an infinitesimal distance $ds$.  We can measure the ``square'' of this infinitesimal distance, $ds^{2}$, using the analogue of Pythagorean's theorem and this will give us the line element form of the metric.  If another observer travels normal to the hypersurface from $(t_{0},\vec{x}^{j}_{0})$ to the hypersurface $t=t_{0}+dt$, since the normal direction is not necessarily the same direction as the $t$ coordinate direction, it will arrive at the coordinate $(t_{0} + dt, \vec{x}^{j}_{0} - \beta^{j}dt)$.  We call the 3-vector field, $\beta$, the shift vector because it measures the spacelike shift of the coordinates while traveling normally.  Note that the components of $\beta$ can vary with all the coordinates.  This normal observer, having traveled in a timelike direction, has experienced some proper time $d\tau$, which is some multiple of the change in time coordinate, that is,
\begin{equation}
  d\tau = \al dt.
\end{equation}
Hence the length of its normal movement from one surface to the other is $\al dt$.  The value of $\al$ can vary with all of the coordinates making it a function on the manifold.  This function is called the lapse function since it measures the lapse in proper time compared to coordinate time.  To get the length in the spatial direction between where the normal observer ended up and where the original observer did, we need only to use the metric on the hypersurfaces and the difference of the two space coordinates.  This difference, for each $i$, takes the form
\begin{equation}
  (x^{j}_{0} + dx^{j}) - (x^{j}_{0} - \beta^{j}dt) = dx^{j} + \beta^{j}dt
\end{equation}
and so the length squared of the spatial movement will be
\begin{equation}
  \ga_{jk}(dx^{j} + \beta^{j}dt)(dx^{k} + \beta^{k}dt)
\end{equation}
where we have implemented the Einstein summation convention.  Then using the generalized version of Pythagorean's theorem and recalling that $t$ is a timelike direction we get that
\begin{equation}\label{3plus1_metric}
  ds^{2} = -\al^{2}\, dt^{2} + \ga_{jk}(dx^{j} + \beta^{j} dt)(dx^{k} + \beta^{k}dt),
\end{equation}
which is the line element of the metric.  Note that the lapse function $\al$ and the shift vector $\be$, both of which together are usually referred to as the guage variables, completely determine the foliation of the spacetime by the $t=constant$ spacelike hypersurfaces.  Equation (\ref{3plus1_metric}) is then the most general form of the metric for any foliated spacetime, that is, all metrics of a foliated spacetime can be written in this form.  Note that we will often write $g$ instead of $ds^{2}$ to refer interchangeably to the metric and the line element.

From this metric, and a choice of slicing condition, a complete system of partial differential equations that evolve the Einstein equation can be constructed.  This system is often referred to as the ADM formulation of general relativity in reference to the authors of the paper in which it was first introduced \cite{ADM08}.  It involves evolution equations of both the metric and the extrinsic curvature of the $t=constant$ hypersurfaces \cites{ADM08,Chop98,Bona02}.  These equations are very commonly used in numerical relativity, but we find they overcomplicate the situation in spherical symmetry, which is why we have elected not to use this formulation of general relativity directly.

If we know more about the spacetime in question, we will be able to determine more of the components of the metric.  In a spherically symmetric spacetime, with coordinates, $r,\theta,\varphi$, chosen so that $\theta$ and $\varphi$ are the polar-angular coordinates on the hypersurface, the shift vector must be completely radial so that the metric remains invariant under rotations.  In this case, we will denote the radial component of the shift vector by simply $\beta$.  Moreover, the $3$-metric can be written as
\begin{equation}
  \ga = \ga_{rr}\, dr^2 + \ga_{\theta\theta}\, d\si^{2}.
\end{equation}
where $d\si^{2} = d\theta^{2} + \sin^{2}\theta\, d\varphi^{2}$ is the standard metric on the unit sphere.  This implies that the most general spherically symmetric metric can be written in the form
\begin{align}\label{3plus1_ssmetric_a}
  \notag g &= -\al^{2}\, dt^{2} + \ga_{rr}(dr + \beta\, dt)(dr + \beta\, dt) + \ga_{\theta\theta}\, d\si^{2} \\
  &= -\pnth{\al^{2} - \ga_{rr}\beta^{2}}\, dt^{2} + \ga_{rr}\beta(dr\, dt + dt\, dr) + \ga_{rr}\, dr^{2} + \ga_{\theta\theta}\, d\si^{2}.
\end{align}
Note that all the metric component functions can only depend on $t$ and $r$ due to spherical symmetry.  For convenience, we will define two positive functions $a(t,r)$ and $q(t,r)$ so that
\begin{equation}
  a(t,r)^{2} = \ga_{rr} \qquad \text{and} \qquad q(t,r)^{2} = \ga_{\theta\theta}.
\end{equation}
This assignment is possibile because $\p_{r}$ and $\p_{\theta}$ are both spacelike and hence $\ga_{rr}$ and $\ga_{\theta\theta}$ are both positive functions.  Then we can rewrite (\ref{3plus1_ssmetric_a}) as
\begin{equation}\label{3plus1_ssmetric_b}
  g = -\pnth{\al^{2} - a^{2}\beta^2}\, dt^{2} + a^2\beta(dr\, dt + dt\, dr) + a^2\, dr^{2} + q^2\, d\si^{2},
\end{equation}
where $\al, a, \beta$, and $q$ are functions of only $t$ and $r$.  It is important to note here that since we have not defined the coordinates $t$ and $r$ geometrically yet, there remain two degrees of freedom left in this metric. There are several different choices that can be made in this regard.  We mention the most common here, but there is a very useful and more extensive list of several choices that can be made in a more general setting in Gourgoulhon's recent book \cite{Gour12}.

There are three rather common slicing conditions that are often used in many settings, all of which place a condition on the lapse function $\al$.  These conditions are the maximal slicing, harmonic slicing, and geodesic slicing conditions.

Under the maximal slicing condition, one requires that each $t=constant$ hypersurface be a maximal hypersurface, that is, it has zero mean curvature,
\begin{equation}
  H=0.
\end{equation}
The metric stays of the form in equation (\ref{3plus1_ssmetric_a}), but since the evolution equations in the ADM formulation evolve the components of the second fundamental form and $H$ is the trace of the second fundamental form, this places a constraint on some of the evolution variables, which can be used to simplify the evolution equations \cites{Cord11,Gour12,Baum10}.  Note here that after making this choice, there remains one more degree of freedom which can be used to constrain the $r$ coordinate.

Harmonic slicing requires that the coordinate function $t$ be a harmonic function under the metric $g$.  That is, $\glap{g}t = 0$, where $\Box_{g}$ is the d'Alembertian or Laplacian operator with respect to the metric $g$.  This is often accompanied with the condition that the hypersurfaces remain orthogonal to the time direction, which uses the remaining degree of freedom, and indeed some refer to both of these choices together as harmonic slicing.  In the case that both conditions are satisfied, this would yield a metric of the form
\begin{equation}
  g = -\al^{2}\, dt^{2} + a^{2}\, dr^{2} + q^{2}\, d\si^{2}
\end{equation}
with the added condition $\glap{g}t = 0$, which can be used to compute an evolution equation for the lapse function $\al$ \cite{Bona92}.

Geodesic slicing requires that movement along the curve $\xi = (t,0,0,0)$, which is given in our coordinates, be geodesic, that is, coordinate observer worldlines are geodesics.  This requirement is satisfied by choosing
\begin{equation}
  \al = constant \qquad \text{and} \qquad \be=0.
\end{equation}
However, the most reasonable choice for the constant is 1, since the condition $\al=1$ on the lapse function has the added implication that a normal observer's proper time is the same as coordinate time and in fact, since $\be =0$, normal observers are coordinate observers \cites{Gour12, Baum10, Alcu00}.  This results in a metric of the form
\begin{equation}
  g = -dt^{2} + a^2\, dr^{2} + q^2\, d\si^{2}.
\end{equation}
While this choice seems very attractive at first, since it either eliminates or greatly simplifies the evolution equations, it does have a tendency to develop coordinate singularities when evolved in time and also has considerable trouble dealing with real singularities arising, for example, from the collapse of a star to a black hole \cite{Baum10}.  As such, this choice of slicing must be used with caution.  Note also that this requirement is sometimes reduced to simply $\al = 1$ without necessarily requiring the shift parameter or vector to vanish.

We can alternatively use the degrees of freedom to make choices concerning the $r$-coordinate.  We will mention a few such choices here that are standard in the study of spherically symmetric spacetimes.

The first is the choice of normal slicing.  That is, choose the $t$ coordinate so that the $\p_{t}$ vector field is always normal to the hypersurfaces.  This choice makes all normal observers coordinate observers as well.  This is equivalent to choosing the shift parameter or vector to be identically 0 and results in a metric of the form
\begin{equation}\label{normal-slice}
  g = -\al^{2}\, dt^{2} + a^{2}\, dr^{2} + q^{2}\, d\si^{2}.
\end{equation}
This choice has already been mentioned above as it is often coupled with harmonic or geodesic slicing conditions \cite{Gour12}.  Since the coordinate vector fields on the hypersurfaces were already orthogonal, this results, as seen above, in a diagonal metric.

The next choice is to require that the metric on the hypersurfaces to be conformal to the flat metric.  This amounts to choosing the function $q$ in (\ref{3plus1_ssmetric_b}) to satisfy $q = ra$, which would make the metric become
\begin{equation}
  g = -\pnth{\al^{2} - a^{2}\beta^2}\, dt^{2} + a^2\beta(dr\, dt + dt\, dr) + a^2 \pnth{dr^{2} + r^2\, d\si^{2}}.
\end{equation}
This choice is referred to as isotropic coordinates.  It is often coupled with the normal slicing choice above, which uses both of the degrees of freedom and results in a metric of the form
\begin{equation}\label{isotrop}
  g = -\al^{2}\, dt^{2} + a^2 \pnth{dr^{2} + r^2\, d\si^{2}}.
\end{equation}
A common example of the normal-isotropic case is the Schwarzschild metric in isotropic coordinates \cite{Wald84}.

A form of the metric closely related to isotropic coordinates is the confomally flat condition.  Here again, we require that the metric on the hypersurfaces to be conformal to the flat metric on $\RR^{3}$, but this time we require the function $q$ in (\ref{3plus1_ssmetric_b}) to satisfy $q = r\psi^{2}$ to obtain the metric
\begin{equation}\label{CFC-cond}
  g = -\pnth{\al^{2} - \psi^{4}\beta^2}\, dt^{2} + \psi^{4}\beta(dr\, dt + dt\, dr) + \psi^{4} \pnth{dr^{2} + r^2\, d\si^{2}}.
\end{equation}
This choice is mathematically identical to the isotropic coordinate choice above; the only difference is that it makes the fact that the spatial portion of this metric is conformal to the flat metric on $\RR^{3}$ match the usual definition of conformally flat.  The conformally flat condition is also used in other instances besides spherical symmetry.  However, since this choice can always be made in spherical symmetry, we see that in spherical symmetry, the conformally flat condition is exact \cite{Cord11}.  Just like isotropic coordinates, the conformally flat condition is often coupled with the normal slicing choice, using up the other degree of freedom by setting $\be = 0$ and yielding the metric
\begin{equation}
  g = -\al^{2}\, dt^{2} + \psi^{4}\pnth{dr^{2} + r^{2}\, d\si^{2}}.
\end{equation}
It is also sometimes coupled with the maximal slicing condition above that the mean curvature vanishes.  In this case, the metric remains in the form of equation (\ref{CFC-cond}), but the equation $H=0$ simplifies the evolution equations just as before.

We should note that the conformally flat condition only imposes that the $3$-metric on the $t=constant$ hypersurfaces be conformally flat.  There is also a condition, called conformally flat spacetime coordinates, which require that the entire spacetime be conformally flat, that is, conformal to the Minkowski metric.  This condition in spherical symmetry has been recently discussed in detail by Gr{\o}n and Johannesen \cite{Gron13}.

Another choice on the $r$-coordinate is to choose it to be the ``quasiglobal'' coordinate.  Here we choose the $r$-coordinate so that 
\begin{equation}
  g_{tt}g_{rr} = -1.
\end{equation}
This is often coupled with the normal slicing condition above that $\be = 0$, which uses the remaining degree of freedom.  In this case, this condition becomes $\al a = -1$ and the metric takes the form
\begin{equation}
  g = -\al^{2}\, dt^{2} + \frac{1}{\al^{2}}\, dr^{2} + q^{2}\, d\si^{2}.
\end{equation}
This coordinate choice is useful when dealing with black holes and both sides of horizons.  It is discussed in much greater detail in the book by Bronnikov and Rubin \cite{Bronnikov13}.

Similar to choosing the $t$-coordinate to be a harmonic function of the metric, we can also require the radial coordinate to be a harmonic function of the metric.  This is called the harmonic radial coordinate.  In this case, we still have a metric of the form in equation (\ref{3plus1_ssmetric_b}), but we also require that $\glap{g} r = 0$.  As in many other cases, this is often coupled with the normal slicing condition that $\be = 0$, using the remaining degree of freedom and yielding a metric of the form in equation (\ref{normal-slice}) with the added condition that $\glap{g} r = 0$.  This choice is well suited to dealing with scalar fields, although we will use a different choice when discussing a scalar field later in the survey.  This choice is also described in detail in the book by Bronnikov and Rubin \cite{Bronnikov13}.

The next $r$-coordinate choice we present is called the ``tortoise'' coordinate.  This choice requires $r$ to be such that
\begin{equation}
  g_{tt} = -g_{rr}.
\end{equation}
This choice uses only one degree of freedom 
and can be realized by solving $g_{tt}=-g_{rr}$ for the metric in equation (\ref{3plus1_ssmetric_b}).  This amounts to rewriting $a^{2}$ as follows
\begin{equation}
  a^{2} = \frac{\al^{2}}{1 + \be^{2}},
\end{equation}
which yields a metric of the form
\begin{equation}
  g = - \pnth{\frac{\al^{2}}{1+\be^{2}}}\, dt^{2} + \pnth{\frac{\be \al^{2}}{1+\be^{2}}}(dr \, dt + dt \, dr) + \pnth{\frac{\al^{2}}{1+\be^{2}}}\, dr^{2} + q^{2}\, d\si^{2}.
\end{equation}
Note that both of the gauge variables $\al$ and $\be$ still appear in the metric and retain their meaning.  
If this choice is coupled with normal slicing (i.e. $\be=0$), using the remaining degree of freedom, we obtain a metric of the form
\begin{equation}
  g = -\al^{2} \, dt^{2} + \al^{2}\, dr^{2} + q^{2}\, d\si^{2}.
\end{equation}
This choice is closely related to null coordinates and the Eddington-Finkelstein coordinates described below.  Many of the advantages of this choice of coordinate are discussed in detail in the book by Bronnikov and Rubin, but they appear in other texts as well \cites{Bronnikov13,Baum10}.

The last choice we mention in the (3+1) framework is to give the coordinate $r$ geometric significance by choosing it to be the areal coordinate.  That is, choose the coordinate $r$ so that the area of each metric $2$-sphere on the hypersurface is exactly $4\pi r^{2}$.  This choice requires that $q = r$.  Additionally, it is almost always accompanied with the normal slicing choice above, again using both degrees of freedom.  This results in the polar-areal coordinates on a spherically symmetric spacetime and yields a metric of the form
\begin{equation}
  g = -\al^{2}\, dt^{2} + a^{2}\, dr^{2} + r^{2}\, d\si^{2}.
\end{equation}
This chart is probably the most familiar form of a general spherically symmetric metric, and is the chart most commonly used when introducing the Schwarzschild spacetime.  And rightly so, as it has some very clear advantages.  For one, the $r$-coordinate's role is analogous to its role in a flat spacetime.  Additionally, these coordinates give the Einstein curvature tensor a very simple form.  However, they may not be well suited to dealing with high gravitational fields as we would likely run into the same limitations that the polar areal metric has in describing the Schwarzschild spacetime inside the Schwarzschild radius.  This is not much of a problem, however, if one is interested in working in the low field limit such as describing objects on a galactic scale.

We present one more useful coordinate system and metric for a general spherically symmetric spacetime that does not fit into the (3+1)-formalism framework, nor does it depend on the ability to foliate the spacetime, but is useful from a theoretical standpoint nonetheless.  In these coordinates, we still choose to use the polar-angular coordinates $\theta$ and $\varphi$ to describe the rotations, but instead of separating the time and radial coordinates, we choose coordinates $u$ and $v$ such that $\p_{u}$ and $\p_{v}$ are nonparallel future pointing null vectors.  This coordinate system is descriptively called null coordinates.  In this case, the metric on the spacetime takes the form
\begin{equation}
  g = A\, (du \, dv + dv\, du) + Q^{2}\, d\si^{2}
\end{equation}
where $A$ and $Q$ are functions of only $u$ and $v$.  Note that there are no extra degrees of freedom with this metric as all coordinates have been well-defined.  While these coordinates are not very well suited to numerical evolutions, they can be very useful for theoretical discussions about the spherically symmetric spacetime. For example, in these coordinates, it is very straightforward to prove the monotonicity of the Hawking mass given the dominant energy condition.  Additionally, it seems to perform well when in the presence of high gravitational fields.  In the Schwarzschild case, these coordinates are known as the Kruskal coordinate system and is the system generally used to describe the region inside the Schwarzschild radius, although other equally useful coordinate systems exist to describe this region such as the Eddington-Finkelstein coordinates and the Painlev\'{e}-Gullstrand-Lema\^{i}tre coordinates \cites{Wald84,Martel01}.

We make one final note here about static spacetimes.  If the spherically symmetric spacetime is also known to be static, meaning that there exists a timelike killing vector field with orthogonal spacelike hypersurfaces, then we can automatically eliminate the cross term in the general metric (\ref{3plus1_ssmetric_b}) by selecting the time coordinate to be in the direction of the timelike killing vector field and choosing the remaining coordinates to be the general spherical coordinates on the orthogonal spacelike hypersurfaces.  This effectively sets $\be = 0$.  A survey article on static spherically symmetric spacetimes can be found in \cite{Deser05}.

All of these different coordinate choices and different forms of the metric on a spherically symmetric spacetime have advantages to them.  However, for the problem of numerically evolving a spacetime metric in a low gravitational field, we find that the polar-areal coordinates are the best suited due to the very simple system one gets from the Einstein equation.  As such, polar-areal coordinates is the coordinate system that we will use throughout this paper, but, in addition, we will introduce new variables that will give the metric a different form.  This new form of the metric will result in the added advantages that the new metric functions have very clear analogues in the Newtonian or low-field limit and the Einstein curvature tensor will become even more simplified.

\section{A Newtonian-Compatible Metric}

Let $N$ be a spherically symmetric, (3+1)-dimensional spacetime with a Lorentzian metric $g$ and be foliated by $t=constant$ spacelike hypersurfaces.  Choose the polar-areal coordinate system discussed above so that the $t$-coordinate direction is always normal to the hypersurfaces, the radial coordinate $r$ is the areal coordinate (that is, so that the metric sphere of metric radius $r$ always has a surface area of $4\pi r^2$), and $\theta$ and $\varphi$ are the usual polar-angular coordinates on the hypersurface.  In these coordinates, the metric $g$ has the line element form
\begin{equation} \label{metric-1}
  g = -\al(t,r)^{2}\, dt^{2} + a(t,r)^{2}\, dr^{2} + r^{2}\, d\si^{2}
\end{equation}
Now we define the functions $V(t,r)$ and $M(t,r)$ as follows,
\begin{align}\label{MV-eq}
  V &= \ln \al & M &= \frac{r}{2}\pnth{\frac{a^{2}-1}{a^{2}}}
\end{align}
and rewrite the metric (\ref{metric-1}) in terms of these functions to obtain,
\begin{equation} \label{metric-2}
  g = - \e^{2V}\, dt^{2} + \pnth{1 - \frac{2M}{r}}^{-1}\, dr^{2} + r^{2}\, d\si^{2}
\end{equation}
We will always be interested in the low-field limit, where $M \ll r$, which necessarily requires that $M(t,0)=0$ for all $t$.  Note that if $a$ is nonzero at $r=0$, then $M(t,0)=0$ follows directly from equation (\ref{MV-eq}).  However, if $a=0$ at $r=0$, then $M(t,0)$ could possibly be nonzero for some value of $t$.  Assuming $M \ll r$ is a sufficient condition to force $a$ to be nonzero at $r=0$.

\subsection{Compatibility with Newtonian Physics}

Here we compute several important properties about this metric and the physical interpretations of both $V$ and $M$.  To physically interpret this metric, we will introduce the Einstein equation, but first, we present a property that doesn't require the Einstein equation, the proof of which can be found in Appendix \ref{Proofs}.

\begin{prop} \label{mH-Prop}
  The function $M(t,r)$ is the spacetime Hawking mass of the metric sphere, $\Si_{t,r}$, for any given $t$ and $r$.
\end{prop}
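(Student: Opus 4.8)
\emph{Proof idea.} The plan is to evaluate the spacetime Hawking mass of $\Si_{t,r}$ directly from its definition. Recall that for a spacelike $2$-surface $\Si$ in a spacetime one sets
\[
  m_{H}(\Si) = \sqrt{\frac{\abs{\Si}}{16\pi}}\pnth{1 - \frac{1}{16\pi}\int_{\Si}\abs{\Hv}^{2}\, dA},
\]
where $\Hv$ is the mean curvature vector of $\Si$ viewed as a codimension-two submanifold of $N$ and $\abs{\Hv}^{2} = g(\Hv,\Hv)$ (equivalently, $\abs{\Hv}^{2} = -\theta_{+}\theta_{-}$ for the future null normals normalized so that their inner product equals $-2$; I would set the conventions up so the two expressions agree). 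Since $r$ is the areal coordinate, the metric induced on $\Si_{t,r}$ by \eqref{metric-2} is $r^{2}\, d\si^{2}$, so $\abs{\Si_{t,r}} = 4\pi r^{2}$ and the prefactor is simply $r/2$. Hence the whole proposition reduces to the claim $\tfrac{1}{16\pi}\int_{\Si_{t,r}}\abs{\Hv}^{2}\, dA = a^{-2}$.

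First I would record the normal frame of $\Si_{t,r}$: because \eqref{metric-2} has no cross terms, $\p_{t}$ and $\p_{r}$ are orthogonal to $\p_{\theta},\p_{\varphi}$ and to each other, so they span the normal bundle, with $g(\p_{t},\p_{t}) = -\e^{2V}$ and $g(\p_{r},\p_{r}) = a^{2}$. Next I would compute the second fundamental form $\II$ of $\Si_{t,r}$ in the directions $\p_{\theta}$ and $\p_{\varphi}$ by projecting $\cov_{\p_{\theta}}\p_{\theta}$ and $\cov_{\p_{\varphi}}\p_{\varphi}$ onto this frame. The decisive simplification is that $g(\cov_{\p_{\theta}}\p_{\theta},\p_{t}) = -g(\p_{\theta},\cov_{\p_{t}}\p_{\theta}) = -\tfrac{1}{2}\,\p_{t}(r^{2}) = 0$ since $r$ is independent of $t$, so $\Hv$ has no timelike component, while $g(\cov_{\p_{\theta}}\p_{\theta},\p_{r}) = -\tfrac{1}{2}\,\p_{r}(r^{2}) = -r$ and similarly $g(\cov_{\p_{\varphi}}\p_{\varphi},\p_{r}) = -\tfrac{1}{2}\,\p_{r}(r^{2}\sin^{2}\theta) = -r\sin^{2}\theta$. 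Tracing with $g^{\theta\theta} = r^{-2}$ and $g^{\varphi\varphi} = (r^{2}\sin^{2}\theta)^{-1}$ gives $\Hv = -\tfrac{2}{a^{2}r}\,\p_{r}$, hence $\abs{\Hv}^{2} = \tfrac{4}{a^{4}r^{2}}\,g(\p_{r},\p_{r}) = \tfrac{4}{a^{2}r^{2}}$, which is constant on $\Si_{t,r}$.

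Finally, integrating this constant over $\Si_{t,r}$ (area $4\pi r^{2}$) and substituting into the definition,
\[
  m_{H}(\Si_{t,r}) = \frac{r}{2}\pnth{1 - \frac{1}{16\pi}\cdot\frac{4}{a^{2}r^{2}}\cdot 4\pi r^{2}} = \frac{r}{2}\pnth{1 - \frac{1}{a^{2}}} = \frac{r}{2}\pnth{\frac{a^{2}-1}{a^{2}}} = M(t,r)
\]
by \eqref{MV-eq}. The calculation is otherwise routine; the step most likely to require care is pinning down the sign and normalization conventions in the definition of $\Hv$ (equivalently $\theta_{\pm}$), which one should verify against the flat-space case $a\equiv 1$ where it must return $m_{H}=0$, together with keeping in mind that $\II$ is a normal-bundle-valued, codimension-two object, so it is traced against the induced sphere metric rather than the ambient metric $g$.
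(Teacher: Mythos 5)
Your proposal is correct and follows essentially the same route as the paper's proof: compute $\abs{\Si_{t,r}} = 4\pi r^{2}$ from the areal coordinate, show the mean curvature vector is purely radial with $\Hv = -\tfrac{2}{a^{2}r}\,\p_{r} = -\tfrac{2}{r}\pnth{1-\tfrac{2M}{r}}\p_{r}$, and substitute into the definition of the Hawking mass to recover $M$ via (\ref{MV-eq}). The only difference is cosmetic: you extract the normal components of $\cov_{\p_{\theta}}\p_{\theta}$ and $\cov_{\p_{\varphi}}\p_{\varphi}$ directly from metric compatibility, whereas the paper computes the relevant Christoffel symbols $\Ga_{\theta\theta}^{\ \ \ t}, \Ga_{\theta\theta}^{\ \ \ r}, \Ga_{\varphi\varphi}^{\ \ \ t}, \Ga_{\varphi\varphi}^{\ \ \ r}$ explicitly, with identical results.
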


This suggests that we should interpret $M$ as the mass of the system.  However, there is an even stronger reason to do so, which we will investigate later.  In order to prepare for that discussion, we need some preliminary results first about the relationship between the metric and its stress-energy tensor.  To begin, consider this metric as a solution to the Einstein equation
\begin{equation} \label{EinEq-1}
  G = 8\pi T,
\end{equation}
for some stress-energy tensor $T$.  To facilitate our discussion, we will compute the Einstein curvature tensor of this metric in certain directions.  To that end, define the following unit vector fields.
\begin{align} \label{norvec}
  \nu_{t} &=  \e^{-V}\, \p_{t} & \nu_{r} &= \sqrt{1 - \frac{2M}{r}} \, \p_{r} & \nu_{\theta} &= \frac{1}{r}\, \p_{\theta} & \nu_{\varphi} &= \frac{1}{r\sin \theta} \, \p_{\varphi}
\end{align}
Note that at every point, $p \in N$, except the coordinate singularities $r=0$ and $\theta=\pm \pi$ (that is, all points where all the vector fields above are well defined), these vector fields form an orthonormal basis of $T_{p}N$ and hence are a frame field.  The Einstein curvature tensor is defined as
\begin{equation}\label{ECT}
  G = \Ric - \frac{1}{2}Rg
\end{equation}
where $\Ric$ and $R$ are the Ricci curvature tensor and the scalar curvature of the spacetime respectively.  Since both the Ricci curvature tensor and its trace $R$ are present in this equation, we will need to know a few results about the Ricci curvature in these coordinates.  We have the following lemma and subsequent corollary.  While the proof of the corollary is short and is presented here, the proof of the lemma can be found in Appendix \ref{Proofs}.

\begin{lem} \label{RicLem-2}
  The only nonzero components of the Ricci curvature tensor in the $\nu_{\eta}$ basis and the scalar curvature are as follows.
    \begin{enumerate}
      \item\label{RicLem-2a} $\displaystyle \Ric(\nu_{t},\nu_{t}) = \pnth{V_{rr} + V_{r}^{2} + \frac{2V_{r}}{r}}\pnth{1-\frac{2M}{r}} + \frac{V_{r}}{r}\pnth{\frac{M}{r} - M_{r}} \\ \phantom{.} \hspace{7 eM} + \frac{V_{t}M_{t} - M_{tt}}{r \e^{2V}}\pnth{1-\frac{2M}{r}}^{-1} - \frac{3M_{t}^{2}}{r^{2} \e^{2V}}\pnth{1-\frac{2M}{r}}^{-2}$
      \item\label{RicLem-2b} $\displaystyle \Ric(\nu_{t},\nu_{r}) = \frac{2M_{t}}{r^{2} \e^{V}}\pnth{1 - \frac{2M}{r}}^{-1/2}$
      \item\label{RicLem-2c} $\displaystyle \Ric(\nu_{r},\nu_{r}) = -(V_{rr} + V_{r}^2)\pnth{1 - \frac{2M}{r}} - \pnth{\frac{2}{r^{2}} + \frac{V_{r}}{r}}\pnth{\frac{M}{r} - M_{r}} \\ \phantom{.} \hspace{7 eM} - \frac{V_{t}M_{t} - M_{tt}}{r \e^{2V}}\pnth{1-\frac{2M}{r}}^{-1} + \frac{3M_{t}^{2}}{r^2 \e^{2V}}\pnth{1-\frac{2M}{r}}^{-2}$
      \item\label{RicLem-2d} $\displaystyle \Ric(\nu_{\theta},\nu_{\theta}) = \Ric(\nu_{\varphi},\nu_{\varphi}) = \frac{1}{r^{2}}\pnth{\frac{M}{r} + M_{r}} - \frac{V_{r}}{r}\pnth{1 - \frac{2M}{r}}$
      \item\label{RicLem-2e} $\displaystyle R = -2\pnth{V_{rr} + V_{r}^{2} + \frac{2V_{r}}{r}}\pnth{1-\frac{2M}{r}} - \frac{2V_{r}}{r}\pnth{\frac{M}{r} - M_{r}} + \frac{4M_{r}}{r^2} \\ \phantom{.} \hspace{7 eM} - \frac{2(V_{t}M_{t} - M_{tt})}{r \e^{2V}}\pnth{1-\frac{2M}{r}}^{-1} + \frac{6M_{t}^{2}}{r^{2} \e^{2V}}\pnth{1-\frac{2M}{r}}^{-2}$
    \end{enumerate}
\end{lem}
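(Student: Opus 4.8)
\emph{Proof sketch.} The plan is to compute the Ricci tensor first in the coordinate basis $\{\p_t,\p_r,\p_\theta,\p_\varphi\}$ and then convert to the orthonormal frame $\{\nu_t,\nu_r,\nu_\theta,\nu_\varphi\}$ of (\ref{norvec}) using the diagonal scalings. Since the metric (\ref{metric-2}) is diagonal, the first step is to write down the Christoffel symbols $\Ga^{\lambda}_{\mu\nu} = \tfrac12 g^{\lambda\rho}(g_{\rho\mu,\nu} + g_{\rho\nu,\mu} - g_{\mu\nu,\rho})$. The one feature that distinguishes this from the familiar static polar-areal computation is that $g_{rr} = \pnth{1-\frac{2M}{r}}^{-1}$ depends on $t$, because $M$ does; this is exactly what produces the extra symbols $\Ga^{t}_{rr} = \tfrac12\e^{-2V}\,\p_t g_{rr}$ and $\Ga^{r}_{tr} = \tfrac12 g^{rr}\,\p_t g_{rr}$, and hence the nonzero mixed component in part (\ref{RicLem-2b}). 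All the remaining nonzero symbols ($\Ga^{t}_{tt}=V_t$, $\Ga^{t}_{tr}=V_r$, $\Ga^{r}_{tt}$, $\Ga^{r}_{rr}$, $\Ga^{r}_{\theta\theta}$, $\Ga^{r}_{\varphi\varphi}$, $\Ga^{\theta}_{r\theta}=\Ga^{\varphi}_{r\varphi}=1/r$, $\Ga^{\theta}_{\varphi\varphi}=-\sin\theta\cos\theta$, $\Ga^{\varphi}_{\theta\varphi}=\cot\theta$) have the same shape as in the static case, now with $V$ and $M$ functions of $(t,r)$.

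Next I would substitute into $R_{\mu\nu} = \Ga^{\lambda}_{\mu\nu,\lambda} - \Ga^{\lambda}_{\mu\lambda,\nu} + \Ga^{\lambda}_{\lambda\rho}\Ga^{\rho}_{\mu\nu} - \Ga^{\lambda}_{\mu\rho}\Ga^{\rho}_{\nu\lambda}$. Spherical symmetry and the block structure of the metric guarantee that the only components that can be nonzero are $R_{tt}$, $R_{tr}$, $R_{rr}$, $R_{\theta\theta}$, and $R_{\varphi\varphi}=\sin^2\theta\,R_{\theta\theta}$, so only these four need to be evaluated. The frame components then drop out of the scalings: $\Ric(\nu_t,\nu_t)=\e^{-2V}R_{tt}$, $\Ric(\nu_t,\nu_r)=\e^{-V}\sqrt{1-\frac{2M}{r}}\,R_{tr}$, $\Ric(\nu_r,\nu_r)=\pnth{1-\frac{2M}{r}}R_{rr}$, and $\Ric(\nu_\theta,\nu_\theta)=r^{-2}R_{\theta\theta}=\Ric(\nu_\varphi,\nu_\varphi)$, with all other frame pairs giving zero. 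Throughout, every derivative of $a$ must be traded for a derivative of $M$ through $a^{-2}=1-\frac{2M}{r}$, equivalently $\p_r(a^{-2}) = \frac{2M}{r^2}-\frac{2M_r}{r}$ and $\p_t(a^{-2}) = -\frac{2M_t}{r}$; this is the substitution responsible for turning the raw output into the stated combinations $\frac{M}{r}-M_r$, $M_t$, $M_{tt}$, $V_tM_t$ and the weights $\pnth{1-\frac{2M}{r}}^{-1}$, $\pnth{1-\frac{2M}{r}}^{-2}$. Finally, the scalar curvature in part (\ref{RicLem-2e}) is obtained either directly from $R=g^{\mu\nu}R_{\mu\nu}$ or, more economically, as the signed frame trace $R = -\Ric(\nu_t,\nu_t)+\Ric(\nu_r,\nu_r)+2\,\Ric(\nu_\theta,\nu_\theta)$ using parts (\ref{RicLem-2a})--(\ref{RicLem-2d}).

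The derivation is entirely mechanical, so the real obstacle is bookkeeping: keeping the numerous $t$-derivative terms mutually consistent, correctly propagating the $\pnth{1-\frac{2M}{r}}^{-1}$ and $\pnth{1-\frac{2M}{r}}^{-2}$ factors that appear once the $a\leftrightarrow M$ substitution is carried out, and not dropping the cross contributions coming from $\Ga^{r}_{tr}$ and $\Ga^{t}_{rr}$ inside $R_{tt}$ and $R_{rr}$. (Carrying out the same steps in a $t$-dependent orthonormal coframe via Cartan's structure equations is an alternative, but it does not really reduce the bookkeeping.) I would guard against slips with two checks: first, in the static case $V_t=M_t=0$ the five expressions must collapse to the standard polar-areal Ricci components, and imposing $\Ric=0$ there must force $M_r=0$ together with the usual Schwarzschild relation between $V_r$ and $M$; second, the resulting $G=\Ric-\tfrac12 Rg$ must be divergence-free (the twice-contracted Bianchi identity), which couples the $tt$-, $tr$- and $rr$-equations and would immediately expose any stray sign or coefficient.
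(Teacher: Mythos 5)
Your proposal follows essentially the same route as the paper's own proof: compute the Christoffel symbols of the diagonal metric (\ref{metric-2}) including the $t$-dependent $\Gamma^{t}_{\ rr}$ and $\Gamma^{r}_{\ tr}$ terms, assemble the coordinate Ricci components from the standard Christoffel formula, pass to the frame components by the diagonal scalings of Lemma \ref{RicLem-1}, and obtain $R$ as the signed frame trace. The only cosmetic difference is that you invoke symmetry to dismiss the mixed components, whereas the appendix verifies their vanishing by direct computation; otherwise the argument is the same and correct.
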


\begin{cor} \label{GCor-1}
  The only nonzero components of the Einstein curvature tensor in the $\nu_{\eta}$ basis are as follows.
    \begin{enumerate}
      \item\label{GCor-1a} $\displaystyle G(\nu_{t},\nu_{t}) = \frac{2 M_{r}}{r^{2}}$
      \item\label{GCor-1b} $\displaystyle G(\nu_{t},\nu_{r}) = \frac{2M_{t}}{r^{2} \e^{V}}\pnth{1 - \frac{2M}{r}}^{-1/2}$
      \item\label{GCor-1c} $\displaystyle G(\nu_{r},\nu_{r}) = -\frac{2M}{r^{3}} + \frac{2V_{r}}{r}\pnth{1 - \frac{2M}{r}}$
      \item\label{GCor-1d} $\displaystyle G(\nu_{\theta},\nu_{\theta}) = G(\nu_{\varphi},\nu_{\varphi}) \\ \phantom{.} \hspace{3.6 eM} = \pnth{V_{rr} + V_{r}^{2} + \frac{V_{r}}{r}}\pnth{1-\frac{2M}{r}} + \pnth{\frac{M}{r} - M_{r}}\pnth{\frac{1}{r^{2}} + \frac{V_{r}}{r}} \\ \phantom{.} \hspace{7 eM} + \frac{V_{t}M_{t} - M_{tt}}{r \e^{2V}}\pnth{1 - \frac{2M}{r}}^{-1} - \frac{3M_{t}^{2}}{r^{2} \e^{2V}}\pnth{1-\frac{2M}{r}}^{-2}$
    \end{enumerate}
\end{cor}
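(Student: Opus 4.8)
The plan is to obtain every component directly from the definition $G = \Ric - \frac{1}{2}Rg$ in equation (\ref{ECT}), using Lemma \ref{RicLem-2} for the Ricci components and the scalar curvature. The crucial point is that $\{\nu_{t},\nu_{r},\nu_{\theta},\nu_{\varphi}\}$ is an orthonormal frame, so $g(\nu_{t},\nu_{t}) = -1$, $g(\nu_{r},\nu_{r}) = g(\nu_{\theta},\nu_{\theta}) = g(\nu_{\varphi},\nu_{\varphi}) = 1$, and $g(\nu_{\eta},\nu_{\xi}) = 0$ for distinct indices. Hence any frame component of $G$ not already listed in Lemma \ref{RicLem-2} is zero, and the four potentially nonzero ones reduce to $G(\nu_{t},\nu_{t}) = \Ric(\nu_{t},\nu_{t}) + \frac{1}{2}R$, $G(\nu_{t},\nu_{r}) = \Ric(\nu_{t},\nu_{r})$, $G(\nu_{r},\nu_{r}) = \Ric(\nu_{r},\nu_{r}) - \frac{1}{2}R$, and $G(\nu_{\theta},\nu_{\theta}) = G(\nu_{\varphi},\nu_{\varphi}) = \Ric(\nu_{\theta},\nu_{\theta}) - \frac{1}{2}R$. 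The whole proof is then a sequence of substitutions followed by routine cancellation.

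I would dispatch the off-diagonal part \ref{GCor-1b} first: since $g(\nu_{t},\nu_{r}) = 0$, it is identical to part \ref{RicLem-2b}, with nothing to prove. For part \ref{GCor-1a}, add $\frac{1}{2}R$ from part \ref{RicLem-2e} to $\Ric(\nu_{t},\nu_{t})$ from part \ref{RicLem-2a}: the two terms carrying $(V_{rr} + V_{r}^{2} + 2V_{r}/r)(1 - 2M/r)$ cancel, the two carrying $(V_{r}/r)(M/r - M_{r})$ cancel, the two carrying $(V_{t}M_{t} - M_{tt})$ cancel, and the $M_{t}^{2}$ terms cancel since $-3 + 3 = 0$, leaving only $\frac{1}{2}\cdot\frac{4M_{r}}{r^{2}} = \frac{2M_{r}}{r^{2}}$. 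For part \ref{GCor-1c}, forming $\Ric(\nu_{r},\nu_{r}) - \frac{1}{2}R$ makes all $t$-derivative terms cancel, the second-derivative terms recombine to $\frac{2V_{r}}{r}(1 - 2M/r)$, and the remaining algebraic terms collapse via $-\frac{2}{r^{2}}(M/r - M_{r}) - \frac{2M_{r}}{r^{2}} = -\frac{2M}{r^{3}}$, producing $-\frac{2M}{r^{3}} + \frac{2V_{r}}{r}(1 - 2M/r)$.

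The last and slightly more delicate case is part \ref{GCor-1d}. Here I would add $-\frac{1}{2}R$ to $\Ric(\nu_{\theta},\nu_{\theta})$ and regroup: the coefficient of $(1 - 2M/r)$ becomes $V_{rr} + V_{r}^{2} + 2V_{r}/r - V_{r}/r = V_{rr} + V_{r}^{2} + V_{r}/r$, the $t$-derivative terms pass through unchanged from part \ref{RicLem-2e}, and the leftover pieces $\frac{1}{r^{2}}(M/r + M_{r}) + \frac{V_{r}}{r}(M/r - M_{r}) - \frac{2M_{r}}{r^{2}}$ collapse to $(M/r - M_{r})(1/r^{2} + V_{r}/r)$ after writing $M_{r}/r^{2} - 2M_{r}/r^{2} = -M_{r}/r^{2}$ and combining with $M/r^{3}$. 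This is exactly part \ref{GCor-1d}. I do not expect any genuine obstacle: the entire argument is bookkeeping, and the only place demanding care is this final regrouping of the angular component, where it would be easy to mismatch one of the $M/r$-versus-$M_{r}$ terms.
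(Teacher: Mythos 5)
Your proposal is correct and follows exactly the route the paper takes: use the orthonormality of $\{\nu_{t},\nu_{r},\nu_{\theta},\nu_{\varphi}\}$ to write $G(\nu_{\alpha},\nu_{\beta}) = \Ric(\nu_{\alpha},\nu_{\beta}) - \tfrac{R}{2}\,\varepsilon_{\alpha\beta}$ with $\varepsilon_{\alpha\beta}=\Diag(-1,1,1,1)$, then substitute the values from Lemma \ref{RicLem-2}. The only difference is that you carry out the cancellations explicitly (and correctly), whereas the paper leaves that algebra to the reader.
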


\begin{proof} Since $\{\nu_{t},\nu_{r},\nu_{\theta},\nu_{\varphi}\}$ form an orthonormal basis everywhere,
  \begin{equation}
    G(\nu_{\alpha},\nu_{\beta}) = \Ric(\nu_{\alpha},\nu_{\beta}) - \frac{R}{2}g(\nu_{\alpha},\nu_{\beta}) = \Ric(\nu_{\alpha},\nu_{\beta}) - \frac{R}{2}\varepsilon_{\alpha \beta}
\end{equation}
where $\alpha, \beta \in \{t,r,\theta,\varphi\}$ and $\varepsilon_{\alpha\beta} = \Diag(-1,1,1,1)$.  Note that this implies that
  \begin{equation}
    G(\nu_{\varphi},\nu_{\varphi}) = G(\nu_{\theta},\nu_{\theta})
  \end{equation}
We then use Lemma \ref{RicLem-2} to substitute in the values for $\Ric$ and $R$. 
\end{proof}

Next, we define the function $\mu(t,r)$ to be the energy density of an observer at $p=(t,r)$ moving through the slices with 4-velocity $\nu_{t}$.  In the context of the stress energy tensor $T$ in equation (\ref{EinEq-1}), this means
\begin{equation} \label{ED-def}
  \mu = T(\nu_{t}, \nu_{t}).
\end{equation}
We now have enough information to prove the following proposition, which contains the promised stronger reason for interpreting $M$ as the mass inside each metric sphere.

\begin{prop} \label{ED-Prop-1}
  For a fixed $t$ and $r$, $M(t,r)$ is the flat integral of the energy density, $\mu(t,r)$, over the ball $E_{t,r}$ of radius $r$ at time $t$.
\end{prop}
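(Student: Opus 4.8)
The plan is to extract a differential relation between $M$ and $\mu$ from the Einstein equation and Corollary \ref{GCor-1}, and then integrate it radially over the flat ball. First I would invoke the Einstein equation (\ref{EinEq-1}) contracted twice against the unit vector $\nu_{t}$: by definition (\ref{ED-def}) of $\mu$ and part (\ref{GCor-1a}) of Corollary \ref{GCor-1},
\begin{equation}
  8\pi \mu = 8\pi T(\nu_{t},\nu_{t}) = G(\nu_{t},\nu_{t}) = \frac{2M_{r}}{r^{2}},
\end{equation}
so that $M_{r}(t,r) = 4\pi r^{2} \mu(t,r)$. This is the crucial identity and the only place the Einstein equation enters.

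Next I would interpret ``the flat integral over the ball $E_{t,r}$ of radius $r$'' as integration against the Euclidean volume form $dV_{\text{flat}} = s^{2}\sin\theta\, ds\, d\theta\, d\varphi$ on the coordinate ball $\{0 \le s \le r\}$ (that is, ignoring the factor $a$ that would appear in the induced metric volume form). Since $\mu$ depends only on $t$ and the radial coordinate by spherical symmetry, the angular integration contributes a factor of $4\pi$, giving
\begin{equation}
  \int_{E_{t,r}} \mu \, dV_{\text{flat}} = 4\pi \int_{0}^{r} \mu(t,s)\, s^{2}\, ds = \int_{0}^{r} M_{r}(t,s)\, ds = M(t,r) - M(t,0).
\end{equation}
Finally I would appeal to the standing low-field assumption discussed after (\ref{metric-2}), namely $M(t,0) = 0$ for all $t$, to conclude that the right-hand side equals $M(t,r)$, which is exactly the claim.

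I do not anticipate a serious obstacle here: the argument is essentially a one-line consequence of Corollary \ref{GCor-1} together with a fundamental-theorem-of-calculus integration. The only points requiring care are (i) being explicit that ``flat integral'' refers to the Euclidean volume element rather than the Riemannian one induced by $\ga$, so that no spurious factor of $a$ intervenes, and (ii) justifying the use of $M(t,0)=0$, which is why the hypothesis was flagged earlier in the text. One might also remark that this proposition is consistent with, and strengthens, Proposition \ref{mH-Prop}: the Hawking mass of $\Si_{t,r}$ literally accumulates the locally measured energy density over the enclosed flat ball.
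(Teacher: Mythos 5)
Your proposal is correct and follows essentially the same route as the paper: both derive $M_{r} = 4\pi r^{2}\mu$ from the $(\nu_{t},\nu_{t})$ component of the Einstein equation via Corollary \ref{GCor-1}, then integrate radially against the flat volume element. Your explicit handling of the boundary term $M(t,0)=0$ is a small point of extra care that the paper leaves implicit (relying on the discussion following equation (\ref{metric-2})), but it does not change the argument.
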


\begin{proof}
  By the Einstein equation (\ref{EinEq-1}), equation (\ref{ED-def}), and Corollary \ref{GCor-1}, we have that
  \begin{align} \label{EDEq-1}
    \notag G(\nu_{t},\nu_{t}) &= 8\pi T(\nu_{t},\nu_{t}) \\
    M_{r} &= 4\pi r^{2}\mu
  \end{align}
  Since $\int_{\Si_{t,r}} dA = 4\pi r^{2}$, where $\Si_{t,r}$ is the sphere of radius $r$ at time $t$, we have then that for a fixed $t$ and $r$,
  \begin{align}
    \notag M(t,r) &= \int_{0}^{r} 4\pi s^{2} \mu(t,s)\, ds =\int_{0}^{r} \int_{\Si_{t,s}} \mu(t,s) \, dA \, ds \\
    &= \int_{0}^{r}\int_{0}^{2\pi} \int_{0}^{\pi} \mu(t,s) s^{2}\sin\theta \, d\theta\, d\varphi\, ds = \int_{E_{t,r}} \mu(t,s) \, dV_{0},
  \end{align}
  where $dV_{0} = s^{2}\sin\theta \, d\theta\, d\varphi\, ds$ is the flat volume form on the ball of radius $s$, and we have introduced $s$ as a dummy integrating variable in the ``$r$'' position.  Thus $M(t,r)$ is the flat volume integral of the energy density over the ball of radius $r$.
\end{proof}

Note that $M$ is not the integral of the energy density with respect to the metric's volume form, $dV = (1 - 2M/s)^{-1/2}\, s^{2}\sin\theta \, ds \, d\theta \, d\varphi$, but rather the following is true.
  \begin{equation}
    M(t,r) = \int_{E_{t,r}}\mu(t,s)\, \sqrt{1 - \frac{2M(t,s)}{s}} \, dV.
  \end{equation}
However, in the Newtonian limit, $M\ll r$, the above integral becomes approximately the integral of the energy density with respect to the metric's volume form over the ball $E_{t,r}$ of radius $r$.  Thus referring to $M(t,r)$ as the mass inside the metric sphere of radius $r$ at time $t$ not only makes sense from a geometrical point of view given the Hawking mass, but also from a physical point of view.

Furthermore, since the energy density is spherically symmetric and smooth at the origin, we must have $\mu_{r}(t,0) = 0$ for all $t$.  Thus for small $r$, $\mu$ is approximately constant and nonnegative.  In fact, in most typical cases, $\mu$ is strictly positive at $r = 0$.  In the case when $\mu(t,0) > 0$, the above integral yields for small $r$ that
\begin{equation}\label{massCub}
    M(t,r) = \int_{0}^{r} 4\pi s^{2}\mu(t,s)\, ds \approx \int_{0}^{r} 4\pi s^{2}\mu(t,0)\, ds = \frac{4\pi \mu(t,0)}{3}r^{3}.
\end{equation}
Thus the initial behavior of $M$ near $r=0$ is that of a cubic power function. 
In the case when $\mu(t,0) = 0$, a Taylor expansion of $\mu$ near $r=0$ yields that $\mu(t,r)$ is initially (i.e. near $r=0$) a quadratic power function.  Then a similar computation to equation (\ref{massCub}) yields that $M(t,r)$ is initially a $5^{\text{th}}$ power of $r$.  In either case, this implies that for all $t$
\begin{equation}\label{Mder}
    M(t,0) = 0, \qquad M_{r}(t,0) = 0, \qquad \text{and} \qquad M_{rr}(t,0) = 0.
\end{equation}
This fact will be useful in proving Proposition \ref{EinEq-prop1} where we will need it to apply \LHopital's rule.   

Next we define the function $P$ to be the pressure in the stress-energy tensor for an observer at $(t,r)$, that is, let
\begin{equation} \label{PR-def}
  P = T(\nu_{r},\nu_{r}).
\end{equation}
Then we can use Corollary \ref{GCor-1} and the Einstein equation (\ref{EinEq-1}) to prove the following proposition.

\begin{prop} \label{PR-Prop-1}
  In the Newtonian limit, where $P=0$ and $M\ll r$, we have that $\Del V = 4\pi \mu$, where $\Del$ is the flat Laplacian on $\RR^{3}$.
\end{prop}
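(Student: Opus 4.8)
The plan is to read off the needed relation between $V$ and $M$ from the radial--radial component of the Einstein equation, and then recognize the flat Laplacian of a radially symmetric function. First I would combine the definition $P = T(\nu_{r},\nu_{r})$ with the Einstein equation (\ref{EinEq-1}) and part (\ref{GCor-1c}) of Corollary \ref{GCor-1} to obtain
\begin{equation*}
  8\pi P = G(\nu_{r},\nu_{r}) = -\frac{2M}{r^{3}} + \frac{2V_{r}}{r}\pnth{1 - \frac{2M}{r}}.
\end{equation*}
Imposing the Newtonian conditions $P = 0$ and $M \ll r$ — so that the coefficient $1 - 2M/r$ of $V_{r}$ is replaced by $1$ — this reduces to $V_{r} = M/r^{2}$.

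Next I would recall that a function of $r$ alone has flat Laplacian on $\RR^{3}$ given by $\Del V = V_{rr} + \tfrac{2}{r}V_{r} = \tfrac{1}{r^{2}}\pnth{r^{2}V_{r}}_{r}$. Substituting $V_{r} = M/r^{2}$ into the divergence form collapses the expression to $\Del V = \tfrac{1}{r^{2}}M_{r}$. Finally I would invoke equation (\ref{EDEq-1}) from the proof of Proposition \ref{ED-Prop-1}, namely $M_{r} = 4\pi r^{2}\mu$, which immediately yields $\Del V = 4\pi\mu$.

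There is essentially no serious obstacle here; the computation is short. The only point that requires a little care is the precise meaning of the Newtonian limit: one should make explicit that "$M \ll r$" is being used to discard the $2M/r$ correction in the coefficient of $V_{r}$, so that $V_{r} = M/r^{2}$ is treated as an exact identity in this limit rather than a mere approximation — this is what licenses differentiating it when assembling $\Del V$. It is worth remarking in passing that $V_{r} = M/r^{2}$ is exactly the statement that $V$ behaves like the Newtonian gravitational potential (consistent with $\al = \e^{V}$ and the usual correspondence $\al \approx 1 + V$), so the conclusion $\Del V = 4\pi\mu$ is precisely the Poisson equation of Newtonian gravity, which is the conceptual point of the proposition.
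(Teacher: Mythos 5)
Your proposal is correct and follows essentially the same route as the paper: both solve the $(\nu_{r},\nu_{r})$ component of the Einstein equation via Corollary \ref{GCor-1} for $r^{2}V_{r}$, drop the $P$ and $2M/r$ terms in the Newtonian limit to get $r^{2}V_{r}=M$, and then apply the divergence form of the flat radial Laplacian together with $M_{r}=4\pi r^{2}\mu$ from equation (\ref{EDEq-1}). Your remark about treating $V_{r}=M/r^{2}$ as the identity being differentiated matches the paper's handling of the approximation.
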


\begin{proof}
  By the Einstein equation (\ref{EinEq-1}), equation (\ref{PR-def}), and Corollary \ref{GCor-1}, we have that
  \begin{align}\label{PREq-1}
    \notag G(\nu_{r},\nu_{r}) &= 8\pi T(\nu_{r},\nu_{r}) \\
    r^{2}V_{r} &= \pnth{1 - \frac{2M}{r}}^{-1}\pnth{M + 4\pi r^{3}P}.
  \end{align}
  Note that in the Newtonian limit, where $P=0$ and $M\ll r$, this equation is approximated by
  \begin{equation} \label{PotEq-1}
    r^{2}V_{r} = M.
  \end{equation}
  Also, the metric on the hypersurface, under these assumptions, is approximately the polar-areal metric on $\RR^{3}$.  Note that the Laplacian on $\RR^{3}$ of a spherically symmetric function $f$ is given by
  \begin{equation}
    \Del f= \ppa{f}{r} + \frac{2}{r}\pa{f}{r} = \frac{1}{r^{2}}\pa{}{r}\pnth{r^{2}\pa{f}{r}}
  \end{equation}
  Then applying the operator $\dfrac{1}{r^{2}}\pa{}{r}$ to (\ref{PotEq-1}) and using equation (\ref{EDEq-1}) yields
  \begin{equation}\label{PotEq-2}
    \Del V = \frac{1}{r^{2}}\pa{}{r}(r^{2}V_{r}) = 
   \frac{1}{r^{2}}M_{r} = 4\pi \mu
  \end{equation}
\end{proof}

Equation (\ref{PotEq-2}) is Poisson's equation and is the defining equation of the gravitational potential in Newtonian mechanics.  Moreover, equation (\ref{PotEq-1}) reduces to $V_{r} = M/r^{2}$ which yields the inverse square law for Newtonian gravity.  Then we can interpret $V$ as the analogue in our scenario of the Newtonian potential.  The interpretation of $M$ and $V$ via Propositions \ref{mH-Prop}, \ref{ED-Prop-1}, and \ref{PR-Prop-1} are what we mean by saying that the metric (\ref{metric-2}) is Newtonian compatible.

\subsection{Other Useful Properties of this Metric}

In this section, we produce two additional useful results which are readily obtainable with these coordinates and metric functions.  The first is the well-known result of the monotonicity of the Hawking mass in spherical symmetry, which is made particularly straightforward using this form of the metric.  It follows almost immediately from Corollary \ref{GCor-1}.

\begin{prop}\label{mH-Prop2}
  If the spacetime satisfies the dominant energy condition that $G(X,Y) \geq 0$ for all future-pointing causal (i.e. timelike or null) vector fields, $X$ and $Y$, then,  whenever $2M(t,r) \leq r$, the Hawking Mass, $M(t,r)$, is monotonically increasing in any non-timelike direction for which the radial coordinate increases.
\end{prop}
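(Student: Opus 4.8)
The plan is to compute the rate of change of $M$ along an admissible direction and, via Corollary \ref{GCor-1}, recognize it as the Einstein curvature tensor evaluated on two future-pointing causal vectors, so that the dominant energy condition immediately supplies the sign.

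First, fix a point where $2M(t,r) < r$; at points with $2M = r$ the chart degenerates ($g_{rr} = \pnth{1 - 2M/r}^{-1}$ blows up) and no non-timelike direction has $r$ increasing, so nothing is to be proved there. Using the orthonormal frame (\ref{norvec}), write a candidate direction as $W = a\,\nu_{t} + b\,\nu_{r}$ plus a possible angular part. Since $M = M(t,r)$ depends on neither $\theta$ nor $\varphi$, the angular part contributes nothing to $W(M)$; the relevant directions are therefore those in the $(t,r)$-plane, for which ``non-timelike'' means $-a^{2} + b^{2} \geq 0$, hence $b \geq |a|$, while ``the radial coordinate increases'' means $dr(W) = b\sqrt{1 - 2M/r} > 0$, hence $b > 0$.

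Next, since $dt(\nu_{t}) = \e^{-V}$ and $dr(\nu_{r}) = \sqrt{1 - 2M/r}$, we have $W(M) = \e^{-V}a\,M_{t} + \sqrt{1 - 2M/r}\,b\,M_{r}$. Solving the first two identities of Corollary \ref{GCor-1} for $M_{r}$ and $M_{t}$ gives $M_{r} = \tfrac{r^{2}}{2}G(\nu_{t},\nu_{t})$ and $M_{t} = \tfrac{r^{2}}{2}\e^{V}\sqrt{1 - 2M/r}\,G(\nu_{t},\nu_{r})$, and substituting these collapses the expression to
\[
  W(M) \;=\; \frac{r^{2}}{2}\sqrt{1 - \frac{2M}{r}}\;\bigl(a\,G(\nu_{t},\nu_{r}) + b\,G(\nu_{t},\nu_{t})\bigr) \;=\; \frac{r^{2}}{2}\sqrt{1 - \frac{2M}{r}}\;G\!\left(\nu_{t},\; b\,\nu_{t} + a\,\nu_{r}\right).
\]
Now $\nu_{t}$ is future-pointing timelike, and $b\,\nu_{t} + a\,\nu_{r}$ satisfies $g(b\nu_{t}+a\nu_{r},\,b\nu_{t}+a\nu_{r}) = a^{2} - b^{2} \leq 0$ and $g(b\nu_{t}+a\nu_{r},\,\nu_{t}) = -b < 0$, so it is future-pointing causal. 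Applying the dominant energy condition with $X = \nu_{t}$ and $Y = b\nu_{t} + a\nu_{r}$ yields $G(\nu_{t},\,b\nu_{t}+a\nu_{r}) \geq 0$, and the prefactor is nonnegative, so $W(M) \geq 0$ whenever $2M \leq r$, which is exactly the stated monotonicity.

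The crux is less an obstacle than an observation: Corollary \ref{GCor-1} packages $M_{r}$ and $M_{t}$ precisely so that $W(M)$ becomes $G(\nu_{t},\cdot)$ evaluated on the vector obtained from $W$ by interchanging its $\nu_{t}$- and $\nu_{r}$-coefficients, and this interchange converts ``non-timelike with $r$ increasing'' into ``future-pointing causal,'' which is exactly the hypothesis of the dominant energy condition. The only points needing care are the restriction to directions in the $(t,r)$-plane — with a sufficiently large angular component $W(M)$ can in fact be made negative whenever $M_{t} \neq 0$ — and the degenerate locus $2M = r$.
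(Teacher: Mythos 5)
Your proposal is correct and follows essentially the same route as the paper: both rewrite the directional derivative of $M$ via Corollary \ref{GCor-1} as a nonnegative multiple of $G(\nu_{t},\cdot)$ evaluated on a future-pointing causal vector and then invoke the dominant energy condition, the only difference being that the paper checks the two null directions $\nu_{r}\pm\nu_{t}$ and takes positive linear combinations, while you handle a general combination $a\,\nu_{t}+b\,\nu_{r}$ with $b\geq|a|$, $b>0$ in a single computation via the coefficient swap. Your closing remarks on the locus $2M=r$ and on directions with angular components are sensible clarifications of the statement's scope and match what the paper's own argument actually covers.
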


\begin{proof}
  The vector field $\nu_{t}$ is the future-pointing timelike unit vector field in the $t$ direction and the vector fields $\nu_{r} + \nu_{t}$ (future-pointing) and $\nu_{r} - \nu_{t}$ (past-pointing) are null vector fields in the null directions where the $r$ coordinate increases.  By Corollary \ref{GCor-1}, we have that
  \begin{align}
    \notag (\nu_{r} + \nu_{t})(M) &= M_{r}\pnth{1 - \frac{2M}{r}}^{1/2} + \frac{M_{t}}{ \e^{V}} \\
    \notag &= \frac{r^{2}}{2}\pnth{1 - \frac{2M}{r}}^{1/2}\pnth{\frac{2M_{r}}{r^{2}} + \frac{2M_{t}}{r^{2} \e^{V}}\pnth{1 - \frac{2M}{r}}^{-1/2}} \\
    \notag &= \frac{r^{2}}{2}\pnth{1 - \frac{2M}{r}}^{1/2}\pnth{G(\nu_{t},\nu_{t}) + G(\nu_{t},\nu_{r})} \\
    &= \frac{r^{2}}{2}\pnth{1 - \frac{2M}{r}}^{1/2} G(\nu_{t},\nu_{t}+\nu_{r})
  \end{align}
  The last factor here is positive by the dominant energy condition since both $\nu_{t}$ and $\nu_{t} + \nu_{r}$ are future-pointing causal vector fields.  Since $r^{2}\geq 0$ and $2M \leq r$ everywhere, it must be that $(\nu_{r} + \nu_{t})(M) \geq 0$ everywhere as well.  By the same corollary, we also have
  \begin{align}
    \notag (\nu_{r} - \nu_{t})(M) &= M_{r}\pnth{1 - \frac{2M}{r}}^{1/2} - \frac{M_{t}}{ \e^{V}} \\
    \notag &= \frac{r^{2}}{2}\pnth{1 - \frac{2M}{r}}^{1/2}\pnth{\frac{2M_{r}}{r^{2}} - \frac{2M_{t}}{r^{2} \e^{V}}\pnth{1 - \frac{2M}{r}}^{-1/2}} \\
    \notag &= \frac{r^{2}}{2}\pnth{1 - \frac{2M}{r}}^{1/2}\pnth{G(\nu_{t},\nu_{t}) - G(\nu_{t},\nu_{r})} \\
    &= \frac{r^{2}}{2}\pnth{1 - \frac{2M}{r}}^{1/2} G(\nu_{t},\nu_{t}-\nu_{r})
  \end{align}
  The last factor here is positive by the dominant energy condition since $\nu_{t} - \nu_{r}$ is also future-pointing causal.  Then, as before, it must be that $(\nu_{r} - \nu_{t})(M) \geq 0$ everywhere.  Since both $(\nu_{r} + \nu_{t})(M)$ and $(\nu_{r} - \nu_{t})(M)$ are both nonnegative everywhere, any positive linear combination of the two is also nonnegative, which is the desired result.
\end{proof}

This next property will be useful later when we consider the Einstein-Klein-Gordon system. 
We define the following functions, two of which have already been introduced in equations (\ref{ED-def}) and (\ref{PR-def}),
\begin{subequations} \label{SETFun-def}
\begin{align}
  \mu(t,r) &= T(\nu_{t}, \nu_{t}) & \rho(t,r) &= T(\nu_{t},\nu_{r}) \\
  P(t,r) &= T(\nu_{r},\nu_{r}) & Q(t,r) &= T(\nu_{\theta},\nu_{\theta}) = T(\nu_{\varphi},\nu_{\varphi}).
\end{align}
\end{subequations}
By the Einstein equation and Corollary \ref{GCor-1}, these functions account for all the nonzero components of $T$ in terms of the orthonormal frame $\{\nu_{t},\nu_{r},\nu_{\theta},\nu_{\varphi}\}$.  Then we have the following result which follows directly from the required property of all stress-energy tensors that $\divx{g} T = 0$, where $\divx{g} T$ denotes the divergence of the tensor $T$ with respect to the metric $g$.
We will leave the lengthy proof of this proposition to the appendix.

\begin{prop}\label{EinEq-prop1}
Suppose that the metric is spherically symmetric and of the form in equation (\ref{metric-2}) and that $T$ is a suitable spherically symmetric tensor of the form in equation (\ref{SETFun-def}).  Then solving the equation
\begin{equation}\label{divfreeT}
  \divx{g} T = 0
\end{equation}
along with solving the ODEs obtained from (\ref{EDEq-1}) and (\ref{PREq-1}), namely
  \begin{align}
    \label{EDEq-1-thm} M_{r} &= 4\pi r^{2}\mu \\
    \label{PREq-1-thm} V_{r} &= \pnth{1 - \frac{2M}{r}}^{-1}\pnth{\frac{M}{r^{2}} + 4\pi r P},
  \end{align}
  for each value of $t$ is equivalent to solving the entire Einstein equation.
\end{prop}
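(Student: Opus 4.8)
The plan is to exploit the fact, recorded in Corollary~\ref{GCor-1}, that in the orthonormal frame $\{\nu_{t},\nu_{r},\nu_{\theta},\nu_{\varphi}\}$ the Einstein tensor $G$ has only the four independent nonzero components $G(\nu_{t},\nu_{t})$, $G(\nu_{t},\nu_{r})$, $G(\nu_{r},\nu_{r})$ and $G(\nu_{\theta},\nu_{\theta})=G(\nu_{\varphi},\nu_{\varphi})$. Since $T$ is assumed to be of the form (\ref{SETFun-def}), it has precisely the matching nonzero components $\mu,\rho,P,Q$, so the single tensor equation $G=8\pi T$ is equivalent to the four scalar equations
\begin{align*}
  G(\nu_{t},\nu_{t}) &= 8\pi\mu, & G(\nu_{t},\nu_{r}) &= 8\pi\rho, \\
  G(\nu_{r},\nu_{r}) &= 8\pi P, & G(\nu_{\theta},\nu_{\theta}) &= 8\pi Q.
\end{align*}
Using Corollary~\ref{GCor-1} and the definitions (\ref{MV-eq}) of $M$ and $V$, the first of these is an algebraic rearrangement of (\ref{EDEq-1-thm}) and the third of (\ref{PREq-1-thm}) (the latter requires dividing by $1-2M/r$, which is legitimate when $2M<r$). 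Thus the real content of the proposition is that, given (\ref{EDEq-1-thm}) and (\ref{PREq-1-thm}), the two remaining Einstein equations $G(\nu_{t},\nu_{r})=8\pi\rho$ and $G(\nu_{\theta},\nu_{\theta})=8\pi Q$ are together equivalent to $\divx{g}T=0$.

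The forward implication is immediate: if $G=8\pi T$, then (\ref{EDEq-1-thm}) and (\ref{PREq-1-thm}) hold by the preceding paragraph, and $\divx{g}T=\tfrac{1}{8\pi}\divx{g}G=0$ by the twice-contracted Bianchi identity, which holds for the Einstein tensor of any metric; this direction requires no computation.

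For the converse, assume (\ref{EDEq-1-thm}), (\ref{PREq-1-thm}) and $\divx{g}T=0$, and put $E=G-8\pi T$. By the first paragraph $E(\nu_{t},\nu_{t})=E(\nu_{r},\nu_{r})=0$; $E$ is spherically symmetric with components depending only on $t$ and $r$, so its only possibly nonzero orthonormal components are $E(\nu_{t},\nu_{r})$ and $E(\nu_{\theta},\nu_{\theta})=E(\nu_{\varphi},\nu_{\varphi})$; and $\divx{g}E=\divx{g}G-8\pi\divx{g}T=0$ by the Bianchi identity and the hypothesis. Expanding the two nontrivial scalar components of $\divx{g}E=0$ in the coordinates of (\ref{metric-2}) -- the angular components vanish identically by spherical symmetry -- reveals the following structure. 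The $t$-component involves \emph{only} $E(\nu_{t},\nu_{r})$ (the $E(\nu_{\theta},\nu_{\theta})$ terms cancel because $\Gamma^{\theta}_{\theta t}=\Gamma^{\varphi}_{\varphi t}=0$) and, away from $2M=r$, reads as a homogeneous linear first-order ODE in $r$ for $E(\nu_{t},\nu_{r})$ whose coefficient is $2/r$ plus terms built from $V_{r}$ and $\Gamma^{r}_{rr}$ that vanish at the origin by (\ref{Mder}) and \LHopital's rule. Because $E(\nu_{t},\nu_{r})=G(\nu_{t},\nu_{r})-8\pi\rho$ is the difference of two regular quantities and hence is bounded near $r=0$, whereas any nonzero solution of that ODE blows up like $r^{-2}$ there, we conclude $E(\nu_{t},\nu_{r})\equiv 0$, i.e.\ $G(\nu_{t},\nu_{r})=8\pi\rho$. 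With this in hand, the $r$-component of $\divx{g}E=0$ collapses to $-\tfrac{2}{r}E(\nu_{\theta},\nu_{\theta})=0$ (the surviving contribution comes from $\Gamma^{\theta}_{\theta r}=\Gamma^{\varphi}_{\varphi r}=1/r$), so $E(\nu_{\theta},\nu_{\theta})\equiv 0$, i.e.\ $G(\nu_{\theta},\nu_{\theta})=8\pi Q$. All four scalar equations making up $G=8\pi T$ now hold, which completes the converse.

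I expect the main obstacle to be the bookkeeping in the converse: carrying out the two component computations of $\divx{g}E$ carefully enough to see both the decoupling that isolates $E(\nu_{t},\nu_{r})$ in the $t$-equation and the collapse of the $r$-equation, and -- most delicately -- pinning down the behavior at $r=0$ so that the first-order equation for $E(\nu_{t},\nu_{r})$ admits no spurious singular solution. This is exactly where (\ref{Mder}) and \LHopital's rule enter, and it accounts for the length of the argument. An equivalent but more computational route, which avoids invoking the Bianchi identity, is to expand $\divx{g}T=0$ directly into its $t$- and $r$-components and then show, by algebra using (\ref{EDEq-1-thm}), (\ref{PREq-1-thm}) and Corollary~\ref{GCor-1}, that the $t$-component is equivalent to $G(\nu_{t},\nu_{r})=8\pi\rho$ and the $r$-component to $G(\nu_{\theta},\nu_{\theta})=8\pi Q$; this is the same logic with the Bianchi identity replaced by an explicit calculation, and is presumably the form the proof takes in the appendix.
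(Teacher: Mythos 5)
Your argument is correct, but it is a genuinely different route from the paper's, so let me compare.  The paper proves the nontrivial (sufficiency) direction by brute force: it expands $\divx{g}T=0$ explicitly into the two scalar equations (\ref{divT3a})--(\ref{divT3b}); it then obtains the $(\nu_{t},\nu_{r})$ Einstein equation (\ref{rhoEq-1}) not by an algebraic equivalence with the $t$-component of the divergence but by a compatibility-of-mixed-partials computation ($M_{rt}=M_{tr}$) combined with an integration-constant argument ($M^{\ast}=M+f(t)$, forced to $f\equiv 0$ by the central condition $M(t,0)=0$); and finally it verifies the angular equation (\ref{QEq-1}) by differentiating the two radial ODEs to get $M_{tt}$ and $V_{rr}$, substituting, and reducing to an identity via (\ref{divT3b}).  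You instead set $E=G-8\pi T$, use the contracted Bianchi identity to get $\divx{g}E=0$, and exploit the fact that only $E(\nu_{t},\nu_{r})=:\sigma$ and $E(\nu_{\theta},\nu_{\theta})=E(\nu_{\varphi},\nu_{\varphi})$ survive: the $t$-component gives $\sigma_{r}+2\pnth{V_{r}+\tfrac{1}{r}}\sigma=0$ (note the coefficient is exactly $2(V_{r}+1/r)$; no $\Ga_{rr}^{\ \ r}$ term survives once the $tt$ and $rr$ components of $E$ vanish), so $\sigma=C(t)\e^{-2V}r^{-2}$ and boundedness at $r=0$ kills it, after which the $r$-component collapses to $-\tfrac{2}{r}E(\nu_{\theta},\nu_{\theta})=0$.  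This is shorter and conceptually a constraint-propagation argument; its one delicate point, which you correctly flag but should make explicit, is that boundedness of $G(\nu_{t},\nu_{r})=\tfrac{2M_{t}}{r^{2}\e^{V}}\pnth{1-\tfrac{2M}{r}}^{-1/2}$ at the origin is precisely where the initial condition $M(t,0)=0$ and the behavior (\ref{Mder}) enter --- it plays exactly the role of the paper's $M^{\ast}=M+f(t)$ step, and without it the $r^{-2}$ mode (equivalently a spurious $f(t)$ added to $M$) cannot be excluded.  What the paper's longer computation buys is the explicit form (\ref{divT3}) of $\divx{g}T=0$, which is reused verbatim in the proof of Lemma \ref{divTEKG-Lem}; also, your closing guess about the appendix is only partly right --- the paper does expand $\divx{g}T=0$ in components, but it does not show the $t$-component to be algebraically equivalent to (\ref{rhoEq-1}); it uses it to propagate that equation from the center outward.
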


In this proposition, the requirement that $T$ is ``suitable'' refers to its having to come from some kind of physical notion or some set of mathematical conditions that yield a physical-like stress-energy tensor or, at least, a $T$ for which the Einstein equation is solvable.

\section{The Einstein-Klein-Gordon Equations}

A good example of the usefulness of this choice of metric is in the evolution of the Einstein-Klein-Gordon equations in spherical symmetry, which is the Einstein equation in the presence of a nontrivial scalar field.  In this paper, we will work specifically with a complex valued scalar field. Before we derive the spherically symmetric Einstein-Klein-Gordon equations in this metric, we should note that several other references including, but certainly not limited to, \cites{Hawley00, Seidel90, Seidel98, Bernal08, Gleiser89, MSBS, Matos09, Lee09, Sharma08, Lai07, Hawley03, Lee96} have written the Einstein-Klein-Gordon equations in spherical symmetry using either the metric presented here or another form of a general spherically symmetric metric.  Moreover, scalar fields, spherically symmetric or not, have been used as a sort of simple readily obtainable example to produce a nontrivial stress energy tensor, although it is rarely given physical significance \cite{Wald84}.  However, in the last couple decades or so, scalar fields have been considered as a candidate for dark matter by many authors, usually under the name boson stars, scalar field dark matter, Bose-Einstein condensates, or wave dark matter \cites{Bray10, Bray12, MSBS, Lee96, Lee09, Matos00, Matos09, Matos01}.  These scalar fields are sometimes real and sometimes complex depending on the discussion.  There are sometimes several scalar fields bound gravitationally as in \cite{MSBS} and sometimes just a single scalar field as in \cite{Bray10}.  This relatively new and actively studied application of scalar fields makes the  example presented here even more relevant and useful.  The treatment here most closely resembles the treatement given by Bray \cites{Bray10,Bray12}, although in his papers a real scalar field is usually considered and spherical symmetry is not necessarily assumed.  As such, we will reference his paper to make a few comparisons to the case of a real valued scalar field.

Consider a spherically symmetric spacetime $N$ with Lorentzian metric $g$ as described in the previous section.  Let $f:N\to\C$ be a spherically symmetric complex valued scalar field.  Then the Einstein-Klein Gordon equations in this case are
\begin{align}
  \label{EKG-1} G &= 8\pi \mu_{0}\pnth{\frac{df \otimes d\cf + d\cf \otimes df}{\Up^{2}} - \pnth{\frac{\abs{df}^{2}}{\Up^{2}} + \abs{f}^{2}}g} \\
  \label{EKG-2} \glap{g}f &= \Up^2 f
\end{align}
where $\mu_{0}$ is simply some constant that controls the scale of the system and is not to be confused with the energy density $\mu(t,0)$ at the central value.  Actually, the above equations are the Einstein-Klein-Gordon equations whether in spherical symmetry or not, but spherical symmetry is the general topic of this paper and so we require it here as well.  The value of $\mu_{0}$ is unimportant to the qualitative behavior of the solutions and can be absorbed entirely into $f$, if desired.  By contrast, the parameter $\Up$ is a fundamental constant to the equations upon which the qualitative behavior of the solutions depend.  The real valued version of these equations, that is, where $f$ is a real valued scalar field or equivalently where $\cf = f$, can be found in the previously mentioned paper by Bray \cite{Bray10} and it is readily seen that the equations here reduce to those in Bray's paper under the assumption $\cf = f$.  Moreover, equations (\ref{EKG-1}) and (\ref{EKG-2}) are the Euler-Lagrange equations of the action involving the scalar field given in Bray's paper if the scalar field is assumed to be complex valued instead of real valued (this follows the Lagrangian formulation of general relativity given in an appendix in Wald's book \cite{Wald84}).
Note also that equations (\ref{EinEq-1}) and (\ref{EKG-1}) imply that the stress-energy tensor corresponding to a complex scalar field is given by
\begin{equation} \label{SET-Def}
   T = \mu_{0}\pnth{\frac{df \otimes d\cf + d\cf \otimes df}{\Up^{2}} - \pnth{\frac{\abs{df}^{2}}{\Up^{2}} + \abs{f}^{2}}g}.
\end{equation}

With these equations, we will construct a system of PDEs which can be used to numerically evolve the scalar field $f$ and the metric from consistent initial data.  Since in the previous section we have already computed the components of the Einstein curvature tensor, in order to construct such a system of PDEs, we need to compute the components of the stress energy tensor and the Laplacian in the metric $g$.  Before we do, we will define the function, $p(t,r)$, by the equation
\begin{equation}\label{p-Def}
  p = f_{t}\e^{-V}\pnth{1 - \frac{2M}{r}}^{-1/2}.
\end{equation}
This is done to make the resulting system of PDEs first order in time and results in a more convenient choice than choosing only $p=f_{t}$.  We now have the following lemmas.
\begin{lem}\label{SET-Lem-1}
    For the stress energy tensor given in (\ref{SET-Def}), the following are true.
    \begin{enumerate}
      \item $\displaystyle T(\nu_{t},\nu_{t}) = \mu_{0}\pnth{\abs{f}^{2} + \pnth{1 - \frac{2M}{r}}\frac{\abs{f_{r}}^{2} + \abs{p}^{2}}{\Up^{2}}}$
      \item $\displaystyle T(\nu_{t},\nu_{r}) = \frac{2\mu_{0}}{\Up^2}\pnth{1 - \frac{2M}{r}}\Re(f_{r}\cp)$
      \item $\displaystyle T(\nu_{r},\nu_{r}) = \mu_{0}\pnth{-\abs{f}^{2} + \pnth{1 - \frac{2M}{r}}\frac{\abs{f_{r}}^{2} + \abs{p}^{2}}{\Up^{2}}}$
      \item $\displaystyle T(\nu_{\theta},\nu_{\theta}) = T(\nu_{\varphi},\nu_{\varphi}) = -\mu_{0}\pnth{\abs{f}^{2} + \pnth{1 - \frac{2M}{r}}\frac{\abs{f_{r}}^{2} - \abs{p}^{2}}{\Up^{2}}}$
    \end{enumerate}
\end{lem}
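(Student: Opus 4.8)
The plan is to prove Lemma~\ref{SET-Lem-1} by a direct computation in the orthonormal frame $\{\nu_t,\nu_r,\nu_\theta,\nu_\varphi\}$, in the same spirit as the proof of Corollary~\ref{GCor-1}. The computation has three steps: evaluate $df$ on the frame, compute $\abs{df}^2$, and substitute everything into the definition (\ref{SET-Def}) of $T$.

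First I would record how $df$ acts on the frame. Since $f=f(t,r)$ is spherically symmetric, $df = f_t\,dt + f_r\,dr$, so from (\ref{norvec}) we get $df(\nu_\theta) = df(\nu_\varphi) = 0$, $df(\nu_r) = \sqrt{1 - 2M/r}\,f_r$, and $df(\nu_t) = \e^{-V}f_t$; by the definition (\ref{p-Def}) of $p$ the last of these is $df(\nu_t) = \sqrt{1 - 2M/r}\,p$. Because the $\nu_\eta$ are real vector fields, $d\cf(\nu_\eta) = \overline{df(\nu_\eta)}$, and hence for any two frame indices $\eta,\zeta$,
\[
  \pnth{df\otimes d\cf + d\cf\otimes df}(\nu_\eta,\nu_\zeta) = 2\Re\pnth{df(\nu_\eta)\,\overline{df(\nu_\zeta)}}.
\]

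Second, I would compute $\abs{df}^2$, which here denotes the norm of $df$ taken with the inverse metric. Since the inverse of the metric (\ref{metric-2}) has $tt$-component $-\e^{-2V}$ and $rr$-component $1 - 2M/r$,
\[
  \abs{df}^2 = -\e^{-2V}\abs{f_t}^2 + \pnth{1 - \frac{2M}{r}}\abs{f_r}^2 = \pnth{1 - \frac{2M}{r}}\pnth{\abs{f_r}^2 - \abs{p}^2},
\]
where the last equality is again (\ref{p-Def}). This is the step to be careful with: the Lorentzian sign of the $tt$-component of the inverse metric is exactly what converts $+\abs{p}^2$ into $-\abs{p}^2$, and that sign is what distinguishes parts (1) and (3) from part (4).

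Finally, since the frame is orthonormal, $g(\nu_\eta,\nu_\zeta) = \varepsilon_{\eta\zeta} = \Diag(-1,1,1,1)$, so (\ref{SET-Def}) becomes
\[
  T(\nu_\eta,\nu_\zeta) = \mu_0\pnth{\frac{2\Re\pnth{df(\nu_\eta)\,\overline{df(\nu_\zeta)}}}{\Up^2} - \pnth{\frac{\abs{df}^2}{\Up^2} + \abs{f}^2}\varepsilon_{\eta\zeta}}.
\]
Plugging in the values above for the four nonzero pairs $(t,t)$, $(t,r)$, $(r,r)$, $(\theta,\theta)$ and combining the $\abs{p}^2$ terms (for instance, in the $(t,t)$ case $2\abs{p}^2 + (\abs{f_r}^2 - \abs{p}^2) = \abs{f_r}^2 + \abs{p}^2$ inside the bracket over $\Up^2$) yields the four stated formulas; for part (2) one also uses $\Re(p\,\overline{f_r}) = \Re(f_r\,\cp)$. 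The $\nu_\varphi$ components coincide with the $\nu_\theta$ ones because $df$ annihilates both and $\varepsilon_{\theta\theta} = \varepsilon_{\varphi\varphi}$, and the remaining components vanish for the same reason the off-diagonal terms drop out in Corollary~\ref{GCor-1}. I do not expect any substantive obstacle: the work is routine linear algebra in the frame, and the only things that need attention are the complex-conjugation bookkeeping and the Lorentzian sign in $\abs{df}^2$ noted above.
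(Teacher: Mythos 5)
Your proposal is correct and follows essentially the same route as the paper's proof: evaluate $df$ (and hence $d\cf$) using the definition of $p$, compute $\abs{df}^{2} = \pnth{1 - \frac{2M}{r}}\pnth{\abs{f_{r}}^{2} - \abs{p}^{2}}$ via the inverse metric, and substitute into (\ref{SET-Def}) using orthonormality of the frame. The only difference is cosmetic—you evaluate $df$ directly on the frame vectors rather than first writing $df$ in the $dt,dr$ basis as the paper does—and all signs and conjugation steps check out.
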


\noindent The proof of this lemma is left to the appendix.

\begin{lem}\label{EinEq-Lem-1}
    The components of the Einstein equation in the $\nu_{\eta}$ directions result in the following PDEs.
    \begin{align}
      \label{NCpde1} M_{r} &= 4\pi r^{2}\mu_{0}\pnth{\abs{f}^{2} + \pnth{1 - \frac{2M}{r}}\frac{\abs{f_{r}}^{2} + \abs{p}^{2}}{\Up^{2}}} \\
      \label{NCceq1} M_{t} &= \frac{8\pi r^{2}\mu_{0}\e^{V}}{\Up^{2}}\pnth{1 - \frac{2M}{r}}^{3/2}\Re(f_{r}\cp) \displaybreak[0]\\
      \label{NCpde2} V_{r} &= \pnth{1 - \frac{2M}{r}}^{-1}\pnth{\frac{M}{r^{2}} - 4\pi r\mu_{0}\pnth{\abs{f}^{2} - \pnth{1 - \frac{2M}{r}}\frac{\abs{f_{r}}^{2} + \abs{p}^{2}}{\Up^{2}}}} \\
      \notag V_{t}M_{t} &= -r\e^{2V}\Bigg[\pnth{V_{rr} + V_{r}^{2} + \frac{V_{r}}{r}}\pnth{1 - \frac{2M}{r}}^{2}  - \frac{3M_{t}^{2}}{r^{2}\e^{2V}}\pnth{1 - \frac{2M}{r}}^{-1}  - \frac{M_{tt}}{r\e^{2V}} \\
     \label{NCceq2} &\qquad + \pnth{\frac{M}{r} - M_{r}}\pnth{\frac{1}{r^{2}} + \frac{V_{r}}{r}}\pnth{1 - \frac{2M}{r}} 
     + 8\pi \mu_{0}\pnth{1 - \frac{2M}{r}}\pnth{\abs{f}^{2} + \pnth{1 - \frac{2M}{r}}\frac{\abs{f_{r}}^{2} - \abs{p}^{2}}{\Up^{2}}}\Bigg]
    \end{align}
\end{lem}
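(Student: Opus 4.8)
The plan is to obtain (\ref{NCpde1})--(\ref{NCceq2}) by writing out the Einstein equation $G = 8\pi T$ one frame component at a time in the orthonormal basis $\{\nu_{t},\nu_{r},\nu_{\theta},\nu_{\varphi}\}$, reading the left-hand sides off Corollary \ref{GCor-1} and the right-hand sides off Lemma \ref{SET-Lem-1}, and then solving each resulting scalar identity for the quantity displayed on the left of the corresponding PDE. Nothing else enters; in particular the first three equations come straight out, and only the angular component requires any real manipulation.

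First I would take the $(\nu_{t},\nu_{t})$ component: $G(\nu_{t},\nu_{t}) = 8\pi T(\nu_{t},\nu_{t})$ reads $2M_{r}/r^{2} = 8\pi\mu_{0}(\abs{f}^{2} + (1-2M/r)(\abs{f_{r}}^{2}+\abs{p}^{2})/\Up^{2})$, and multiplying by $r^{2}/2$ is exactly (\ref{NCpde1}). Next, the $(\nu_{t},\nu_{r})$ component: $G(\nu_{t},\nu_{r}) = 8\pi T(\nu_{t},\nu_{r})$ becomes $\tfrac{2M_{t}}{r^{2}\e^{V}}(1-2M/r)^{-1/2} = \tfrac{16\pi\mu_{0}}{\Up^{2}}(1-2M/r)\Re(f_{r}\cp)$, and multiplying by $\tfrac{r^{2}\e^{V}}{2}(1-2M/r)^{1/2}$ gives (\ref{NCceq1}). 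For the $(\nu_{r},\nu_{r})$ component, $G(\nu_{r},\nu_{r}) = 8\pi T(\nu_{r},\nu_{r})$ reads $-2M/r^{3} + \tfrac{2V_{r}}{r}(1-2M/r) = 8\pi\mu_{0}(-\abs{f}^{2} + (1-2M/r)(\abs{f_{r}}^{2}+\abs{p}^{2})/\Up^{2})$; moving the $M$-term to the right and dividing by $\tfrac{2}{r}(1-2M/r)$ isolates $V_{r}$ and, after recognizing $M/r^{2} - 4\pi r\mu_{0}(\abs{f}^{2} - (1-2M/r)(\abs{f_{r}}^{2}+\abs{p}^{2})/\Up^{2})$ as the bracket in (\ref{NCpde2}), finishes that equation.

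The only step with any length is the angular component. Here I would equate $G(\nu_{\theta},\nu_{\theta})$ from part (\ref{GCor-1d}) of Corollary \ref{GCor-1} with $8\pi T(\nu_{\theta},\nu_{\theta})$ from part (4) of Lemma \ref{SET-Lem-1}, then isolate the unique term carrying the product $V_{t}M_{t}$, namely $\tfrac{V_{t}M_{t}}{r\e^{2V}}(1-2M/r)^{-1}$, on one side. Multiplying the entire equation through by $r\e^{2V}(1-2M/r)$ then yields (\ref{NCceq2}): the $(V_{rr}+V_{r}^{2}+V_{r}/r)$ term acquires a factor $(1-2M/r)^{2}$, the $(M/r - M_{r})$ term keeps a factor of $(1-2M/r)$, the $M_{tt}$ term sheds its $\e^{2V}$, the $M_{t}^{2}$ term retains a single inverse factor of $(1-2M/r)$, and $8\pi T(\nu_{\theta},\nu_{\theta})$ turns into $8\pi\mu_{0}(1-2M/r)(\abs{f}^{2} + (1-2M/r)(\abs{f_{r}}^{2}-\abs{p}^{2})/\Up^{2})$, everything collected under the overall factor $-r\e^{2V}$.

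I do not expect a genuine obstacle: the argument is bookkeeping in substitution and algebraic rearrangement, and the only hazard is a mislaid power of $(1-2M/r)$ or $\e^{V}$ in the angular identity. It is worth noting in passing that this computation uses only the four nonzero frame components guaranteed by Corollary \ref{GCor-1}, so the four PDEs above capture the full content of the Einstein equation for this metric and stress-energy tensor; nothing in the proof, however, depends on that remark.
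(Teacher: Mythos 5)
Your proposal is correct and follows essentially the same route as the paper: the paper's proof simply states that equations (\ref{NCpde1})--(\ref{NCceq2}) follow from the Einstein equation (\ref{EinEq-1}) by equating the $(\nu_{t},\nu_{t})$, $(\nu_{t},\nu_{r})$, $(\nu_{r},\nu_{r})$, and $(\nu_{\theta},\nu_{\theta})$ components of Corollary \ref{GCor-1} with $8\pi$ times the corresponding components from Lemma \ref{SET-Lem-1}, which is precisely the bookkeeping you carry out. Your explicit algebra, including the rearrangement isolating $V_{t}M_{t}$ in the angular component, checks out against (\ref{NCceq2}).
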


\begin{proof}  This follows directly from the Einstein equation (\ref{EinEq-1}), Corollary \ref{GCor-1}, and Lemma \ref{SET-Lem-1}.
Equations (\ref{NCpde1})-(\ref{NCceq2}) follow from the $(\nu_{t},\nu_{t})$, $(\nu_{t},\nu_{r})$, $(\nu_{r},\nu_{r})$, and $(\nu_{\theta},\nu_{\theta})$ (or $(\nu_{\varphi},\nu_{\varphi})$) components of the Einstein equation, respectively. \end{proof}

The Klein-Gordon equation and the fact that
\begin{equation}\label{glap-def}
  \glap{g}f = \frac{1}{\sqrt{\abs{g}}}\p_{\eta}\pnth{\sqrt{\abs{g}}g^{\eta\om}\p_{\om}f}.
\end{equation}
yield the following lemma.
  \begin{lem}\label{KG-Lem-1}
    The Klein-Gordon equation in this metric yields the following PDE.
    \begin{equation}\label{NCpde4}
      p_{t} = \e^{V}\pnth{-\Up^{2}f\pnth{1 - \frac{2M}{r}}^{-1/2} + \frac{2f_{r}}{r}\sqrt{1 - \frac{2M}{r}}} + \p_{r}\pnth{\e^{V}f_{r}\sqrt{1 - \frac{2M}{r}}}
    \end{equation}
  \end{lem}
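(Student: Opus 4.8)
The plan is to evaluate the d'Alembertian formula (\ref{glap-def}) directly on the spherically symmetric scalar field $f$ in the metric (\ref{metric-2}) and then impose the Klein-Gordon equation (\ref{EKG-2}). First I would record the nonzero metric components $g_{tt} = -\e^{2V}$, $g_{rr} = \pnth{1 - \frac{2M}{r}}^{-1}$, $g_{\theta\theta} = r^{2}$, $g_{\varphi\varphi} = r^{2}\sin^{2}\theta$, their reciprocals $g^{tt} = -\e^{-2V}$, $g^{rr} = 1 - \frac{2M}{r}$, $g^{\theta\theta} = r^{-2}$, $g^{\varphi\varphi} = (r\sin\theta)^{-2}$, and the volume factor $\sqrt{\abs{g}} = \e^{V}\pnth{1 - \frac{2M}{r}}^{-1/2}r^{2}\sin\theta$. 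Since $f$ is spherically symmetric, $f_{\theta} = f_{\varphi} = 0$, so in the sum (\ref{glap-def}) only the $\eta = \om = t$ and $\eta = \om = r$ terms survive.

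Next I would simplify the two surviving terms. In the time term the $t$-independent factor $r^{2}\sin\theta$ pulls out of $\p_{t}$, leaving $r^{2}\sin\theta\;\p_{t}\pnth{-\e^{-V}\pnth{1 - \frac{2M}{r}}^{-1/2}f_{t}}$; here one must not move the $\pnth{1 - \frac{2M}{r}}^{-1/2}$ through $\p_{t}$, since $M$ depends on $t$. The key observation is that the bracket $\e^{-V}\pnth{1 - \frac{2M}{r}}^{-1/2}f_{t}$ is exactly the function $p$ of (\ref{p-Def}), so this term collapses to $-r^{2}\sin\theta\;p_{t}$ --- which is precisely the reason that particular $p$ was introduced, as it makes the resulting equation first order in $t$. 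In the radial term only $\sin\theta$ is constant in $r$, and one is left with $\sin\theta\;\p_{r}\pnth{\e^{V}r^{2}\sqrt{1 - \frac{2M}{r}}\,f_{r}}$. Dividing by $\sqrt{\abs{g}}$ and cancelling the $\sin\theta$ gives
\begin{equation*}
  \glap{g}f = \e^{-V}\sqrt{1 - \frac{2M}{r}}\brkt{-p_{t} + \frac{1}{r^{2}}\p_{r}\pnth{\e^{V}r^{2}\sqrt{1 - \frac{2M}{r}}\,f_{r}}}.
\end{equation*}

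Finally I would set $\glap{g}f = \Up^{2}f$, solve for $p_{t}$, and expand the radial derivative using $\frac{1}{r^{2}}\p_{r}(r^{2}h) = \p_{r}h + \frac{2h}{r}$ with $h = \e^{V}\sqrt{1 - \frac{2M}{r}}\,f_{r}$; the result is precisely (\ref{NCpde4}). The computation is mechanical throughout; the only point requiring care is keeping track of which factors depend on $t$ versus $r$ when distributing the partial derivatives, together with recognizing the definition of $p$ inside the time term.
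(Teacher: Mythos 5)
Your proposal is correct and follows essentially the same route as the paper: compute $\sqrt{\abs{g}}$, apply the coordinate formula (\ref{glap-def}) with only the $t$ and $r$ terms surviving, recognize the time term as $\p_{t}(-p\,r^{2}\sin\theta)$ via the definition (\ref{p-Def}), and then impose (\ref{EKG-2}) and solve for $p_{t}$. The only difference is cosmetic --- you divide by $r^{2}$ and use $\tfrac{1}{r^{2}}\p_{r}(r^{2}h)=\p_{r}h+\tfrac{2h}{r}$, while the paper expands the product rule on the radial term before dividing --- so no further comment is needed.
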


\begin{proof}
  To compute the Laplacian, we will first need the quantity $\sqrt{\abs{g}}$.  We have
  \begin{equation}
    \sqrt{\abs{g}} = \sqrt{\e^{2V}\pnth{1 - \frac{2M}{r}}^{-1}r^{4}\sin^{2}\theta} = \e^{V}\pnth{1-\frac{2M}{r}}^{-1/2}r^{2}\sin\theta
  \end{equation}
  Then since $f = f(t,r)$ and $g$ is diagonal, we have by equation (\ref{glap-def})
  \begin{align}
    \notag \glap{g}f &= \frac{1}{\e^{V}r^{2}\sin\theta}\sqrt{1 - \frac{2M}{r}}\Bigg(\p_{t}\pnth{\e^{V}\pnth{1 - \frac{2M}{r}}^{-1/2}r^{2}\sin\theta\, g^{tt}f_{t}} \\
    \notag &\hspace{1.5 in} + \p_{r}\pnth{\e^{V}\pnth{1 - \frac{2M}{r}}^{-1/2}r^{2}\sin\theta\, g^{rr}f_{r}}\Bigg) \\
    \notag &= \frac{1}{\e^{V}r^{2}\sin\theta}\sqrt{1 - \frac{2M}{r}}\Bigg(\p_{t}\pnth{-p r^{2}\sin\theta} \\
    &\qquad + 2r\e^{V}\sqrt{1 - \frac{2M}{r}}\sin\theta\, f_{r} + r^{2}\sin\theta\, \p_{r}\pnth{\e^{V}f_{r}\sqrt{1 - \frac{2M}{r}}}\Bigg) 
  \end{align}
  Then the Klein-Gordon equation (\ref{EKG-2}) becomes
  \begin{align}
     \Up^{2}f &= -p_{t}\e^{-V}\sqrt{1 - \frac{2M}{r}} + \frac{2f_{r}}{r}\pnth{1 - \frac{2M}{r}} 
    + \e^{-V}\sqrt{1 - \frac{2M}{r}}\, \p_{r}\pnth{\e^{V}f_{r}\sqrt{1 - \frac{2M}{r}}}
  \end{align}
  Solving for $p_{t}$ yields the desired result.
\end{proof}

Then rewrite (\ref{p-Def}) as
\begin{equation} \label{NCpde3}
  f_{t} = p\e^{V}\sqrt{1 - \frac{2M}{r}}
\end{equation}
to yield an evolution equation for $f$.

The next lemma shows that $T$ given in (\ref{SET-Def}) is divergence free.
\begin{lem}\label{divTEKG-Lem}
   Equation (\ref{EKG-2}) implies that $\divx{g}T = 0$, where $T$ is given by (\ref{SET-Def}).
\end{lem}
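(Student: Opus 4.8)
\noindent\emph{Proof strategy.}\quad The plan is to verify $\divx{g}T = 0$ by a direct tensor computation: take the covariant divergence of (\ref{SET-Def}) term by term and track the cancellations. Since $T$ is symmetric it is irrelevant which slot is contracted. It is cleanest to argue in index notation (invoked only here), writing $\abs{f}^{2} = f\cf$ and $\abs{df}^{2} = g^{cd}(\cov_{c}f)(\cov_{d}\cf)$, so that
\[
  T_{ab} = \mu_{0}\pnth{\frac{(\cov_{a}f)(\cov_{b}\cf) + (\cov_{a}\cf)(\cov_{b}f)}{\Up^{2}} - \pnth{\frac{g^{cd}(\cov_{c}f)(\cov_{d}\cf)}{\Up^{2}} + f\cf}\, g_{ab}}.
\]
First I would apply $\cov^{a} = g^{ae}\cov_{e}$, using metric compatibility so that the divergence of a scalar times $g_{ab}$ collapses to the gradient of that scalar.

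Next I would expand each piece with the product rule. The essential structural input is that the second covariant derivative of a scalar is symmetric, $\cov_{a}\cov_{b}f = \cov_{b}\cov_{a}f$, which both turns $\cov^{a}\cov_{a}f$ into $\glap{g}f$ and allows every term carrying a second derivative of $f$ (or of $\cf$) produced by differentiating the bilinear part to be relabeled so as to coincide with a partner term produced by differentiating the trace part. I expect all such second-derivative terms to cancel in pairs, leaving exactly
\[
  \frac{1}{\mu_{0}}\,\cov^{a}T_{ab} = \frac{1}{\Up^{2}}\pnth{(\glap{g}f)(\cov_{b}\cf) + (\glap{g}\cf)(\cov_{b}f)} - \cf\,\cov_{b}f - f\,\cov_{b}\cf .
\]

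Finally I would substitute the Klein-Gordon equation (\ref{EKG-2}), $\glap{g}f = \Up^{2}f$, together with its complex conjugate $\glap{g}\cf = \Up^{2}\cf$ — automatic because $g$, hence $\glap{g}$, is real — into this surviving expression; the factors of $\Up^{2}$ cancel and what remains is identically zero, giving $\divx{g}T = 0$. The argument is essentially bookkeeping, so the "main obstacle" is purely notational: keeping the complex conjugates and the dummy indices straight through the cancellation, and noticing that one genuinely needs the wave equation for \emph{both} $f$ and $\cf$ (which is why the hypothesis may be stated as (\ref{EKG-2}) alone). If one prefers to stay inside the paper's component conventions rather than abstract indices, the same three steps can instead be run in the orthonormal frame $\{\nu_{\eta}\}$, using Lemma \ref{SET-Lem-1} for the components of $T$ and the PDE (\ref{NCpde4}) for the Klein-Gordon equation; but the coordinate-free version above is shorter and makes the cancellations transparent.
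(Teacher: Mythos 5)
Your proof is correct, but it takes a genuinely different route from the paper's. You argue covariantly: expanding $\cov^{a}T_{ab}$ with the product rule, using metric compatibility to reduce the divergence of the $g_{ab}$ term to a gradient, and using the symmetry of second covariant derivatives of scalars to cancel the mixed second-derivative terms in pairs, you are left with
\[
  \frac{1}{\mu_{0}}\,\cov^{a}T_{ab} \;=\; \frac{1}{\Up^{2}}\pnth{(\glap{g}f)\,\cov_{b}\cf + (\glap{g}\cf)\,\cov_{b}f} - \cf\,\cov_{b}f - f\,\cov_{b}\cf,
\]
which vanishes once you insert $\glap{g}f=\Up^{2}f$ and its complex conjugate; this is the standard divergence-free argument for a Klein--Gordon field and, as you note, it uses neither spherical symmetry nor the particular form (\ref{metric-2}) of the metric. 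The paper instead stays entirely inside its orthonormal-frame component formalism: it writes the frame components $\mu$, $\rho$, $P$, $Q$ of $T$ via Lemma \ref{SET-Lem-1}, recalls that $\divx{g}T=0$ is equivalent to the two scalar equations (\ref{divT3}) already derived in the proof of Proposition \ref{EinEq-prop1}, and then verifies by differentiating (\ref{SETEKGFun-1}) and (\ref{SETEKGFun-2}) in $t$, substituting (\ref{NCpde3}), and comparing with the right-hand sides of (\ref{divT3a}) and (\ref{divT3b}), that each reduces to a quantity plus its conjugate which vanishes by (\ref{NCpde4}). Your version is shorter, more transparent about why the cancellations occur, and more general; the paper's version buys explicit formulas and consistency checks on exactly the frame components and the PDE system (\ref{NCpde}) that are reused in the surrounding evolution argument (Lemma \ref{NCpde-Lem-1}). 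Both are complete proofs, and your closing remark correctly identifies that the in-frame computation is the fallback that matches the paper.
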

\noindent This lemma's proof is also included in the appendix.  Now we can prove the following.
\begin{lem}\label{NCpde-Lem-1}
  If $f$, $p$, $M$, and $V$ satisfy (\ref{NCpde1}), (\ref{NCpde2}), (\ref{NCpde4}), and (\ref{NCpde3}), then they also satisfy (\ref{NCceq1}) and (\ref{NCceq2}).
\end{lem}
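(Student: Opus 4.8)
The plan is to recognize this lemma as the Einstein-Klein-Gordon instance of the general constraint-propagation result already recorded as Proposition~\ref{EinEq-prop1}, and to reduce to it. First I would match up the hypotheses. By the $T(\nu_{t},\nu_{t})$ entry of Lemma~\ref{SET-Lem-1} the energy density is $\mu = T(\nu_{t},\nu_{t}) = \mu_{0}\pnth{\abs{f}^{2}+\pnth{1-\tfrac{2M}{r}}\tfrac{\abs{f_{r}}^{2}+\abs{p}^{2}}{\Up^{2}}}$, so~(\ref{NCpde1}) is literally the ODE $M_{r}=4\pi r^{2}\mu$ of~(\ref{EDEq-1-thm}); likewise the $T(\nu_{r},\nu_{r})$ entry gives $P = T(\nu_{r},\nu_{r})$, and a short sign check shows that~(\ref{NCpde2}) is literally $V_{r}=\pnth{1-\tfrac{2M}{r}}^{-1}\pnth{\tfrac{M}{r^{2}}+4\pi r P}$ of~(\ref{PREq-1-thm}). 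Finally, since $p$ is defined by~(\ref{p-Def}), equivalently by~(\ref{NCpde3}), the manipulation carried out in the proof of Lemma~\ref{KG-Lem-1} is reversible, so imposing~(\ref{NCpde3}) and~(\ref{NCpde4}) together is the same as imposing the Klein-Gordon equation~(\ref{EKG-2}).

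Then I would invoke Lemma~\ref{divTEKG-Lem}: since~(\ref{EKG-2}) holds, $\divx{g}T = 0$ for the scalar-field stress-energy tensor~(\ref{SET-Def}). At this point every hypothesis of Proposition~\ref{EinEq-prop1} is in force --- the metric has the form~(\ref{metric-2}), $T$ is the suitable tensor~(\ref{SET-Def}), which has the form~(\ref{SETFun-def}), and $\divx{g}T=0$ holds along with the two ODEs~(\ref{EDEq-1-thm}) and~(\ref{PREq-1-thm}) --- so the full Einstein equation $G=8\pi T$ holds. Reading off its $(\nu_{t},\nu_{r})$ component, using the $G(\nu_{t},\nu_{r})$ entry of Corollary~\ref{GCor-1} and the $T(\nu_{t},\nu_{r})$ entry of Lemma~\ref{SET-Lem-1} and then solving for $M_{t}$, gives exactly~(\ref{NCceq1}); reading off its $(\nu_{\theta},\nu_{\theta})$ component in the same way gives exactly~(\ref{NCceq2}). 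That finishes the reduction.

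Should one prefer a self-contained argument --- or wish to unwind Proposition~\ref{EinEq-prop1} --- I would set $E := G - 8\pi T$. The contracted Bianchi identity gives $\divx{g}G = 0$ and Lemma~\ref{divTEKG-Lem} gives $\divx{g}T = 0$, hence $\divx{g}E = 0$; moreover, by Corollary~\ref{GCor-1} and Lemma~\ref{SET-Lem-1}, in the frame $\{\nu_{t},\nu_{r},\nu_{\theta},\nu_{\varphi}\}$ the only components of $E$ that can be nonzero are $E(\nu_{t},\nu_{t})$, $E(\nu_{t},\nu_{r})$, $E(\nu_{r},\nu_{r})$, and $E(\nu_{\theta},\nu_{\theta})=E(\nu_{\varphi},\nu_{\varphi})$, while by hypothesis $E(\nu_{t},\nu_{t})=0$ and $E(\nu_{r},\nu_{r})=0$. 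Computing the $\p_{t}$-component of $\divx{g}E=0$ in the coordinates of~(\ref{metric-2}) and feeding in $E(\nu_{t},\nu_{t})=E(\nu_{r},\nu_{r})=0$, every potential source term drops out and one is left with a first-order linear homogeneous ODE in $r$ for $h(t,r):=E(\nu_{t},\nu_{r})$; together with $h(t,0)=0$ --- which follows from $M(t,0)=M_{r}(t,0)=M_{rr}(t,0)=0$ in~(\ref{Mder}) and from $f_{r}(t,0)=0$ by the smoothness and spherical symmetry of $f$ --- uniqueness of solutions forces $h\equiv 0$, which is~(\ref{NCceq1}). The $\p_{r}$-component of $\divx{g}E=0$, now with $E(\nu_{t},\nu_{t})=E(\nu_{r},\nu_{r})=E(\nu_{t},\nu_{r})=0$, then collapses through the Christoffel symbols $\Ga^{\theta}_{\theta r}=\Ga^{\varphi}_{\varphi r}=1/r$ to the purely algebraic identity $E(\nu_{\theta},\nu_{\theta})=0$, which is~(\ref{NCceq2}).

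The crux, in either route, is establishing that once the two ``constraint'' components $E(\nu_{t},\nu_{t})$ and $E(\nu_{r},\nu_{r})$ vanish, the $\p_{t}$-component of $\divx{g}E=0$ genuinely reduces to a homogeneous equation for $E(\nu_{t},\nu_{r})$ with no leftover inhomogeneity; this is exactly the closure property that makes ``two evolution equations plus two constraints'' equivalent to the whole Einstein system, and it is where Lemma~\ref{divTEKG-Lem} --- hence the Klein-Gordon equation --- does its work. A secondary subtlety is checking the behavior at $r=0$ sharply enough to conclude that the unique solution of that ODE is the zero solution. A brute-force alternative that bypasses Bianchi is to differentiate~(\ref{NCpde1}) in $t$, substitute~(\ref{NCpde3}),~(\ref{NCpde4}), and $\p_{r}$ of~(\ref{NCpde3}) for $f_{t}$, $p_{t}$, and $f_{rt}$, and match the result against $\p_{r}$ of the right-hand side of~(\ref{NCceq1}); this is more computation, still relies on the same $r=0$ behavior, and~(\ref{NCceq2}) would still have to be extracted from the radial part of $\divx{g}T=0$.
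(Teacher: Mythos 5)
Your main argument is correct and is essentially the paper's own proof: identify (\ref{NCpde1}), (\ref{NCpde2}) with (\ref{EDEq-1-thm}), (\ref{PREq-1-thm}) via Lemma~\ref{SET-Lem-1}, use Lemma~\ref{KG-Lem-1} and Lemma~\ref{divTEKG-Lem} to get $\divx{g}T=0$ from (\ref{NCpde3})--(\ref{NCpde4}), invoke Proposition~\ref{EinEq-prop1} to recover the full Einstein equation, and read off the remaining components as (\ref{NCceq1}) and (\ref{NCceq2}). The supplementary Bianchi-identity sketch is an optional alternative and is not needed for the reduction.
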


\begin{proof}
 By Lemma \ref{KG-Lem-1}, equations (\ref{NCpde4}) and (\ref{NCpde3}) are equivalent to the Klein-Gordon equation (\ref{EKG-2}) and hence, by Lemma \ref{divTEKG-Lem}, imply that $\divx{g}T = 0$, where $T$ is given by (\ref{SET-Def}).  By equation (\ref{SETFun-def}) and Lemma \ref{SET-Lem-1}, equations (\ref{NCpde1}) and (\ref{NCpde2}) are simply equations (\ref{EDEq-1-thm}) and (\ref{PREq-1-thm}) for the $T$ given by (\ref{SET-Def}).  Then by Proposition~\ref{EinEq-prop1}, $f$, $p$, $M$, and $V$ also satisfy the entire Einstein equation in this metric and for this stress energy tensor.  By Lemma \ref{EinEq-Lem-1}, this implies that $f$, $p$, $M$, and $V$ satisfy equations (\ref{NCceq1}) and (\ref{NCceq2}).
\end{proof}

Lemmas \ref{EinEq-Lem-1}, \ref{KG-Lem-1}, and \ref{NCpde-Lem-1} and equation (\ref{NCpde3}) prove the following proposition.
\begin{prop}
  Solving the following PDEs for $f(t,r)$, $p(t,r)$, $V(t,r)$, and $M(t,r)$ with consistent initial data is necessary and sufficient to solve the Einstein-Klein-Gordon equations (\ref{EKG-1}) and (\ref{EKG-2}) with a metric of the form (\ref{metric-2}) on the same initial data.
  \begin{subequations} \label{NCpde}
  \begin{align}
    \label{NCpde1a} M_{r} &= 4\pi r^{2}\mu_{0}\pnth{\abs{f}^{2} + \pnth{1 - \frac{2M}{r}}\frac{\abs{f_{r}}^{2} + \abs{p}^{2}}{\Up^{2}}} \\
    \label{NCpde2a} V_{r} &= \pnth{1 - \frac{2M}{r}}^{-1}\pnth{\frac{M}{r^{2}} - 4\pi r\mu_{0}\pnth{\abs{f}^{2} - \pnth{1 - \frac{2M}{r}}\frac{\abs{f_{r}}^{2} + \abs{p}^{2}}{\Up^{2}}}} \displaybreak[0]\\
    \label{NCpde3a} f_{t} &= p\e^{V}\sqrt{1 - \frac{2M}{r}} \\
    \label{NCpde4a} p_{t} &= \e^{V}\pnth{-\Up^{2}f\pnth{1 - \frac{2M}{r}}^{-1/2} + \frac{2f_{r}}{r}\sqrt{1 - \frac{2M}{r}}} + \p_{r}\pnth{\e^{V}f_{r}\sqrt{1 - \frac{2M}{r}}}
  \end{align}
  \end{subequations}
\end{prop}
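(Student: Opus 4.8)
The plan is to prove the two implications separately, in each direction reducing the claim to the lemmas already established in this section; almost all of the analytic content has been front-loaded into Lemmas \ref{EinEq-Lem-1}, \ref{KG-Lem-1}, and \ref{NCpde-Lem-1} (the last of which itself repackages Proposition \ref{EinEq-prop1}), so the argument is essentially an assembly. For sufficiency, suppose $f$, $p$, $V$, $M$ satisfy (\ref{NCpde1a})--(\ref{NCpde4a}) with consistent initial data. First I would invoke Lemma \ref{NCpde-Lem-1} to conclude that the remaining equations (\ref{NCceq1}) and (\ref{NCceq2}) of Lemma \ref{EinEq-Lem-1} also hold, so that all four of the $\nu_{\eta}$-frame components of the Einstein equation are satisfied. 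Since by Corollary \ref{GCor-1} and Lemma \ref{SET-Lem-1} the only nonzero components of both $G$ and $8\pi T$ in the orthonormal frame $\{\nu_{t},\nu_{r},\nu_{\theta},\nu_{\varphi}\}$ are the $(t,t)$, $(t,r)$, $(r,r)$, and $(\theta,\theta)=(\varphi,\varphi)$ components, and Lemma \ref{EinEq-Lem-1} matches precisely these, the full tensor identity $G = 8\pi T$, i.e.\ (\ref{EKG-1}), follows. Then, combining (\ref{NCpde4a}) with (\ref{NCpde3a}) — the latter being just the defining relation (\ref{p-Def}) solved for $f_{t}$ — and applying Lemma \ref{KG-Lem-1}, we recover the Klein--Gordon equation (\ref{EKG-2}). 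Hence the Einstein--Klein--Gordon system holds on the given data.

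For necessity, suppose $(N,g,f)$ solves (\ref{EKG-1})--(\ref{EKG-2}) with $g$ of the form (\ref{metric-2}). Equation (\ref{EKG-1}) is the Einstein equation with stress-energy tensor (\ref{SET-Def}), so Lemma \ref{EinEq-Lem-1} immediately produces (\ref{NCpde1a})--(\ref{NCpde2a}) (along with (\ref{NCceq1})--(\ref{NCceq2})); equation (\ref{EKG-2}) is the Klein--Gordon equation, so Lemma \ref{KG-Lem-1} produces (\ref{NCpde4a}); and (\ref{NCpde3a}) is nothing but the definition (\ref{p-Def}) of $p$. Thus the four PDEs hold on the prescribed initial data, completing this direction.

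The point that needs genuine care — and the only real obstacle — is the interplay between the constraint equations (\ref{NCpde1a})--(\ref{NCpde2a}), which carry no time derivatives and are to be solved as radial ODEs at each fixed $t$ (with the boundary behavior $M(t,0)=0$ recorded in (\ref{Mder})), and the true evolution equations (\ref{NCpde3a})--(\ref{NCpde4a}). One must know that imposing the constraints on the initial slice and then evolving by (\ref{NCpde3a})--(\ref{NCpde4a}) keeps the constraints satisfied at all later times; equivalently, that the system (\ref{NCpde}) is internally consistent rather than overdetermined. This is exactly what Proposition \ref{EinEq-prop1} supplies, since by Lemma \ref{divTEKG-Lem} the Klein--Gordon equation forces $\divx{g} T = 0$, and Lemma \ref{NCpde-Lem-1} is precisely the statement that this propagation holds for the scalar-field $T$ of (\ref{SET-Def}). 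Once that is in hand, everything else is bookkeeping: tracking which component of which tensor equation yields which PDE, and observing that a tensor is determined by its components in the orthonormal frame $\{\nu_{t},\nu_{r},\nu_{\theta},\nu_{\varphi}\}$.
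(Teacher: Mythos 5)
Your proposal is correct and follows essentially the same route as the paper, which proves the proposition simply by citing Lemmas \ref{EinEq-Lem-1}, \ref{KG-Lem-1}, \ref{NCpde-Lem-1} and equation (\ref{NCpde3}); your write-up just makes that assembly explicit in both directions, including the constraint-propagation point that the paper delegates to Proposition \ref{EinEq-prop1} via Lemma \ref{NCpde-Lem-1}.
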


Thus these equations are effectively the spherically symmetric Einstein-Klein-Gordon equations (specifically though only in the metric (\ref{metric-2})).

\section{Applications and Conclusion}

There are a great deal of applications to the collection of known results presented here.  This is especially true since, as stated before, spherical symmetry is usually the first testbed of any problem in general relativity, whether it be in regards to more classical topics such as the spacetime in the presence of a star, black holes, electromagnetism, and perfect fluids, or more recent ideas like dark matter or relativistic fluids.

One particular application which is well tailored to the information presented here has already been mentioned, namely, wave dark matter, that is, dark matter as a scalar field.  The entire previous section is essentially the basics of scalar fields in spherical symmetry and as such, it can be directly applied to that problem.  The author is currently working on this problem and of all the different coordinate systems and metric functions that we have used, this Newtonian compatible choice yields by far the simplest system of PDEs that we have come across that solve the spherically symmetric Einstein-Klein-Gordon equations.

An important step in the study of wave dark matter is to make comparisons to observations, especially in order to determine the value of the parameter $\Up$, the fundamental constant in the Einstein-Klein-Gordon equations.  Even showing the existence of some values of $\Up$ that are consistent with observed data would be an important contribution to the wave dark matter model.  One of the best places to start making these comparisons is with dwarf spheroidal galaxies, which are small, roughly spherically symmetric, and dark matter dominated.  The system (\ref{NCpde}) which solves the spherically symmetric Einstein-Klein-Gordon equations is particularly useful in this regard.  We hope to use these equations to generate solutions whose mass distribution is similar to those observered in these dwarf spheroidal galaxies.  The ensuing comparison should help us get an initial estimate on the value of  $\Up$.

This is just one of the many possible applications of spherical symmetry and the results surveyed in this paper.  With the importance of spherically symmetric spacetimes, there are sure to be many more applications in the future.  This paper is designed to be a stepping off point for these applications and can serve as a refresher or a general reference for the seasoned researcher or a crash course for a graduate student or someone using spherical symmetry in detail for the first time.

\appendix

\section{Proofs}\label{Proofs}

In this appendix, we present the proofs which were omitted in the above sections.  Every proof of a proposition or lemma from above will refer back to the number of the statement it is proving.  We also introduce a new lemma here which is useful for proving some of the above statements.  We will present all of the proofs here in the order with which their corresponding statements are given above.

\begin{proof}[Proof of Proposition \ref{mH-Prop}]
The Hawking mass of a metric sphere is defined as
\begin{equation}
  m_{H}(\Si_{t,r}) = \sqrt{\frac{\abs{\Si_{t,r}}}{16\pi}}\pnth{1 - \frac{1}{16\pi}\int_{\Si_{t,r}} g\pnth{\Hv,\Hv}\, dA}
\end{equation}
where $\abs{\Si_{t,r}}$ is the surface area of the metric sphere, $\Hv$ is the mean curvature vector of the sphere in the spacetime, and $dA$ is the volume element on the sphere \cites{Hawk68,Bray07}.  By the definition of our radial coordinate $r$, we have that $\abs{\Si_{t,r}}=4\pi r^{2}$, but it is also easily computed in this metric since
\begin{equation}\label{dA}
  dA = \sqrt{\abs{d\si^{2}}}\, d\theta \, d\varphi = \sqrt{r^{4} \sin^{2}\theta}\, d\theta \, d\varphi = r^{2}\sin \theta\, d\theta \, d\varphi,
\end{equation}
which yields that
\begin{equation}\label{Area}
  \abs{\Si_{t,r}} = \int_{\Si_{t,r}} dA = \int_{0}^{2\pi}\int_{0}^{\pi} r^{2}\sin \theta\, d\theta \, d\varphi = \int_{0}^{2\pi} 2r^{2}\, d\varphi = 4\pi r^{2}.
\end{equation}
To compute the mean curvature vector, $\Hv$, we have that
\begin{equation} \label{MCV-1}
  \Hv = \ga^{jk}\II(\p_{j},\p_{k})
\end{equation}
where $j,k \in \{\theta,\varphi\}$, $\ga$ is the metric on $\Si_{t,r}$ (note that we have reused the variable $\ga$ here to refer to the metric on the sphere and not to the metric on the $t=constant$ hypersurfaces as we did in the section describing the framework of numerical relativity), and $\II$ is the second fundamental form tensor, which sends a pair of vectors tangent to the sphere to a vector normal to the sphere.  It is defined as follows for all $X,Y \in T\Si_{t,r}$
\begin{equation}
  \II(X,Y) = \cov_{X}Y -\, ^{2}\cov_{X}Y
\end{equation}
where $\cov$ is the covariant derivative operator on $N$ and $^{2}\cov$ is the induced covariant derivative operator on $\Si_{t,r}$.  Since $\ga$ is diagonal, we only need to concern ourselves with the diagonal components of this tensor in order to compute $\Hv$.  To perform these computations, first note that the Christoffel symbols on the sphere in these coordinates have the following property (note that a superscript of 2 will always denote that that quantity corresponds to the sphere, also for the following computations, roman letters will denote subscripts in $\{\theta,\varphi\}$, while Greek letters will denote subscripts ranging over all four coordinates).  Since both the radial and time directions are normal to the sphere,
\begin{align}
  \Ga_{jk}^{\ \ \l} &= \frac{1}{2}g^{\l\eta}(g_{j\eta,k} + g_{k\eta,j} - g_{jk,\eta}) = \frac{1}{2}g^{\l m}(g_{jm,k} + g_{km,j} - g_{jk,m}) 
  = \frac{1}{2}\ga^{\l m}(g_{jm,k} + g_{km,j} - g_{jk,m}) = \,^{2}\Ga_{jk}^{\ \ \l}.
\end{align}
Thus we need not distinguish between the Christoffel symbols for the metric on the sphere and those on the entire manifold.  We have then that
\begin{equation}
  \II(\p_{j},\p_{j}) = \cov_{j}\p_{j} -\, ^{2}\cov_{j}\p_{j} = \Ga_{jj}^{\ \ \ \eta}\, \p_{\eta} -\, ^{2}\Ga_{jj}^{\ \ \ k}\, \p_{k} = \Ga_{jj}^{\ \ \ t}\, \p_{t} + \Ga_{jj}^{\ \ \ r}\, \p_{r}.
\end{equation}
Then we need to compute the above Christoffel symbols.  We have that
\begin{subequations} \label{Chistoffel-1}
\begin{align}
  \Ga_{\theta\theta}^{\ \ \ t} &= \frac{1}{2}g^{t\eta}(g_{\theta \eta,\theta } + g_{\theta \eta,\theta } - g_{\theta \theta ,\eta}) = \frac{1}{2}g^{tt}(- g_{\theta \theta ,t}) = -\frac{1}{2 \e^{2V}}(-\p_{t}(r^{2})) = 0 \\
  \Ga_{\theta\theta}^{\ \ \ r} &= \frac{1}{2}g^{r\eta}(g_{\theta \eta,\theta } + g_{\theta \eta,\theta } - g_{\theta \theta ,\eta}) = \frac{1}{2}g^{rr}(- g_{\theta \theta ,r}) 
  = \frac{1}{2}\pnth{1-\frac{2M}{r}}(-\p_{r}(r^{2})) = -r\pnth{1 - \frac{2M}{r}} \\
  \Ga_{\varphi\varphi}^{\ \ \ t} &= \frac{1}{2}g^{t\eta}(g_{\varphi \eta,\varphi } + g_{\varphi \eta,\varphi } - g_{\varphi \varphi ,\eta}) = \frac{1}{2}g^{tt}(- g_{\varphi \varphi ,t}) = -\frac{1}{2 \e^{2V}}(-\p_{t}(r^{2}\sin^{2}\theta)) = 0 \\
  \Ga_{\varphi\varphi}^{\ \ \ r} &= \frac{1}{2}g^{r\eta}(g_{\varphi \eta,\varphi } + g_{\varphi \eta,\varphi } - g_{\varphi \varphi ,\eta}) = \frac{1}{2}g^{rr}(- g_{\varphi \varphi ,r}) 
  = \frac{1}{2}\pnth{1-\frac{2M}{r}}(-\p_{r}(r^{2}\sin^{2}\theta)) = -r\pnth{1 - \frac{2M}{r}}\sin^{2}\theta.
\end{align}
\end{subequations}
Then we have that
\begin{align}
  \II(\p_{\theta},\p_{\theta}) &= \Ga_{\theta\theta}^{\ \ \ t}\, \p_{t} + \Ga_{\theta\theta}^{\ \ \ r}\, \p_{r} = -r\pnth{1 - \frac{2M}{r}}\, \p_{r} \\
  \II(\p_{\varphi},\p_{\varphi}) &= \Ga_{\varphi\varphi}^{\ \ \ t}\, \p_{t} + \Ga_{\varphi\varphi}^{\ \ \ r}\, \p_{r} = -r\pnth{1 - \frac{2M}{r}}\sin^{2}\theta\, \p_{r}.
\end{align}
This makes (\ref{MCV-1}) become
\begin{equation}
  \label{MCV-2} \Hv = \ga^{\theta\theta}\II(\p_{\theta},\p_{\theta}) + \ga^{\varphi\varphi}\II(\p_{\varphi},\p_{\varphi})
  = -\frac{2}{r}\pnth{1 - \frac{2M}{r}}\, \p_{r}.
\end{equation}
Finally, we can compute the Hawking mass as follows.
\begin{align}
  \notag m_{H}(\Si_{t,r}) &= \sqrt{\frac{\abs{\Si_{t,r}}}{16\pi}}\pnth{1 - \frac{1}{16\pi}\int_{\Si_{t,r}} g\pnth{\Hv,\Hv}\, dA} \\
  \notag &= \sqrt{\frac{4\pi r^{2}}{16\pi}}\pnth{1 - \frac{1}{16\pi}\int_{\Si_{t,r}} g\pnth{-\frac{2}{r}\pnth{1 - \frac{2M}{r}}\, \p_{r},\, -\frac{2}{r}\pnth{1 - \frac{2M}{r}}\, \p_{r}}\, dA} \displaybreak[0]\\
  \notag &= \frac{r}{2}\pnth{1 - \frac{1}{16\pi}\int_{0}^{2\pi}\int_{0}^{\pi} \frac{4}{r^{2}}\pnth{1 - \frac{2M}{r}}^{2}g(\p_{r},\p_{r}) r^{2}\sin\theta \, d\theta\, d\varphi} \displaybreak[0]\\
  \notag &= \frac{r}{2}\pnth{1 - \frac{1}{16\pi}\int_{0}^{2\pi}\int_{0}^{\pi} 4\pnth{1 - \frac{2M}{r}} \sin\theta \, d\theta\, d\varphi} \\
  \label{mH-1} m_{H}(\Si_{t,r}) &= M
\end{align}
\end{proof}

To prove Lemma \ref{RicLem-2}, we will utilize the following additional lemma.

\begin{lem} \label{RicLem-1}
  For all $\eta \in \{t,r,\theta,\varphi\}$, let $\nu_{\eta}$ be as defined by equation (\ref{norvec}).  Then the following are true.
    \begin{enumerate}
      \item\label{RicLem-1a} For all $\eta \in \{t,r,\theta,\varphi\}$, $\Ric(\nu_{\eta},\nu_{\eta}) = g(\nu_{\eta},\nu_{\eta})g^{\eta\eta}\Ric_{\eta\eta}$.
      \item\label{RicLem-1b} $\displaystyle R = \sum_{\eta} g(\nu_{\eta},\nu_{\eta})\Ric(\nu_{\eta},\nu_{\eta})$
    \end{enumerate}
\end{lem}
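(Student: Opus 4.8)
The plan is to exploit two elementary structural facts: each frame vector $\nu_{\eta}$ defined in (\ref{norvec}) is merely a scalar multiple of the corresponding coordinate vector field $\p_{\eta}$, and the metric (\ref{metric-2}) is diagonal in the $(t,r,\theta,\varphi)$ chart. Everything then reduces to bilinearity of $\Ric$ together with the relation $g^{\eta\eta} = 1/g_{\eta\eta}$ (no sum) that holds for a diagonal metric.

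First I would write $\nu_{\eta} = c_{\eta}\p_{\eta}$ with the suspended summation convention, reading the conformal factors $c_{t} = \e^{-V}$, $c_{r} = \sqrt{1 - 2M/r}$, $c_{\theta} = 1/r$, $c_{\varphi} = 1/(r\sin\theta)$ directly off (\ref{norvec}). Since the Ricci tensor is bilinear, $\Ric(\nu_{\eta},\nu_{\eta}) = c_{\eta}^{2}\,\Ric(\p_{\eta},\p_{\eta}) = c_{\eta}^{2}\,\Ric_{\eta\eta}$. Likewise $g(\nu_{\eta},\nu_{\eta}) = c_{\eta}^{2}g_{\eta\eta}$, and because $g$ is diagonal we have $g^{\eta\eta} = 1/g_{\eta\eta}$, so that $g(\nu_{\eta},\nu_{\eta})g^{\eta\eta} = c_{\eta}^{2}$. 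Comparing the two computations yields part (\ref{RicLem-1a}) at once. As a sanity check on signs, for $\eta = t$ this reads $g(\nu_{t},\nu_{t})g^{tt} = (-1)(-\e^{-2V}) = \e^{-2V} = c_{t}^{2}$, consistent with $\nu_{t}$ being a unit timelike vector.

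For part (\ref{RicLem-1b}) I would start from the coordinate formula $R = g^{\alpha\beta}\Ric_{\alpha\beta}$ and observe that, because $g$ is diagonal, the off-diagonal contributions vanish (in particular the term involving $\Ric_{tr}$, which is itself nonzero by Lemma \ref{RicLem-2} but is killed by $g^{tr} = 0$), leaving $R = \sum_{\eta} g^{\eta\eta}\Ric_{\eta\eta}$. Next I note that $g(\nu_{\eta},\nu_{\eta}) = \pm 1$ for each $\eta$, hence $g(\nu_{\eta},\nu_{\eta})^{2} = 1$; multiplying the identity from part (\ref{RicLem-1a}) through by $g(\nu_{\eta},\nu_{\eta})$ gives $g(\nu_{\eta},\nu_{\eta})\Ric(\nu_{\eta},\nu_{\eta}) = g^{\eta\eta}\Ric_{\eta\eta}$, and summing over $\eta$ reproduces exactly the displayed expression for $R$.

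There is no genuine obstacle here; the proof is a couple of lines. The only things requiring care are keeping the suspended summation convention straight and remembering that $\Ric_{tr} \neq 0$ yet contributes nothing to $R$ since the metric has no $dt\,dr$ cross term. This reformulation is precisely what makes the subsequent extraction of Lemma \ref{RicLem-2} and Corollary \ref{GCor-1} a matter of routine (if lengthy) computation.
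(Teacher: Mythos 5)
Your proof is correct and follows essentially the same route as the paper: part (\ref{RicLem-1a}) by bilinearity of $\Ric$ together with the diagonal metric relation $g^{\eta\eta} = 1/g_{\eta\eta}$, and part (\ref{RicLem-1b}) by reducing the trace $R = g^{\alpha\beta}\Ric_{\alpha\beta}$ to the diagonal sum and using $g(\nu_{\eta},\nu_{\eta}) = \pm 1$. Writing the frame uniformly as $\nu_{\eta} = c_{\eta}\p_{\eta}$ is merely a cleaner packaging of the paper's ``by similar arguments'' step, so nothing further is needed.
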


\begin{proof}
Since $\{\nu_{\eta}\}$ is a frame field, we have that
\begin{equation}
  \Ric(\nu_{t},\nu_{t}) =  \e^{-2V}\Ric(\p_{t},\p_{t}) = -g^{tt}\Ric_{tt} = g(\nu_{t},\nu_{t})g^{tt}\Ric_{tt}.
\end{equation}
By similar arguments, we have that for all $\eta \in \{t,r,\theta,\varphi\}$,
\begin{equation}
  \Ric(\nu_{\eta},\nu_{\eta}) = g(\nu_{\eta},\nu_{\eta})g^{\eta\eta}\Ric_{\eta\eta}
\end{equation}
which proves \ref{RicLem-1a}.

To prove \ref{RicLem-1b}, we use the first result and find that since $g$ is diagonal and $g(\nu_{\eta},\nu_{\eta})=\pm 1$, we have that
\begin{equation}
  R = \sum_{\eta\om} g^{\eta\om}\Ric_{\eta\om} = \sum_{\eta} g^{\eta\eta}\Ric_{\eta\eta} = \sum_{\eta} \frac{\Ric(\nu_{\eta},\nu_{\eta})}{g(\nu_{\eta},\nu_{\eta})} = \sum_{\eta} g(\nu_{\eta},\nu_{\eta})\Ric(\nu_{\eta},\nu_{\eta}).
\end{equation}
Fact \ref{RicLem-1b} also follows from a well understood more general property of frame fields and traces of tensors.
\end{proof}

\noindent With this, we prove Lemma \ref{RicLem-2}.

\begin{proof}[Proof of Lemma \ref{RicLem-2}]
In this proof, all indices range over the coordinates $\{t,r,\theta,\varphi\}$.  By Lemma \ref{RicLem-1} and the well known formula for the components of the Riemann curvature tensor, $\R$, in terms of the Christoffel symbols
\begin{equation}
  \R_{jk\l}^{\ \ \ \ m} = \Ga_{j\l\ \ ,k}^{\ \ m} - \Ga_{k\l\ \ ,j}^{\ \ m} + \Ga_{j\l}^{\ \ s}\Ga_{ks}^{\ \ m} - \Ga_{k\l}^{\ \ s}\Ga_{js}^{\ \ m}.
\end{equation}
Since $\Ric_{jk} = \R_{j\l k}^{\ \ \ \ \l}$, we have that
\begin{subequations} \label{RicEq-1}
  \begin{align}
    \Ric(\nu_{t},\nu_{t}) &= g(\nu_{t},\nu_{t})g^{tt}\Ric_{tt} 
    =  \e^{-2V}\pnth{\Ga_{tt\ ,k}^{\ \ k} - \Ga_{kt\ ,t}^{\ \ k} + \Ga_{tt}^{\ \ s}\Ga_{ks}^{\ \ k} - \Ga_{kt}^{\ \ s}\Ga_{ts}^{\ \ k}} \displaybreak[0]\\
    \Ric(\nu_{t},\nu_{r}) &=  \e^{-V}\sqrt{1 - \frac{2M}{r}}\, \Ric_{tr} 
    =  \e^{-V}\sqrt{1 - \frac{2M}{r}} \pnth{\Ga_{tr\ ,k}^{\ \ k} - \Ga_{kr\ ,t}^{\ \ k} + \Ga_{tr}^{\ \ s}\Ga_{ks}^{\ \ k} - \Ga_{kr}^{\ \ s}\Ga_{ts}^{\ \ k}} \displaybreak[0]\\
    \Ric(\nu_{r},\nu_{r}) &= g(\nu_{r},\nu_{r})g^{rr}\Ric_{rr} = \pnth{1 - \frac{2M}{r}}\Ric_{rr} 
    = \pnth{1 - \frac{2M}{r}} \pnth{\Ga_{rr\ ,k}^{\ \ k} - \Ga_{kr\ ,r}^{\ \ k} + \Ga_{rr}^{\ \ s}\Ga_{ks}^{\ \ k} - \Ga_{kr}^{\ \ s}\Ga_{rs}^{\ \ k}} \displaybreak[0]\\
    \Ric(\nu_{\theta},\nu_{\theta}) &= g(\nu_{\theta},\nu_{\theta})g^{\theta\theta}\Ric_{\theta\theta} = \frac{1}{r^{2}}\Ric_{\theta\theta} 
    = \frac{1}{r^{2}} \pnth{\Ga_{\theta \theta \ ,k}^{\ \ k} - \Ga_{k\theta \ ,\theta }^{\ \ k} + \Ga_{\theta \theta }^{\ \ s}\Ga_{ks}^{\ \ k} - \Ga_{k\theta }^{\ \ s}\Ga_{\theta s}^{\ \ k}} \displaybreak[0]\\
    \Ric(\nu_{\varphi},\nu_{\varphi}) &= g(\nu_{\varphi},\nu_{\varphi})g^{\varphi\varphi}\Ric_{\varphi\varphi} = \frac{1}{r^{2}\sin^{2} \theta}\Ric_{\varphi\varphi} 
    = \frac{1}{r^{2}\sin^{2}\theta} \pnth{\Ga_{\varphi \varphi \ ,k}^{\ \ \ k} - \Ga_{k\varphi \ ,\varphi }^{\ \ \ k} + \Ga_{\varphi \varphi }^{\ \ \ s}\Ga_{ks}^{\ \ k} - \Ga_{k\varphi }^{\ \ \ s}\Ga_{\varphi s}^{\ \ \ k}}.
  \end{align}
\end{subequations}
Then to compute these, we need the Christoffel symbols.
Recall that the formula for the Christoffel symbols in terms of the metric components is the following
\begin{equation} \label{CS-Def}
  \Ga_{jk}^{\ \ \l} = \frac{1}{2}g^{\l m} (g_{jm,k} + g_{km,j} - g_{jk,m})
\end{equation}
Using (\ref{CS-Def}) and the fact that $g$ is diagonal, we have that
\begin{subequations} \label{CS-1}
\begin{align}
  \Ga_{jk}^{\ \ t} &= \frac{1}{2}g^{t m}(g_{jm,k} + g_{km,j} - g_{jk,m}) = \frac{1}{2}g^{tt}(g_{jt,k} + g_{kt,j} - g_{jk,t}) 
  = -\frac{1}{2 \e^{2V}}(g_{jt,k} + g_{kt,j} - g_{jk,t}) \displaybreak[0]\\
  \Ga_{jk}^{\ \ r} &= \frac{1}{2}g^{r m}(g_{jm,k} + g_{km,j} - g_{jk,m}) = \frac{1}{2}g^{rr}(g_{jr,k} + g_{kr,j} - g_{jk,r}) 
  = \frac{1}{2}\pnth{1 - \frac{2M}{r}} (g_{jr,k} + g_{kr,j} - g_{jk,r}) \displaybreak[0]\\
  \Ga_{jk}^{\ \ \theta}&=  \frac{1}{2}g^{\theta m}(g_{jm,k} + g_{km,j} - g_{jk,m}) = \frac{1}{2}g^{\theta\theta}(g_{j\theta,k} + g_{k\theta,j} - g_{jk,\theta}) 
  = \frac{1}{2r^{2}} (g_{j\theta,k} + g_{k\theta,j} - g_{jk,\theta}) \displaybreak[0]\\
  \notag \Ga_{jk}^{\ \ \varphi} &= \frac{1}{2}g^{\varphi m}(g_{jm,k} + g_{km,j} - g_{jk,m}) = \frac{1}{2}g^{\varphi\varphi}(g_{j\varphi,k} + g_{k\varphi,j} - g_{jk,\varphi}) \\
  &= \frac{1}{2r^{2}\sin^{2}\theta} (g_{j\varphi,k} + g_{k\varphi,j} - g_{jk,\varphi}) = \frac{1}{2r^{2}\sin^{2}\theta} (g_{j\varphi,k} + g_{k\varphi,j}).
\end{align}
\end{subequations}
The last line is due to the fact that none of the metric components depend on $\varphi$.  To help compute these quantities, it would be useful to note the following.
\begin{subequations} \label{MDer}
\begin{align}
  \notag g_{tt,t} &= \p_{t}(- \e^{2V}) = -2V_{t} \e^{2V} & g_{tt,\theta} &= \p_{\theta}(- \e^{2V}) = 0 \\
  g_{tt,r} &= \p_{r}(- \e^{2V}) = -2V_{r} \e^{2V} & g_{tt,\varphi} &= \p_{\varphi}(- \e^{2V}) = 0 \displaybreak[0] \\
  \notag g_{rr,t} &= \p_{t}\pnth{1 - \frac{2M}{r}}^{-1} = \pnth{1 - \frac{2M}{r}}^{-2}\pnth{\frac{2M_{t}}{r}} &  g_{rr,\theta} &= \p_{\theta}\pnth{1 - \frac{2M}{r}}^{-1} = 0 \\
  g_{rr,r} &= \p_{r}\pnth{1 - \frac{2M}{r}}^{-1} = \pnth{1 - \frac{2M}{r}}^{-2}\pnth{\frac{2M_{r}}{r} - \frac{2M}{r^{2}}} & g_{rr,\varphi} &= \p_{\varphi}\pnth{1 - \frac{2M}{r}}^{-1} = 0 \displaybreak[0] \\
  \notag g_{\theta\theta,t} &= \p_{t}(r^{2}) = 0 & g_{\theta\theta,\theta} &= \p_{\theta}(r^{2}) = 0 \\
  g_{\theta\theta,r} &= \p_{r}(r^{2}) = 2r & g_{\theta\theta,\varphi} &= \p_{\varphi}(r^{2}) = 0 \displaybreak[0] \\
  \notag g_{\varphi\varphi,t} &= \p_{t}(r^{2}\sin^{2}\theta) = 0 & g_{\varphi\varphi,\theta} &= \p_{\theta}(r^{2}\sin^{2}\theta) = 2r^{2}\sin\theta \cos\theta \\
  g_{\varphi\varphi,r} &= \p_{r}(r^{2}\sin^{2}\theta) = 2r\sin^{2}\theta & g_{\varphi\varphi,\varphi} &= \p_{\varphi}{r^{2}\sin^{2}\theta} = 0.
\end{align}
\end{subequations}
All the other metric components are 0.  Next, if
\begin{equation}
  \Phi = \pnth{1 - \frac{2M}{r}},
\end{equation}
then, using (\ref{CS-1}) and (\ref{MDer}), it is a series of straightforward algebra computations to obtain the following values of the Christoffel symbols.  For brevity, we only include the nonzero Christoffel symbols.
\begin{subequations} \label{CS-2}
  \begin{align}
    \Ga_{tt}^{\ \ t} &= V_{t} & \Ga_{tr}^{\ \ t} &= \Ga_{rt}^{\ \ t} = V_{r} & \Ga_{rr}^{\ \ t} &= \frac{M_{t}}{r\e^{2V}}\Phi^{-2} \\
    \notag \Ga_{tt}^{\ \ r} &= V_{r}\e^{2V}\Phi & \Ga_{tr}^{\ \ r} &= \Ga_{rt}^{\ \ r} = \frac{M_{t}}{r}\Phi^{-1} & \Ga_{rr}^{\ \ r} &= \Phi^{-1}\pnth{\dfrac{M_{r}}{r} - \dfrac{M}{r^{2}}} \\
    \Ga_{\theta\theta}^{\ \ \, r} &= -r\Phi & \Ga_{\varphi\varphi}^{\ \ \ r} &= -r \Phi\sin^{2}\theta & & \\
    \Ga_{r\theta}^{\ \ \, \theta} &= \Ga_{\theta r}^{\ \ \, \theta} = \frac{1}{r} & \Ga_{\varphi \varphi}^{ \ \ \ \theta} &= -\sin \theta \cos \theta & & \\
    \Ga_{r\varphi}^{\ \ \ \varphi} & = \Ga_{\varphi r}^{\ \ \ \varphi} = \frac{1}{r} & \Ga_{\theta \varphi}^{\ \ \ \varphi} &= \Ga_{\varphi \theta}^{\ \ \ \varphi} =  \dfrac{\cos\theta}{\sin\theta} & &
  \end{align}
\end{subequations}
Using (\ref{CS-2}), we can now continue the computations started in (\ref{RicEq-1}).
\begin{align}
  \notag \Ric(\nu_{t},\nu_{t}) &=  \e^{-2V}\pnth{\Ga_{tt\ ,k}^{\ \ k} - \Ga_{kt\ ,t}^{\ \ k} + \Ga_{tt}^{\ \ s}\Ga_{ks}^{\ \ k} - \Ga_{kt}^{\ \ s}\Ga_{ts}^{\ \ k}} \displaybreak[0]\\
  \notag &=  \e^{-2V}\Bigg\{\p_{t}V_{t} + \p_{r}\pnth{V_{r} \e^{2V}\pnth{1 - \frac{2M}{r}}} - \p_{t}V_{t} - \p_{t}\pnth{\frac{M_{t}}{r}\pnth{1 - \frac{2M}{r}}^{-1}} 
  + V_{t}\pnth{V_{t} + \frac{M_{t}}{r}\pnth{1 - \frac{2M}{r}}^{-1}} \\
  \notag &\qquad + V_{r} \e^{2V}\pnth{1 - \frac{2M}{r}}\pnth{V_{r} + \pnth{1 - \frac{2M}{r}}^{-1}\pnth{\frac{M_{r}}{r} - \frac{M}{r^{2}}} + \frac{2}{r}} \\
  \notag &\qquad - V_{t}^{2} - 2 V_{r}^{2} \e^{2V}\pnth{1 - \frac{2M}{r}} - \frac{M_{t}^{2}}{r^{2}}\pnth{1 - \frac{2M}{r}}^{-2} \Bigg\} \displaybreak[0]\\
  \notag &=  \e^{-2V}\Bigg\{V_{rr} \e^{2V}\pnth{1 - \frac{2M}{r}} + 2V_{r}^{2} \e^{2V}\pnth{1 - \frac{2M}{r}} + V_{r} \e^{2V}\pnth{\frac{2M}{r^{2}} - \frac{2M_{r}}{r}} \\
  \notag &\qquad - \frac{M_{tt}}{r}\pnth{1 - \frac{2M}{r}}^{-1} - \frac{2M_{t}^{2}}{r^{2}}\pnth{1 - \frac{2M}{r}}^{-2} + \frac{V_{t}M_{t}}{r}\pnth{1 - \frac{2M}{r}}^{-1} \\
  \notag &\qquad + V_{r}^{2} \e^{2V}\pnth{1 - \frac{2M}{r}} + V_{r} \e^{2V}\pnth{\frac{M_{r}}{r} - \frac{M}{r^{2}}} + \frac{2V_{r} \e^{2V}}{r}\pnth{1 - \frac{2M}{r}} \\
  \notag &\qquad - 2 V_{r}^{2} \e^{2V}\pnth{1 - \frac{2M}{r}} - \frac{M_{t}^{2}}{r^{2}}\pnth{1 - \frac{2M}{r}}^{-2} \Bigg\} \displaybreak[0]\\
  \Ric(\nu_{t},\nu_{t}) &= \pnth{V_{rr} + V_{r}^{2} + \frac{2V_{r}}{r}}\pnth{1-\frac{2M}{r}} + \frac{V_{r}}{r}\pnth{\frac{M}{r} - M_{r}} 
  + \frac{V_{t}M_{t} - M_{tt}}{r \e^{2V}}\pnth{1-\frac{2M}{r}}^{-1} - \frac{3M_{t}^{2}}{r^{2} \e^{2V}}\pnth{1-\frac{2M}{r}}^{-2}
\end{align}
\begin{align}
  \notag \Ric(\nu_{t},\nu_{r}) &=  \e^{-V}\sqrt{1 - \frac{2M}{r}} \pnth{\Ga_{tr\ ,k}^{\ \ k} - \Ga_{kr\ ,t}^{\ \ k} + \Ga_{tr}^{\ \ s}\Ga_{ks}^{\ \ k} - \Ga_{kr}^{\ \ s}\Ga_{ts}^{\ \ k}} 
  \\
  \notag &=  \e^{-V}\sqrt{1 - \frac{2M}{r}} \Bigg\{\p_{t}(V_{r}) + \p_{r}\pnth{\frac{M_{t}}{r}\pnth{1 - \frac{2M}{r}}^{-1}} - \p_{t}(V_{r}) \\
  \notag &\qquad - \p_{t}\pnth{\pnth{1 - \frac{2M}{r}}^{-1}\pnth{\frac{M_{r}}{r} - \frac{M}{r^{2}}}} + V_{r}\pnth{V_{t} + \frac{M_{t}}{r}\pnth{1 - \frac{2M}{r}}^{-1}} \\
  \notag &\qquad - 2\p_{t}\pnth{\frac{1}{r}} + \frac{M_{t}}{r}\pnth{1 - \frac{2M}{r}}^{-1}\pnth{V_{r} + \pnth{1 - \frac{2M}{r}}^{-1}\pnth{\frac{M_{r}}{r} - \frac{M}{r^{2}}} + \frac{2}{r}} \\
  \notag &\qquad - V_{r}V_{t} - \frac{2M_{t}V_{r}}{r}\pnth{1 - \frac{2M}{r}}^{-1} - \frac{M_{t}}{r}\pnth{1 - \frac{2M}{r}}^{-2}\pnth{\frac{M_{r}}{r} - \frac{M}{r^{2}}} \Bigg\} \displaybreak[0]\\
  \notag &=  \e^{-V}\sqrt{1 - \frac{2M}{r}} \Bigg\{\p_{r}\p_{t}\pnth{-\frac{1}{2}\ln\pnth{1 - \frac{2M}{r}}} - \p_{t}\p_{r}\pnth{-\frac{1}{2}\ln\pnth{1 - \frac{2M}{r}}} 
  + \frac{2M_{t}}{r^{2}}\pnth{1 - \frac{2M}{r}}^{-1} \Bigg\} \\
  \Ric(\nu_{t},\nu_{r}) &= \frac{2M_{t}}{r^{2} \e^{V}}\pnth{1 - \frac{2M}{r}}^{-1/2}
\end{align}

\begin{align}
  \notag \Ric(\nu_{r},\nu_{r}) &= \pnth{1 - \frac{2M}{r}} \pnth{\Ga_{rr\ ,k}^{\ \ k} - \Ga_{kr\ ,r}^{\ \ k} + \Ga_{rr}^{\ \ s}\Ga_{ks}^{\ \ k} - \Ga_{kr}^{\ \ s}\Ga_{rs}^{\ \ k}} \displaybreak[0] \\
  \notag &= \pnth{1 - \frac{2M}{r}} \Bigg\{\p_{t}\pnth{\frac{M_{t}}{r \e^{2V}}\pnth{1 - \frac{2M}{r}}^{-2}} + \p_{r}\pnth{\pnth{1 - \frac{2M}{r}}^{-1}\pnth{\frac{M_{r}}{r} - \frac{M}{r^{2}}}} \\
  \notag &\qquad - \p_{r}V_{r} -\p_{r}\pnth{\pnth{1 - \frac{2M}{r}}^{-1}\pnth{\frac{M_{r}}{r} - \frac{M}{r^{2}}}} \\
  \notag &\qquad - 2\p_{r}\pnth{\frac{1}{r}} + \frac{V_{t}M_{t}}{r \e^{2V}}\pnth{1 - \frac{2M}{r}}^{-2} + \frac{M_{t}^{2}}{r^{2} \e^{2V}}\pnth{1 - \frac{2M}{r}}^{-3} \\
  \notag &\qquad + \pnth{V_{r} + \frac{2}{r}}\pnth{1 - \frac{2M}{r}}^{-1}\pnth{\frac{M_{r}}{r} - \frac{M}{r^{2}}} + \pnth{1 - \frac{2M}{r}}^{-2}\pnth{\frac{M_{r}}{r} - \frac{M}{r^{2}}}^{2} \\
  \notag &\qquad - V_{r}^{2} - \frac{2M_{t}^{2}}{r^{2} \e^{2V}}\pnth{1 - \frac{2M}{r}}^{-3} - \frac{2}{r^{2}} - \pnth{1 - \frac{2M}{r}}^{-2}\pnth{\frac{M_{r}}{r} - \frac{M}{r^{2}}}^{2} \Bigg\} \displaybreak[0] \\
  \notag &= \pnth{1 - \frac{2M}{r}} \Bigg\{\frac{M_{tt}}{r \e^{2V}}\pnth{1 - \frac{2M}{r}}^{-2} - \frac{2V_{t}M_{t}}{r \e^{2V}}\pnth{1 - \frac{2M}{r}}^{-2} \\
  \notag &\qquad + \frac{4M_{t}^{2}}{r^{2} \e^{2V}}\pnth{1 - \frac{2M}{r}}^{-3} - V_{rr} + \frac{2}{r^{2}} + \frac{V_{t}M_{t}}{r \e^{2V}}\pnth{1 - \frac{2M}{r}}^{-2} \\
  \notag &\qquad - \frac{M_{t}^{2}}{r^{2} \e^{2V}}\pnth{1 - \frac{2M}{r}}^{-3} - V_{r}^{2} - \frac{2}{r^{2}} 
  + \pnth{V_{r} + \frac{2}{r}}\pnth{1 - \frac{2M}{r}}^{-1}\pnth{\frac{M_{r}}{r} - \frac{M}{r^{2}}} \Bigg\} \displaybreak[0] \\
  \Ric(\nu_{r},\nu_{r}) &= -(V_{rr} + V_{r}^2)\pnth{1 - \frac{2M}{r}} - \pnth{\frac{2}{r^{2}} + \frac{V_{r}}{r}}\pnth{\frac{M}{r} - M_{r}} 
  - \frac{V_{t}M_{t} - M_{tt}}{r \e^{2V}}\pnth{1-\frac{2M}{r}}^{-1} + \frac{3M_{t}^{2}}{r^2 \e^{2V}}\pnth{1-\frac{2M}{r}}^{-2}
\end{align}
\begin{align}
  \notag \Ric(\nu_{\theta},\nu_{\theta}) &= \frac{1}{r^{2}} \pnth{\Ga_{\theta \theta \ ,k}^{\ \ k} - \Ga_{k\theta \ ,\theta }^{\ \ k} + \Ga_{\theta \theta }^{\ \ s}\Ga_{ks}^{\ \ k} - \Ga_{k\theta }^{\ \ s}\Ga_{\theta s}^{\ \ k}} \displaybreak[0]\\
  \notag &= \frac{1}{r^{2}}\Bigg\{\p_{r}\pnth{-r\pnth{1 - \frac{2M}{r}}} - \p_{\theta}\pnth{\frac{\cos \theta}{\sin\theta}} - r\pnth{1 - \frac{2M}{r}}\pnth{V_{r} + \frac{2}{r}} \\
  \notag &\qquad - r\pnth{\frac{M_{r}}{r} - \frac{M}{r^{2}}} + 2\pnth{1 - \frac{2M}{r}} - \frac{\cos^{2}\theta}{\sin^{2}\theta}\Bigg\} \displaybreak[0]\\
  \notag &= \frac{1}{r^{2}}\Bigg\{-\pnth{1 - \frac{2M}{r}} - r\pnth{\frac{2M}{r^{2}} - \frac{2M_{r}}{r}} + \frac{1}{\sin^{2}\theta} - r\pnth{1 - \frac{2M}{r}}\pnth{V_{r} + \frac{2}{r}} \\
  \notag &\qquad + r\pnth{\frac{M}{r^{2}} - \frac{M_{r}}{r}} + 2\pnth{1 - \frac{2M}{r}} - \frac{\cos^{2}\theta}{\sin^{2}\theta}\Bigg\} \displaybreak[0] \\
  \notag &= \frac{1}{r^{2}}\Bigg\{-\frac{\sin^{2}\theta}{\sin^{2}\theta} + \frac{2M}{r} - \frac{M}{r} + M_{r} + \frac{1}{\sin^{2}\theta} - rV_{r}\pnth{1 - \frac{2M}{r}} \\
  \notag &\qquad - 2\pnth{1 - \frac{2M}{r}} + 2 \pnth{1 - \frac{2M}{r}} - \frac{\cos^{2}\theta}{\sin^{2}\theta} \Bigg\} \\
  \Ric(\nu_{\theta},\nu_{\theta}) &= \frac{1}{r^{2}}\pnth{\frac{M}{r} + M_{r}} - \frac{V_{r}}{r}\pnth{1 - \frac{2M}{r}}
\end{align}
\begin{align}
  \notag \Ric(\nu_{\varphi},\nu_{\varphi}) &= \frac{1}{r^{2}\sin^{2}\theta} \pnth{\Ga_{\varphi \varphi \ ,k}^{\ \ \ k} - \Ga_{k\varphi \ ,\varphi }^{\ \ \ k} + \Ga_{\varphi \varphi }^{\ \ \ s}\Ga_{ks}^{\ \ k} - \Ga_{k\varphi }^{\ \ \ s}\Ga_{\varphi s}^{\ \ \ k}} \displaybreak[0] \\
  \notag &= \frac{1}{r^{2}\sin^{2}\theta} \Bigg\{\p_{r}\pnth{-r\sin^{2}\theta\pnth{1 - \frac{2M}{r}}} + \p_{\theta}\pnth{-\sin\theta\cos\theta} \\
  \notag &\qquad - r\sin^{2}\theta \pnth{1 - \frac{2M}{r}}\pnth{V_{r} + \frac{2}{r}} - r\sin^{2}\theta\pnth{\frac{M_{r}}{r} - \frac{M}{r^{2}}} - \cos^{2}\theta \\
  \notag &\qquad + 2\sin^{2}\theta \pnth{1 - \frac{2M}{r}} + 2\cos^{2}\theta \Bigg\} \displaybreak[0] \\
  \notag &= \frac{1}{r^{2}\sin^{2}\theta} \Bigg\{-\sin^{2}\theta\pnth{1 - \frac{2M}{r}} - r\sin^{2}\theta\pnth{\frac{2M}{r^{2}} - \frac{2M_{r}}{r}} - \cos^{2}\theta + \sin^{2}\theta \\
  \notag &\qquad - r\sin^{2}\theta \pnth{1 - \frac{2M}{r}}\pnth{V_{r} + \frac{2}{r}} + r\sin^{2}\theta\pnth{\frac{M}{r^{2}} - \frac{M_{r}}{r}} - \cos^{2}\theta \\
  \notag &\qquad + 2\sin^{2}\theta \pnth{1 - \frac{2M}{r}} + 2\cos^{2}\theta \Bigg\} \displaybreak[0] \\
  \notag &= \frac{1}{r^{2}}\pnth{\frac{2M}{r} - \frac{M}{r} + M_{r} - rV_{r}\pnth{1 - \frac{2M}{r}} - 2\pnth{1 - \frac{2M}{r}} + 2\pnth{1 - \frac{2M}{r}}} \\
  \label{Ricphph} \Ric(\nu_{\varphi},\nu_{\varphi}) &= \frac{1}{r^{2}}\pnth{\frac{M}{r} + M_{r}} - \frac{V_{r}}{r}\pnth{1 - \frac{2M}{r}} = \Ric(\nu_{\theta},\nu_{\theta}).
\end{align}
These are the only nonzero components of the Ricci curvature tensor since
\begin{align}
  \notag \Ric(\nu_{t},\nu_{\theta}) &= \frac{1}{r\e^{V}}\Ric_{t\theta} = \frac{1}{r\e^{V}}\R_{tk\theta}^{\ \ \ \, k} \\
  \notag &= \frac{1}{r\e^{V}}\pnth{\Ga_{t\theta\ ,k}^{\ \ \, k} - \Ga_{k\theta\ ,t}^{\ \ \, k} + \Ga_{t\theta}^{\ \ \, s}\Ga_{ks}^{\ \ k} - \Ga_{k\theta}^{\ \ \, s}\Ga_{ts}^{\ \ k}} \displaybreak[0]\\
  \notag &= \frac{1}{r\e^{V}}\pnth{- \Ga_{r\theta\ ,t}^{\ \ \, r} - \Ga_{\theta\theta\ ,t}^{\ \ \ \theta} - \Ga_{\varphi\theta\ ,t}^{\ \ \ \varphi} - \Ga_{k\theta}^{\ \ \, r}\Ga_{tr}^{\ \ k} - \Ga_{k\theta}^{\ \ \, \theta}\Ga_{t\theta}^{\ \ \, k} - \Ga_{k\theta}^{\ \ \, \varphi}\Ga_{t\varphi}^{\ \ \, k}} \displaybreak[0]\\
  \notag &= \frac{1}{r\e^{V}}\pnth{- \Ga_{\theta\theta}^{\ \ \ r}\Ga_{tr}^{\ \ \theta} - \Ga_{r\theta}^{\ \ \, \theta}\Ga_{t\theta}^{\ \ \, r} - \Ga_{\varphi\theta}^{\ \ \ \varphi}\Ga_{t\varphi}^{\ \ \, \varphi}} \\
  &= 0
\end{align}
\begin{align}
  \notag \Ric(\nu_{t},\nu_{\varphi}) &= \frac{1}{r\sin \theta \e^{V}}\Ric_{t\varphi} = \frac{1}{r\sin \theta\e^{V}}\R_{tk\varphi}^{\ \ \ \, k} \\
  \notag &= \frac{1}{r\sin \theta\e^{V}}\pnth{\Ga_{t\varphi\ ,k}^{\ \ \, k} - \Ga_{k\varphi\ ,t}^{\ \ \, k} + \Ga_{t\varphi}^{\ \ \, s}\Ga_{ks}^{\ \ k} - \Ga_{k\varphi}^{\ \ \, s}\Ga_{ts}^{\ \ k}} \displaybreak[0]\\
  \notag &= \frac{1}{r\sin \theta\e^{V}}\pnth{- \Ga_{r\varphi\ ,t}^{\ \ \, r} - \Ga_{\theta\varphi\ ,t}^{\ \ \ \theta} - \Ga_{\varphi\varphi\ ,t}^{\ \ \ \varphi} - \Ga_{k\varphi}^{\ \ \, r}\Ga_{tr}^{\ \ k} - \Ga_{k\varphi}^{\ \ \, \theta}\Ga_{t\theta}^{\ \ \, k} - \Ga_{k\varphi}^{\ \ \, \varphi}\Ga_{t\varphi}^{\ \ \, k}} \displaybreak[0]\\
  \notag &= \frac{1}{r\sin \theta\e^{V}}\pnth{- \Ga_{\varphi\varphi}^{\ \ \ r}\Ga_{tr}^{\ \ \varphi} - \Ga_{\varphi\varphi}^{\ \ \ \theta}\Ga_{t\theta}^{\ \ \, \varphi} - \Ga_{r\varphi}^{\ \ \, \varphi}\Ga_{t\varphi}^{\ \ \, r} - \Ga_{\theta\varphi}^{\ \ \ \varphi}\Ga_{t\varphi}^{\ \ \, \theta}} \\
  &= 0
\end{align}
\begin{align}
  \notag \Ric(\nu_{r},\nu_{\theta}) &= \frac{1}{r}\sqrt{1 - \frac{2M}{r}}\Ric_{r\theta} = \frac{1}{r}\sqrt{1 - \frac{2M}{r}}\R_{rk\theta}^{\ \ \ \, k} \\
  \notag &= \frac{1}{r}\sqrt{1 - \frac{2M}{r}}\pnth{\Ga_{r\theta\ ,k}^{\ \ \, k} - \Ga_{k\theta\ ,r}^{\ \ \, k} + \Ga_{r\theta}^{\ \ \, s}\Ga_{ks}^{\ \ k} - \Ga_{k\theta}^{\ \ \, s}\Ga_{rs}^{\ \ k}} \displaybreak[0] \\
  \notag &= \frac{1}{r}\sqrt{1 - \frac{2M}{r}}(\Ga_{r\theta\ ,\theta}^{\ \ \, \theta} - \Ga_{r\theta\ ,r}^{\ \ \, r} - \Ga_{\theta\theta\ ,r}^{\ \ \ \theta} - \Ga_{\varphi\theta\ ,r}^{\ \ \ \varphi} + \Ga_{r\theta}^{\ \ \, \theta}\Ga_{r\theta}^{\ \ \, r} \\
  \notag &\qquad + \Ga_{r\theta}^{\ \ \, \theta}\Ga_{\theta\theta}^{\ \ \ \theta} + \Ga_{r\theta}^{\ \ \, \theta}\Ga_{\varphi\theta}^{\ \ \varphi} - \Ga_{r\theta}^{\ \ \, s}\Ga_{rs}^{\ \ r} - \Ga_{\theta\theta}^{\ \ \ s}\Ga_{rs}^{\ \ \theta} - \Ga_{\varphi\theta}^{\ \ \ s}\Ga_{rs}^{\ \ \varphi}) \displaybreak[0] \\
  \notag &= \frac{1}{r}\sqrt{1 - \frac{2M}{r}}\pnth{\frac{\cos \theta}{r\sin \theta} - \Ga_{r\theta}^{\ \ \, \theta}\Ga_{r\theta}^{\ \ \, r} - \Ga_{\theta\theta}^{\ \ \ r}\Ga_{rr}^{\ \ \theta} - \Ga_{\varphi\theta}^{\ \ \ \varphi}\Ga_{r\varphi}^{\ \ \, \varphi}} \displaybreak[0] \\
  \notag &= \frac{1}{r}\sqrt{1 - \frac{2M}{r}}\pnth{\frac{\cos \theta}{r\sin \theta} - \frac{\cos \theta}{r\sin\theta}} \\
  &= 0
\end{align}
\begin{align}
  \notag \Ric(\nu_{r},\nu_{\varphi}) &= \frac{1}{r\sin \theta}\sqrt{1 - \frac{2M}{r}}\Ric_{r\varphi} = \frac{1}{r\sin \theta}\sqrt{1 - \frac{2M}{r}}\R_{rk\varphi}^{\ \ \ \, k} \\
  \notag &= \frac{1}{r\sin\theta}\sqrt{1 - \frac{2M}{r}}\pnth{\Ga_{r\varphi\ ,k}^{\ \ \, k} - \Ga_{k\varphi\ ,r}^{\ \ \, k} + \Ga_{r\varphi}^{\ \ \, s}\Ga_{ks}^{\ \ k} - \Ga_{k\varphi}^{\ \ \, s}\Ga_{rs}^{\ \ k}} \displaybreak[0] \\
  \notag &= \frac{1}{r\sin\theta}\sqrt{1 - \frac{2M}{r}}(\Ga_{r\varphi\ ,\varphi}^{\ \ \, \varphi} - \Ga_{r\varphi\ ,r}^{\ \ \, r} - \Ga_{\theta\varphi\ ,r}^{\ \ \ \theta} - \Ga_{\varphi\varphi\ ,r}^{\ \ \ \varphi} + \Ga_{r\varphi}^{\ \ \, \varphi}\Ga_{r\varphi}^{\ \ r} \\
  \notag &\qquad + \Ga_{r\varphi}^{\ \ \, \varphi}\Ga_{\theta\varphi}^{\ \ \ \theta} + \Ga_{r\varphi}^{\ \ \, \varphi}\Ga_{\varphi\varphi}^{\ \ \ \varphi} - \Ga_{r\varphi}^{\ \ \, s}\Ga_{rs}^{\ \ r} - \Ga_{\theta\varphi}^{\ \ \ s}\Ga_{rs}^{\ \ \theta} - \Ga_{\varphi\varphi}^{\ \ \ s}\Ga_{rs}^{\ \ \varphi}) \displaybreak[0] \\
  \notag &= \frac{1}{r\sin\theta}\sqrt{1 - \frac{2M}{r}}\pnth{- \Ga_{r\varphi}^{\ \ \, \varphi}\Ga_{r\varphi}^{\ \ \, r} - \Ga_{\theta\varphi}^{\ \ \ \varphi}\Ga_{r\varphi}^{\ \ \, \theta} - \Ga_{\varphi\varphi}^{\ \ \ r}\Ga_{rr}^{\ \ \varphi} - \Ga_{\varphi\varphi}^{\ \ \ \theta}\Ga_{r\theta}^{\ \ \, \varphi}} \\
  &= 0
\end{align}
\begin{align}
  \notag \Ric(\nu_{\theta},\nu_{\varphi}) &= \frac{1}{r^{2}\sin\theta}\Ric_{\theta\varphi} = \frac{1}{r^{2}\sin \theta}\R_{\theta k \varphi}^{\ \ \ \ k} \\
  \notag &= \frac{1}{r^{2}\sin\theta}\pnth{\Ga_{\theta\varphi\ ,k}^{\ \ \ k} - \Ga_{k\varphi\ ,\theta}^{\ \ \, k} + \Ga_{\theta\varphi}^{\ \ \ s}\Ga_{ks}^{\ \ k} - \Ga_{k\varphi}^{\ \ \, s}\Ga_{\theta s}^{\ \ \, k}} \displaybreak[0] \\
  \notag &= \frac{1}{r^{2}\sin\theta}(\Ga_{\theta\varphi\ ,\varphi}^{\ \ \ \varphi} - \Ga_{r\varphi\ ,\theta}^{\ \ \, r} - \Ga_{\theta\varphi\ ,\theta}^{\ \ \ \theta} - \Ga_{\varphi\varphi\ ,\theta}^{\ \ \ \varphi} + \Ga_{\theta\varphi}^{\ \ \ \varphi}\Ga_{r\varphi}^{\ \ \, r} \\
  \notag &\qquad + \Ga_{\theta\varphi}^{\ \ \ \varphi}\Ga_{\theta\varphi}^{\ \ \ \theta} + \Ga_{\theta\varphi}^{\ \ \ \varphi}\Ga_{\varphi\varphi}^{\ \ \ \varphi} - \Ga_{k\varphi}^{\ \ \, r}\Ga_{\theta r}^{\ \ \, k} - \Ga_{k\varphi}^{\ \ \, \theta}\Ga_{\theta \theta}^{\ \ \ k} - \Ga_{k\varphi}^{\ \ \, \varphi}\Ga_{\theta \varphi}^{\ \ \ k}) \displaybreak[0] \\
  \notag &= \frac{1}{r^{2}\sin\theta}\pnth{- \Ga_{\theta\varphi}^{\ \ \ r}\Ga_{\theta r}^{\ \ \, \theta} - \Ga_{r\varphi}^{\ \ \, \theta}\Ga_{\theta \theta}^{\ \ \ r} - \Ga_{\varphi\varphi}^{\ \ \ \varphi}\Ga_{\theta \varphi}^{\ \ \ \varphi}} \\
  &= 0
\end{align}
The other components are 0 by the symmetry of the Ricci curvature tensor.  To get the last statement, we note that by equation (\ref{Ricphph}) and Lemma \ref{RicLem-1}-\ref{RicLem-1b}, we have that
\begin{equation}
  R = \sum_{\eta} g(\nu_{\eta},\nu_{\eta})\Ric(\nu_{\eta},\nu_{\eta}) = -\Ric(\nu_{t},\nu_{t}) + \Ric(\nu_{r},\nu_{r}) + 2\Ric(\nu_{\theta},\nu_{\theta}).
\end{equation}
From here, it is a matter of algebra and the statements above to get the desired equation for $R$.
\end{proof}

Next, we prove Proposition \ref{EinEq-prop1}.

\begin{proof}[Proof of Proposition \ref{EinEq-prop1}]

In solving equations (\ref{divfreeT}), (\ref{EDEq-1-thm}), and (\ref{PREq-1-thm}), we already solve the equations resulting from two of the components of the Einstein equation, namely (\ref{EDEq-1-thm}) and (\ref{PREq-1-thm}), which we reprint here for convenience.
  \begin{align}
    \label{EDEq-1-rp} M_{r} &= 4\pi r^{2}\mu \\
    \label{PREq-1-rp} V_{r} &= \pnth{1 - \frac{2M}{r}}^{-1}\pnth{\frac{M}{r^{2}} + 4\pi r P}
  \end{align}
In order to show that solving these two equations coupled with $\divx{g} T = 0$ is sufficient to solve the entire Einstein equation, we must show that if these equations hold, so do the other components of the Einstein equation.  To do so, we first need to write down the other components of the Einstein equation.  Recall that the above equations come from the $\nu_{t},\nu_{t}$ and $\nu_{r},\nu_{r}$ components of the Einstein equation.  Thus we only have the $\nu_{t},\nu_{r}$ and $\nu_{\theta},\nu_{\theta}$ components left to compute (the $\nu_{\varphi},\nu_{\varphi}$ component is identical to $\nu_{\theta},\nu_{\theta}$ component).  Then, by Corollary \ref{GCor-1} and equation (\ref{SETFun-def}), we have the following.
\begin{align}
  \notag G(\nu_{t},\nu_{r}) &= 8\pi T(\nu_{t},\nu_{r}) \\
  \label{rhoEq-1} M_{t} &= 4\pi r^{2} \e^{V}\rho\sqrt{1 - \frac{2M}{r}}
\end{align}
and
\begin{align}
  \notag G(\nu_{\theta},\nu_{\theta}) &= 8\pi T(\nu_{\theta},\nu_{\theta}) \\
  \notag 8\pi Q &= \pnth{V_{rr} + V_{r}^{2} + \frac{V_{r}}{r}}\pnth{1-\frac{2M}{r}} + \pnth{\frac{M}{r} - M_{r}}\pnth{\frac{1}{r^{2}} + \frac{V_{r}}{r}} \\
  \label{QEq-1} &\qquad + \frac{V_{t}M_{t} - M_{tt}}{r \e^{2V}}\pnth{1 - \frac{2M}{r}}^{-1} - \frac{3M_{t}^{2}}{r^{2} \e^{2V}}\pnth{1-\frac{2M}{r}}^{-2}
\end{align}

Next 
we compute what $\divx{g} T = 0$
means in terms of the functions $\mu$, $P$, $\rho$, and $Q$.  In what follows, the summing indices run through the set $\{t,r,\theta,\varphi\}$ and we will use the Einstein summation convention wherever it applies, while specifically denoting any other summations of a different form.  First, we define
\begin{equation}
  \vep_{k} = g(\nu_{k},\nu_{k})
\end{equation}
 for all $k \in \{t,r,\theta,\varphi\}$.  Then we can write the divergence of $T$ as follows.
\begin{align}
  \notag \divx{g} T &= \sum_{k} \vep_{k}(\cov_{\nu_{k}}T)(\nu_{k},\p_{j})\, dx^{j} \\
  \label{divT1} &= -(\cov_{\nu_{t}}T)(\nu_{t},\p_{j})\, dx^{j} + (\cov_{\nu_{r}}T)(\nu_{r},\p_{j})\, dx^{j} 
  + (\cov_{\nu_{\theta}}T)(\nu_{\theta},\p_{j})\, dx^{j} + (\cov_{\nu_{\varphi}}T)(\nu_{\varphi},\p_{j})\, dx^{j}
\end{align}
We will simplify each of these four terms individually.  By equations (\ref{CS-2}), (\ref{norvec}), and (\ref{SETFun-def}), we have that
\begin{align}
  \notag (\cov_{\nu_{t}}T)(\nu_{t},\p_{k})\, dx^{k} &=  \e^{-2V}(\cov_{t}T)(\p_{t},\p_{k})\, dx^{k} \\
  \notag &=  \e^{-2V}\Big[\p_{t}(T(\p_{t},\p_{k})) - \Ga_{tt}^{\ \ m}T(\p_{m},\p_{k}) - \Ga_{tk}^{\ \ m}T(\p_{t},\p_{m})\Big]\, dx^{k} \displaybreak[0]\\
  \notag &=  \e^{-2V}\Big[\p_{t}(T(\p_{t},\p_{t}))\, dt + \p_{t}(T(\p_{t},\p_{r}))\, dr 
  - 2\Ga_{tt}^{\ \ t}T(\p_{t},\p_{t})\, dt - 2\Ga_{tt}^{\ \ r}T(\p_{t},\p_{r})\, dt \\
  \notag &\qquad - (\Ga_{tt}^{\ \ t} + \Ga_{tr}^{\ \ r})T(\p_{t},\p_{r})\, dr - \Ga_{tr}^{\ \ t}T(\p_{t},\p_{t})\, dr - \Ga_{tt}^{\ \ r}T(\p_{r},\p_{r})\, dr\Big] \displaybreak[0]\\
  \notag &=  \e^{-2V}\Bigg[\pnth{2V_{t} \e^{2V}\mu +  \e^{2V}\mu_{t} - 2V_{t} \e^{2V}\mu - 2\rho V_{r} \e^{3V}\sqrt{1 - \frac{2M}{r}}}\, dt \\
  \notag &\qquad + \Bigg(V_{t} \e^{V}\rho\pnth{1 - \frac{2M}{r}}^{-1/2} +  \e^{V}\rho_{t}\pnth{1 - \frac{2M}{r}}^{-1/2} 
  + \frac{ \e^{V}\rho M_{t}}{r}\pnth{1 - \frac{2M}{r}}^{-3/2}  \\
  \notag &\qquad - \pnth{V_{t} + \frac{M_{t}}{r}\pnth{1 - \frac{2M}{r}}^{-1}} \e^{V}\rho\pnth{1 - \frac{2M}{r}}^{-1/2} 
  - V_{r} \e^{2V}\mu - V_{r} \e^{2V}P\Bigg)\, dr \Bigg] \displaybreak[0] \\
  \label{C_tT} (\cov_{\nu_{t}}T)(\nu_{t},\p_{k})\, dx^{k} &= \pnth{\mu_{t} - 2\rho V_{r} \e^{V}\sqrt{1 - \frac{2M}{r}}}\, dt 
  + \pnth{\rho_{t} \e^{-V}\pnth{1 - \frac{2M}{r}}^{-1/2} - V_{r}(\mu + P)}\, dr
\end{align}
Next, we have
\begin{align}
  \notag (\cov_{\nu_{r}}T)(\nu_{r},\p_{k})\, dx^{k} &= \pnth{1 - \frac{2M}{r}}(\cov_{r}T)(\p_{r},\p_{k})\, dx^{k} \\
  \notag &= \pnth{1 - \frac{2M}{r}}\Big[\p_{r}(T(\p_{r},\p_{k})) - \Ga_{rr}^{\ \ m}T(\p_{m},\p_{k}) - \Ga_{rk}^{\ \ m}T(\p_{r},\p_{m})\Big]\, dx^{k} \displaybreak[0] \\
  \notag &= \pnth{1 - \frac{2M}{r}}\Big[\p_{r}(T(\p_{r},\p_{t}))\, dt + \p_{r}(T(\p_{r},\p_{r}))\, dr \\
  \notag &\qquad - \pnth{\Ga_{rr}^{\ \ r} + \Ga_{rt}^{\ \ t}}T(\p_{t},\p_{r})\, dt - 2\Ga_{rr}^{\ \ t}T(\p_{t},\p_{r})\, dr \\
  \notag &\qquad - 2\Ga_{rr}^{\ \ r}T(\p_{r},\p_{r})\, dr - \Ga_{rr}^{\ \ t}T(\p_{t},\p_{t})\, dt - \Ga_{rt}^{\ \ r}T(\p_{r},\p_{r})\, dt \Big] \displaybreak[0] \\
  \notag &= \pnth{1 - \frac{2M}{r}}\Bigg[\Bigg(\rho  \e^{V}\pnth{1 - \frac{2M}{r}}^{-3/2}\pnth{\frac{M_{r}}{r} - \frac{M}{r^{2}}} \\
  \notag &\qquad + \rho V_{r} \e^{V}\pnth{1 - \frac{2M}{r}}^{-1/2} + \rho_{r} \e^{V}\pnth{1 - \frac{2M}{r}}^{-1/2} \\
  \notag &\qquad - \frac{\mu M_{t}}{r}\pnth{1 - \frac{2M}{r}}^{-2} - \frac{P M_{t}}{r}\pnth{1 - \frac{2M}{r}}^{-2} \\
  \notag &\qquad -\rho  \e^{V}\pnth{1 - \frac{2M}{r}}^{-1/2}\brkt{\pnth{1 - \frac{2M}{r}}^{-1}\pnth{\frac{M_{r}}{r} - \frac{M}{r^{2}}} + V_{r}}\Bigg)\, dt \\
  \notag &\qquad + \Bigg(P_{r}\pnth{1 - \frac{2M}{r}}^{-1} + 2P\pnth{1 - \frac{2M}{r}}^{-2}\pnth{\frac{M_{r}}{r} - \frac{M}{r^{2}}} \\
  \notag &\qquad - \frac{2\rho M_{t}}{r  \e^{V}}\pnth{1 - \frac{2M}{r}}^{-5/2} - 2P\pnth{1 - \frac{2M}{r}}^{-2}\pnth{\frac{M_{r}}{r} - \frac{M}{r^{2}}} \Bigg)\, dr \Bigg] \displaybreak[0] \\
  \label{C_rT} (\cov_{\nu_{r}}T)(\nu_{r},\p_{k})\, dx^{k} &= \pnth{\rho_{r} \e^{V}\sqrt{1 - \frac{2M}{r}} - \frac{M_{t}}{r}\pnth{1 - \frac{2M}{r}}^{-1}(\mu + P)}\, dt 
  + \pnth{P_{r} - \frac{2\rho M_{t}}{r  \e^{V}}\pnth{1 - \frac{2M}{r}}^{-3/2}}\, dr
\end{align}
Thirdly,
\begin{align}
  \notag (\cov_{\nu_{\theta}}T)(\nu_{\theta},\p_{k})\, dx^{k} &= \frac{1}{r^{2}}(\cov_{\theta}T)(\p_{\theta},\p_{k})\, dx^{k} \\
  \notag &= \frac{1}{r^{2}} \Big[\p_{\theta}(T(\p_{\theta},\p_{k})) - \Ga_{\theta\theta}^{\ \ m}T(\p_{m},\p_{k}) - \Ga_{\theta k}^{\ \ m}T(\p_{\theta},\p_{m})\Big]\, dx^{k} \displaybreak[0]\\
  \notag &= \frac{1}{r^{2}}\Big[-\Ga_{\theta\theta}^{\ \ r}T(\p_{r},\p_{t})\, dt - \Ga_{\theta\theta}^{\ \ r}T(\p_{r},\p_{r})\, dr - \Ga_{\theta r}^{\ \ \theta}T(\p_{\theta},\p_{\theta})\, dr \Big] \displaybreak[0]\\
  \notag &= \frac{1}{r^{2}}\brkt{r\rho  \e^{V}\sqrt{1 - \frac{2M}{r}}\, dt + r P \, dr - r Q\, dr} \\
  \label{C_thT} (\cov_{\nu_{\theta}}T)(\nu_{\theta},\p_{k})\, dx^{k} &= \frac{\rho  \e^{V}}{r}\sqrt{1 - \frac{2M}{r}}\, dt + \frac{P - Q}{r}\, dr
\end{align}
Lastly, we have that
\begin{align}
  \notag (\cov_{\nu_{\varphi}}T)(\nu_{\varphi},\p_{k})\, dx^{k} &= \frac{1}{r^{2}\sin^{2}\theta}(\cov_{\varphi}T)(\p_{\varphi},\p_{k})\, dx^{k} \\
  \notag &= \frac{1}{r^{2}\sin^{2}\theta}\Big[ \p_{\varphi}(T(\p_{\varphi},\p_{k})) - \Ga_{\varphi\varphi}^{\ \ \, m}T(\p_{m},\p_{k}) - \Ga_{\varphi k}^{\ \ m}T(\p_{\varphi},\p_{m}) \Big]\, dx^{k} \displaybreak[0]\\
  \notag &= \frac{1}{r^{2}\sin^{2}\theta} \Big[ - \Ga_{\varphi\varphi}^{\ \ \, r}T(\p_{r},\p_{t})\, dt - \Ga_{\varphi\varphi}^{\ \ \, r}T(\p_{r},\p_{r})\, dr - \Ga_{\varphi\varphi}^{\ \ \, \theta}T(\p_{\theta},\p_{\theta})\, d\theta \\
  \notag &\qquad - \Ga_{\varphi r}^{\ \ \, \varphi}T(\p_{\varphi},\p_{\varphi})\, dr - \Ga_{\varphi\theta}^{\ \ \, \varphi}T(\p_{\varphi},\p_{\varphi})\, d\theta \Big] \displaybreak[0]\\
  \notag &= \frac{1}{r^{2}\sin^{2}\theta} \Big[ r\rho  \e^{V} \sin^{2}\theta \sqrt{1 - \frac{2M}{r}}\, dt + r \sin^{2}\theta P\, dr \\
  \notag &\qquad + r^{2}Q\sin\theta \cos\theta \, d\theta - r Q \sin^{2}\theta \, dr - r^{2}Q \cos \theta \sin \theta\, d\theta \Big] \displaybreak[0]\\
  \notag &= \frac{\rho  \e^{V}}{r}\sqrt{1 - \frac{2M}{r}}\, dt + \frac{P - Q}{r}\, dr \\
  \label{C_phT} (\cov_{\nu_{\varphi}}T)(\nu_{\varphi},\p_{k})\, dx^{k} &= (\cov_{\nu_{\theta}}T)(\nu_{\theta},\p_{k})\, dx^{k}
\end{align}
Then equation (\ref{divT1}) becomes
\begin{align}
  \notag \divx{g} T &= -(\cov_{\nu_{t}}T)(\nu_{t},\p_{j})\, dx^{j} + (\cov_{\nu_{r}}T)(\nu_{r},\p_{j})\, dx^{j} 
  + (\cov_{\nu_{\theta}}T)(\nu_{\theta},\p_{j})\, dx^{j} + (\cov_{\nu_{\varphi}}T)(\nu_{\varphi},\p_{j})\, dx^{j} \\
  \notag &= \Bigg(-\mu_{t} + 2\rho  \e^{V}\sqrt{1 - \frac{2M}{r}}\pnth{V_{r} + \frac{1}{r}} + \rho_{r} \e^{V}\sqrt{1 - \frac{2M}{r}} - \frac{M_{t}}{r}\pnth{1 - \frac{2M}{r}}^{-1}(\mu + P)\Bigg)\, dt  \\
  \label{divT2} &\qquad + \Bigg(-\rho_{t} \e^{-V}\pnth{1 - \frac{2M}{r}}^{-1/2} 
  + V_{r}(\mu + P) + P_{r} - \frac{2\rho M_{t}}{r  \e^{V}}\pnth{1 - \frac{2M}{r}}^{-3/2} + \frac{2(P - Q)}{r}\Bigg)\, dr
\end{align}
Then $\divx{g} T = 0$ yields the following two equations
\begin{subequations} \label{divT3}
  \begin{align}
    \label{divT3a} \mu_{t} &= 2\rho  \e^{V}\sqrt{1 - \frac{2M}{r}}\pnth{V_{r} + \frac{1}{r}} + \rho_{r} \e^{V}\sqrt{1 - \frac{2M}{r}} - \frac{M_{t}}{r}\pnth{1 - \frac{2M}{r}}^{-1}(\mu + P) \displaybreak[0]\\
    \label{divT3b} \rho_{t} &=  \e^{V}\sqrt{1 - \frac{2M}{r}}\pnth{V_{r}(\mu + P) + P_{r} + \frac{2(P - Q)}{r}} - \frac{2\rho M_{t}}{r}\pnth{1 - \frac{2M}{r}}^{-1}
  \end{align}
\end{subequations}

We are now ready to show that solving $\divx{g} T = 0$ 
along with solving equations (\ref{EDEq-1-rp}) and (\ref{PREq-1-rp}) on each $t=constant$ slice 
also solves the remaining two unique nonzero components of the Einstein equation, (\ref{rhoEq-1}) and (\ref{QEq-1}).  We will show first that (\ref{EDEq-1-rp}), (\ref{PREq-1-rp}), and (\ref{divT3}) implies that (\ref{rhoEq-1}) holds as well.  To do this, we first need to show that (\ref{EDEq-1-rp}) and (\ref{rhoEq-1}) are compatible, that is, given these equations, $M_{rt} = M_{tr}$.  Differentiating (\ref{EDEq-1-rp}) with respect to $t$ yields,
\begin{align}
  \notag \p_{t}M_{r} &= \p_{t}\pnth{4\pi r^{2}\mu} \\
  \label{EDEq-1_t} M_{rt} &= 4\pi r^{2}\mu_{t}
\end{align}
while differentiating (\ref{rhoEq-1}) with respect to $r$ yields,
\begin{align}
  \notag \p_{r}M_{t} &= \p_{r}\pnth{4\pi r^{2} \e^{V}\rho\sqrt{1 - \frac{2M}{r}}} \\
  \notag M_{tr} &= 4\pi\Bigg[ 2r  \e^{V} \rho \sqrt{1 - \frac{2M}{r}} + r^{2}V_{r} \e^{V}\rho\sqrt{1 - \frac{2M}{r}} 
  + r^{2} \e^{V}\rho_{r}\sqrt{1 - \frac{2M}{r}} - r^{2} \e^{V}\rho\pnth{1 - \frac{2M}{r}}^{-1/2}\pnth{\frac{M_{r}}{r} - \frac{M}{r^{2}}} \Bigg] \\
  \notag &= 4\pi r^{2}\Bigg[\mu_{t} + \frac{M_{t}}{r}\pnth{1 - \frac{2M}{r}}^{-1}(\mu + P) - 4\pi r \rho  \e^{V}\pnth{1 - \frac{2M}{r}}^{-1/2}(\mu + P) \Bigg] \\
  \label{rhoEq-1_r} &= 4\pi r^{2}\mu_{t}
\end{align}
where in the next to last line we made substitutions using equations (\ref{EDEq-1-rp}), (\ref{PREq-1-rp}), and (\ref{divT3a}) and the last line used (\ref{rhoEq-1}).  Thus if $\divx{g}T=0$, (\ref{EDEq-1-rp}) and (\ref{rhoEq-1}) are compatible, that is, $M_{tr} = M_{rt}$ everywhere the equations are defined, namely, wherever $r \neq 0$.  However, by using \LHopital's rule on the equations above coupled with the fact from equation (\ref{Mder}) that at $r = 0$, $M = M_{r} = M_{rr} = 0$ for all $t$, the equation $M_{tr} = M_{rt}$ still holds at the central value $r=0$.  This implies that there exists a function $M(t,r)$ which satisfies both (\ref{EDEq-1-rp}) and (\ref{PREq-1-rp}) everywhere, which of course is the metric function we seek.

As per our hypothesis, if for all values of $t$, we have solved (\ref{EDEq-1-rp}) with the initial condition $M = 0$ at $r=0$, we will obtain some function $M^{\ast}(t,r)$ which satisfies (\ref{EDEq-1-rp}) everywhere.  Then we have that 
  \begin{equation}
    M_{r}^{\ast} = M_{r}
  \end{equation}
everywhere, where $M$ is the function that satisfies both (\ref{EDEq-1-rp}) and (\ref{PREq-1-rp}).  This implies that 
  \begin{equation}
    M^{\ast}(t,r) = M(t,r) + f(t)
  \end{equation}
for some smooth function $f(t)$.  However, we also have that $M^{\ast}(t,0) = M(t,0) = 0$ for all $t$, which implies that $f(t) \equiv 0$.  Hence $M^{\ast}(t,r) = M(t,r)$ and the function we obtain by integrating (\ref{EDEq-1-rp}) with compatible initial conditions necessarily also satisfies (\ref{rhoEq-1}).

Finally, we use equations (\ref{EDEq-1-rp}), (\ref{PREq-1-rp}), (\ref{divT3}), and (\ref{rhoEq-1}) (since the first three imply equation (\ref{rhoEq-1})) to show that (\ref{QEq-1}) is automatically satisfied.  In order to do this, we will have to compute $V_{rr}$ and $M_{tt}$.  They are
\begin{align}
  \notag \p_{t} M_{t} &= \p_{t}\pnth{4\pi r^{2} \e^{V}\rho\sqrt{1 - \frac{2M}{r}}} \\
  \notag M_{tt} &= 4\pi r^{2}\brkt{V_{t} \e^{V}\rho \sqrt{1 - \frac{2M}{r}} +  \e^{V}\rho_{t}\sqrt{1 - \frac{2M}{r}} - \frac{ \e^{V} \rho M_{t}}{r}\pnth{1 - \frac{2M}{r}}^{-1/2}} \\
  \label{rhoEq-1_t} &= V_{t}M_{t} + 4\pi r^{2}\e^{V}\brkt{\rho_{t}\sqrt{1 - \frac{2M}{r}} - \frac{\rho M_{t}}{r}\pnth{1 - \frac{2M}{r}}^{-1/2}}
\end{align}
and
\begin{align}
  \notag \p_{r}V_{r} &= \p_{r}\brkt{\pnth{1 - \frac{2M}{r}}^{-1}\pnth{\frac{M}{r^{2}} + 4\pi r P}} \\
  \notag V_{rr} &= 2\pnth{1 - \frac{2M}{r}}^{-2}\pnth{\frac{M_{r}}{r} - \frac{M}{r^{2}}}\pnth{\frac{M}{r^{2}} + 4\pi r P} 
  + \pnth{1 - \frac{2M}{r}}^{-1}\pnth{\frac{M_{r}}{r^{2}} - \frac{2M}{r^{3}} + 4\pi P + 4\pi r P_{r}} \\
  \label{PREq-1_r} 
  &= \pnth{2\frac{V_{r}}{r} + \frac{1}{r^{2}}}\pnth{1 - \frac{2M}{r}}^{-1}\pnth{M_{r} - \frac{M}{r}} - \frac{V_{r}}{r} + 4\pi \pnth{1 - \frac{2M}{r}}^{-1}(2P + rP_{r})
\end{align}
Making these substitutions into (\ref{QEq-1}) yields
\begin{align}
  \notag 8\pi Q &= 
  4\pi(2P + rP_{r}) + V_{r}^{2}\pnth{1 - \frac{2M}{r}} + \frac{V_{r}}{r}\pnth{M_{r} - \frac{M}{r}}  \\
  \notag &\qquad -\frac{4\pi r}{\e^{V}}\brkt{\rho_{t}\pnth{1 - \frac{2M}{r}}^{-1/2} - \frac{\rho M_{t}}{r}\pnth{1 - \frac{2M}{r}}^{-3/2}} 
  - \frac{3M_{t}^{2}}{r^{2} \e^{2V}}\pnth{1-\frac{2M}{r}}^{-2} \displaybreak[0] \\
  \notag &= 4\pi(2P + rP_{r}) + V_{r}\pnth{\frac{M}{r^{2}} + 4\pi r P} + \frac{V_{r}}{r}\pnth{M_{r} - \frac{M}{r}}  \\
  \notag &\qquad -\frac{4\pi r}{\e^{V}}\brkt{\rho_{t}\pnth{1 - \frac{2M}{r}}^{-1/2} - \frac{\rho M_{t}}{r}\pnth{1 - \frac{2M}{r}}^{-3/2}} 
  - \frac{12\pi \rho M_{t}}{\e^{V}}\pnth{1-\frac{2M}{r}}^{-3/2} \displaybreak[0] \\
  \notag &= 4\pi(2P + rP_{r}) + 4\pi r V_{r}(P + \mu) - \frac{8\pi \rho M_{t}}{\e^{V}}\pnth{1-\frac{2M}{r}}^{-3/2} \\
  \notag &\qquad - 4\pi r \pnth{V_{r}(\mu + P) + P_{r} + \frac{2(P - Q)}{r}} + \frac{8\pi\rho M_{t}}{\e^{V}}\pnth{1 - \frac{2M}{r}}^{-3/2} \\
  \label{QEq-2} &= 8\pi Q
\end{align}
where the last two lines follow from using (\ref{divT3b}).  Thus so long as equation (\ref{divT3}) holds, equations (\ref{EDEq-1-rp}), (\ref{PREq-1-rp}), and (\ref{rhoEq-1}) imply equation (\ref{QEq-1}).  Since equations (\ref{divT3}), (\ref{EDEq-1-rp}), and (\ref{PREq-1-rp}) imply equation (\ref{rhoEq-1}) and equation (\ref{divT3}) follows from $\divx{g}T = 0$, we have that solving $\divx{g}T = 0$ 
and solving equations (\ref{EDEq-1-rp}) and (\ref{PREq-1-rp}) at each time $t$ also solves the entire Einstein equation, which was the desired result.
\end{proof}

The proof of Lemma \ref{SET-Lem-1} follows.

\begin{proof}[Proof of Lemma \ref{SET-Lem-1}]
  Recall equation (\ref{SET-Def}).  Then to compute the components of $T$, we first have that
    \begin{align}
      df &= f_{t}\, dt + f_{r}\, dr = p\e^{V}\sqrt{1 - \frac{2M}{r}}\, dt + f_{r}\, dr \\
      d\cf &= \cf_{t}\, dt + \cf_{r}\, dr = \cp\e^{V}\sqrt{1 - \frac{2M}{r}}\, dt + \cf_{r}\, dr.
    \end{align}
    Then, since $g$ is diagonal, this implies that
    \begin{align}\label{df2}
      \notag \abs{df}^{2} &= g(df,d\cf) = g\pnth{p\e^{V}\sqrt{1 - \frac{2M}{r}}\, dt + f_{r}\, dr,\, \cp\e^{V}\sqrt{1 - \frac{2M}{r}}\, dt + \cf_{r}\, dr} \\
      \notag &= \e^{2V}\pnth{1 - \frac{2M}{r}}p\cp\, g(dt,dt) + f_{r} \cf_{r}\, g(dr,dr) \\
      &= \pnth{1 - \frac{2M}{r}}\pnth{\abs{f_{r}}^{2} - \abs{p}^{2}}.
    \end{align}
    With these facts, we now compute the quantities in question and do so in the same order as they were presented.  Thus we have the following.
    \begin{align}
      \notag T(\nu_{t},\nu_{t}) &= \mu_{0}\pnth{\frac{2}{\Up^{2}} df(\nu_{t})d\cf(\nu_{t}) - \brkt{\pnth{1 - \frac{2M}{r}}\frac{\abs{f_{r}}^{2} - \abs{p}^{2}}{\Up^2} + \abs{f}^{2}}g(\nu_{t},\nu_{t})} \\
      &= \mu_{0}\pnth{\abs{f}^{2} + \pnth{1 - \frac{2M}{r}}\frac{\abs{f_{r}}^{2} + \abs{p}^{2}}{\Up^{2}}}.
    \end{align}
    Next, because $\nu_{\eta}$ is an everywhere orthonormal basis,
    \begin{align}
      \notag T(\nu_{t},\nu_{r}) &= \mu_{0}\Bigg(\frac{1}{\Up^{2}} \brkt{df(\nu_{t})d\cf(\nu_{r}) + d\cf(\nu_{t})df(\nu_{r})} 
      - \brkt{\pnth{1 - \frac{2M}{r}}\frac{\abs{f_{r}}^{2} - \abs{p}^{2}}{\Up^2} + \abs{f}^{2}}g(\nu_{t},\nu_{r})\Bigg) \\
      &= \frac{2\mu_{0}}{\Up^{2}}\pnth{1 - \frac{2M}{r}}\Re(f_{r}\cp).
    \end{align}
    \begin{align}
      \notag T(\nu_{r},\nu_{r}) &= \mu_{0}\pnth{\frac{2}{\Up^{2}} df(\nu_{r})d\cf(\nu_{r}) - \brkt{\pnth{1 - \frac{2M}{r}}\frac{\abs{f_{r}}^{2} - \abs{p}^{2}}{\Up^2} + \abs{f}^{2}}g(\nu_{r},\nu_{r})} \\
      &= \mu_{0}\pnth{-\abs{f}^{2} + \pnth{1 - \frac{2M}{r}}\frac{\abs{f_{r}}^{2} + \abs{p}^{2}}{\Up^{2}}}.
    \end{align}
    \begin{align}
      \notag T(\nu_{\theta},\nu_{\theta}) &= \mu_{0}\pnth{\frac{2}{\Up^{2}} df(\nu_{\theta})d\cf(\nu_{\theta}) - \brkt{\pnth{1 - \frac{2M}{r}}\frac{\abs{f_{r}}^{2} - \abs{p}^{2}}{\Up^2} + \abs{f}^{2}}g(\nu_{\theta},\nu_{\theta})} \\
      &= -\mu_{0}\pnth{\abs{f}^{2} + \pnth{1 - \frac{2M}{r}}\frac{\abs{f_{r}}^{2} - \abs{p}^{2}}{\Up^{2}}}.
    \end{align}
    Finally,
    \begin{align}
      \notag T(\nu_{\varphi},\nu_{\varphi}) &= \mu_{0}\pnth{\frac{2}{\Up^{2}} df(\nu_{\varphi})d\cf(\nu_{\varphi}) - \brkt{\pnth{1 - \frac{2M}{r}}\frac{\abs{f_{r}}^{2} - \abs{p}^{2}}{\Up^2} + \abs{f}^{2}}g(\nu_{\varphi},\nu_{\varphi})} \\
      &= -\mu_{0}\pnth{\abs{f}^{2} + \pnth{1 - \frac{2M}{r}}\frac{\abs{f_{r}}^{2} - \abs{p}^{2}}{\Up^{2}}}
    \end{align}
    so that $T(\nu_{\varphi},\nu_{\varphi}) = T(\nu_{\theta},\nu_{\theta})$.
\end{proof}

Finally, we prove Lemma \ref{divTEKG-Lem}.

\begin{proof}[Proof of Lemma \ref{divTEKG-Lem}]
  By equation (\ref{SETFun-def}) and Lemma \ref{SET-Lem-1}, we have that
  \begin{subequations}\label{SETEKGFun}
    \begin{align}
      \label{SETEKGFun-1} \mu &= \mu_{0}\pnth{\abs{f}^{2} + \pnth{1 - \frac{2M}{r}}\frac{\abs{f_{r}}^{2} + \abs{p}^{2}}{\Up^{2}}} \\
      \label{SETEKGFun-2} \rho &= \frac{2\mu_{0}}{\Up^2}\pnth{1 - \frac{2M}{r}}\Re(f_{r}\cp) \\
      \label{SETEKGFun-3} P &= \mu_{0}\pnth{-\abs{f}^{2} + \pnth{1 - \frac{2M}{r}}\frac{\abs{f_{r}}^{2} + \abs{p}^{2}}{\Up^{2}}} \\
      \label{SETEKGFun-4} Q &= -\mu_{0}\pnth{\abs{f}^{2} + \pnth{1 - \frac{2M}{r}}\frac{\abs{f_{r}}^{2} - \abs{p}^{2}}{\Up^{2}}}
    \end{align}
  \end{subequations}
  We know from a previous proof that $\divx{g}T=0$ is equivalent to equation (\ref{divT3}).  Moreover, by (\ref{p-Def}) and Lemma \ref{KG-Lem-1}, we have that (\ref{EKG-2}) is equivalent to (\ref{NCpde4}) and (\ref{NCpde3}).  Thus it suffices to show that given (\ref{SETEKGFun}), equations (\ref{NCpde4}) and (\ref{NCpde3}) imply equation (\ref{divT3}).

  We will start with (\ref{divT3a}).  On one hand, the left hand side is equal to differentiating (\ref{SETEKGFun-1}) with respect to $t$.  \begin{subequations}\label{divTEKG2-a}
    \begin{align}
      \notag \mu_{t} &= \p_{t}\brkt{\mu_{0}\pnth{\abs{f}^{2} + \pnth{1 - \frac{2M}{r}}\frac{\abs{f_{r}}^{2} + \abs{p}^{2}}{\Up^{2}}}} \\
      \notag &= \mu_{0}\brkt{f_{t}\cf + f \cf_{t} - \frac{2M_{t}}{r}\pnth{\frac{\abs{f_{r}}^{2} + \abs{p}^{2}}{\Up^{2}}} + \pnth{1 - \frac{2M}{r}}\frac{f_{rt}\cf_{r} + f_{r}\cf_{rt} + p_{t}\cp + p\cp_{t}}{\Up^{2}}} \displaybreak[0]\\
      \notag \mu_{t} &= \mu_{0}\Bigg[\cp \pnth{f \e^{V}\sqrt{1 - \frac{2M}{r}} + \frac{p_{t}}{\Up^{2}}\pnth{1 - \frac{2M}{r}}} \\
      \notag &\qquad + p \pnth{\cf \e^{V}\sqrt{1 - \frac{2M}{r}} + \frac{\cp_{t}}{\Up^{2}}\pnth{1 - \frac{2M}{r}}} - \frac{2M_{t}}{r}\pnth{\frac{\abs{f_{r}}^{2} + \abs{p}^{2}}{\Up^{2}}}  \\
      \label{divTEKG2-aLHS} &\qquad + \pnth{1 - \frac{2M}{r}}\frac{\cf_{r}\p_{r}\pnth{p \e^{V}\sqrt{1 - \frac{2M}{r}}} + f_{r}\p_{r}\pnth{\cp \e^{V}\sqrt{1 - \frac{2M}{r}}}}{\Up^{2}} \Bigg]
    \end{align}
    On the other hand, we substitute (\ref{SETEKGFun}) into (\ref{divT3a}).
    \begin{align}
      \notag \mu_{t} &= 
      2\e^{V}\sqrt{1 - \frac{2M}{r}}\pnth{\frac{2\mu_{0}}{\Up^2}\pnth{1 - \frac{2M}{r}}\Re(f_{r}\cp)}\pnth{V_{r} + \frac{1}{r}} \\
      \notag &\qquad + \p_{r}\brkt{\frac{2\mu_{0}}{\Up^2}\pnth{1 - \frac{2M}{r}}\Re(f_{r}\cp)}\e^{V}\sqrt{1 - \frac{2M}{r}} - \frac{2M_{t}\mu_{0}}{r}\pnth{\frac{\abs{f_{r}}^{2} + \abs{p}^{2}}{\Up^{2}}} \displaybreak[0] \\
      \notag &= \cp \frac{\mu_{0}}{\Up^{2}}\pnth{1 - \frac{2M}{r}} \brkt{f_{r}V_{r}\e^{V}\sqrt{1 - \frac{2M}{r}} + \frac{2f_{r}\e^{V}}{r}\sqrt{1 - \frac{2M}{r}}} \\
      \notag & \qquad + p \frac{\mu_{0}}{\Up^{2}}\pnth{1 - \frac{2M}{r}} \brkt{\cf_{r}V_{r}\e^{V}\sqrt{1 - \frac{2M}{r}} + \frac{2\cf_{r}\e^{V}}{r}\sqrt{1 - \frac{2M}{r}}} \\
      \notag & \qquad + \frac{\mu_{0}V_{r}\e^{V}}{\Up^{2}}\pnth{1 - \frac{2M}{r}}^{3/2}\pnth{f_{r}\cp + p \cf_{r}} - \frac{2M_{t}\mu_{0}}{r}\pnth{\frac{\abs{f_{r}}^{2} + \abs{p}^{2}}{\Up^{2}}} \\
      \notag & \qquad + \frac{\mu_{0}\e^{V}}{\Up^2}\sqrt{1 - \frac{2M}{r}}\Bigg[(f_{r}\cp + p \cf_{r})\p_{r}\pnth{1 - \frac{2M}{r}} 
      + \pnth{1 - \frac{2M}{r}}(f_{rr}\cp + p \cf_{rr} + f_{r}\cp_{r} + p_{r} \cf_{r})\Bigg] \displaybreak[0] \\
      \notag \mu_{t} &= \cp \frac{\mu_{0}}{\Up^{2}}\pnth{1 - \frac{2M}{r}} \brkt{\p_{r}\pnth{f_{r}\e^{V}\sqrt{1 - \frac{2M}{r}}} + \frac{2f_{r}\e^{V}}{r}\sqrt{1 - \frac{2M}{r}}} \\
      \notag & \qquad + p \frac{\mu_{0}}{\Up^{2}}\pnth{1 - \frac{2M}{r}} \brkt{\p_{r}\pnth{\cf_{r}\e^{V}\sqrt{1 - \frac{2M}{r}}} + \frac{2\cf_{r}\e^{V}}{r}\sqrt{1 - \frac{2M}{r}}} 
      - \frac{2M_{t}\mu_{0}}{r}\pnth{\frac{\abs{f_{r}}^{2} + \abs{p}^{2}}{\Up^{2}}} \\
      \label{divTEKG2-aRHS} & \qquad + \frac{\mu_{0}}{\Up^2}\pnth{1 - \frac{2M}{r}}\brkt{\cf_{r}\p_{r}\pnth{p\e^{V}\sqrt{1 - \frac{2M}{r}}} + f_{r}\p_{r}\pnth{\cp\e^{V}\sqrt{1 - \frac{2M}{r}}}}
    \end{align}
    Setting (\ref{divTEKG2-aLHS}) equal to (\ref{divTEKG2-aRHS}), we obtain
    \begin{align}
      \notag \cp &\pnth{\Up^{2} f \e^{V}\pnth{1 - \frac{2M}{r}}^{-1/2} + p_{t}} + p \pnth{\Up^{2}\cf \e^{V}\pnth{1 - \frac{2M}{r}}^{-1/2} + \cp_{t}} \\
      \notag &= \cp \brkt{\p_{r}\pnth{f_{r}\e^{V}\sqrt{1 - \frac{2M}{r}}} + \frac{2f_{r}\e^{V}}{r}\sqrt{1 - \frac{2M}{r}}} 
      + p \brkt{\p_{r}\pnth{\cf_{r}\e^{V}\sqrt{1 - \frac{2M}{r}}} + \frac{2\cf_{r}\e^{V}}{r}\sqrt{1 - \frac{2M}{r}}} \displaybreak[0]\\
      \notag \displaybreak[0] \\
      \notag 0 &= \cp\brkt{- p_{t} + \e^{V}\pnth{-\Up^{2} f \pnth{1 - \frac{2M}{r}}^{-1/2} + \frac{2f_{r}}{r}\sqrt{1 - \frac{2M}{r}}} + \p_{r}\pnth{f_{r}\e^{V}\sqrt{1 - \frac{2M}{r}}}} \\
      \label{divTEKG2-afin} &\ \ \, + p\brkt{- \cp_{t} + \e^{V}\pnth{-\Up^{2} \cf \pnth{1 - \frac{2M}{r}}^{-1/2} + \frac{2\cf_{r}}{r}\sqrt{1 - \frac{2M}{r}}} + \p_{r}\pnth{\cf_{r}\e^{V}\sqrt{1 - \frac{2M}{r}}}}
    \end{align}
  \end{subequations}
  Note that, since $V$ and $M$ are real valued, the second term above is the complex conjugate of the first term.  Then by (\ref{NCpde4}) equation (\ref{divTEKG2-afin}) holds.

  Next, we consider (\ref{divT3b}).  We first differentiate (\ref{SETEKGFun-2}) with respect to $t$. 
  Using equation (\ref{NCpde3}), this yields
  \begin{subequations}\label{divTEKG2-b}
    \begin{align}
      \notag \rho_{t} &= \p_{t}\brkt{\frac{2\mu_{0}}{\Up^2}\pnth{1 - \frac{2M}{r}}\Re(f_{r}\cp)} \\
      \notag &= \frac{\mu_{0}}{\Up^{2}}\brkt{-\frac{4M_{t}}{r}\Re(f_{r}\cp) + \pnth{1 - \frac{2M}{r}}\pnth{f_{rt}\cp + \cf_{rt}p + f_{r}\cp_{t} + \cf_{r}p_{t}}} \displaybreak[0]\\
      \label{divTEKG2-bLHS} &= \frac{\mu_{0}}{\Up^{2}}\Bigg\{-\frac{4M_{t}}{r}\Re(f_{r}\cp) + \pnth{1 - \frac{2M}{r}}\Bigg[\p_{r}\pnth{\abs{p}^{2}\e^{V}\sqrt{1 - \frac{2M}{r}}} 
      + \abs{p}^{2}\p_{r}\pnth{\e^{V}\sqrt{1 - \frac{2M}{r}}} + f_{r}\cp_{t} + \cf_{r}p_{t}\Bigg]\Bigg\}
    \end{align}
    On the other hand, we substitute (\ref{SETEKGFun}) into (\ref{divT3b}), to obtain  
    \begin{align}
      \notag \rho_{t} &= 
      \e^{V}\sqrt{1 - \frac{2M}{r}}\Bigg[\frac{2\mu_{0}V_{r}}{\Up^{2}}\pnth{1 - \frac{2M}{r}}(\abs{f_{r}}^{2} + \abs{p}^{2}) \\
      \notag &\ \ + \mu_{0}\p_{r}\pnth{-\abs{f}^{2} + \pnth{1 - \frac{2M}{r}}\frac{\abs{f_{r}}^{2} + \abs{p}^{2}}{\Up^{2}}} + \frac{4\mu_{0}\abs{f_{r}}^{2}}{r\Up^{2}}\pnth{1 - \frac{2M}{r}}\Bigg] - \frac{4\mu_{0}M_{t}}{r\Up^{2}}\Re(f_{r}\cp) \displaybreak[0] \\
      \notag &= \frac{\mu_{0}}{\Up^{2}}\pnth{1 - \frac{2M}{r}}\Bigg\{\cf_{r}\Bigg[\p_{r}\pnth{f_{r}\e^{V}\sqrt{1 - \frac{2M}{r}}} 
      +\e^{V} \pnth{- \Up^{2}f\pnth{1 - \frac{2M}{r}}^{-1/2} + \frac{2 f_{r}}{r}\sqrt{1 - \frac{2M}{r}}}\Bigg] \\
      \notag &\qquad + f_{r}\brkt{\p_{r}\pnth{\cf_{r}\e^{V}\sqrt{1 - \frac{2M}{r}}} +\e^{V}\pnth{-\Up^{2}\cf\pnth{1 - \frac{2M}{r}}^{-1/2} + \frac{2 \cf_{r}}{r}\sqrt{1 - \frac{2M}{r}}}} \\
      \label{divTEKG2-bRHS} &\qquad + \abs{p}^{2}\p_{r}\pnth{\e^{V}\sqrt{1 - \frac{2M}{r}}} + \p_{r}\pnth{\abs{p}^{2}\e^{V}\sqrt{1 - \frac{2M}{r}}} \Bigg\} - \frac{4\mu_{0}M_{t}}{r\Up^{2}}\Re(f_{r}\cp)
    \end{align}
    Equating (\ref{divTEKG2-bLHS}) and (\ref{divTEKG2-bRHS}) yields
    \begin{align}
      \notag 0 &= \cf_{r}\Bigg[- p_{t} + \e^{V}\pnth{-\Up^{2} f \pnth{1 - \frac{2M}{r}}^{-1/2} + \frac{2f_{r}}{r}\sqrt{1 - \frac{2M}{r}}} 
      + \p_{r}\pnth{f_{r}\e^{V}\sqrt{1 - \frac{2M}{r}}}\Bigg] \\
      \label{divTEKG2-bfin} &\qquad + f_{r}\Bigg[- \cp_{t} + \e^{V}\pnth{-\Up^{2} \cf \pnth{1 - \frac{2M}{r}}^{-1/2} + \frac{2\cf_{r}}{r}\sqrt{1 - \frac{2M}{r}}} 
      + \p_{r}\pnth{\cf_{r}\e^{V}\sqrt{1 - \frac{2M}{r}}}\Bigg]
    \end{align}
  \end{subequations}
  As before, since $M$ and $V$ are real valued, the second term above is the complex conjugate of the first term.  Then by equation (\ref{NCpde4}), equation (\ref{divTEKG2-bfin}) holds.  Then equations (\ref{NCpde4}) and (\ref{NCpde3}) imply equation (\ref{divT3}) holds, which completes the proof.
\end{proof}

%
%

\begin{acknowledgements}
The author would like to thank Hubert Bray and Andrew Goetz for several extremely useful discussions on this topic.  He also gratefully acknowledges the support of the National Science Foundation grant \#~DMS-1007063.
\end{acknowledgements}

\bibliographystyle{spmpsci}      
\bibliography{./References}   

%
%

\end{document}